\definecolor{myurlcolor}{rgb}{0,0,0.7}
\definecolor{myurlcolor1}{rgb}{0,0.7,0.1}
\definecolor{myrefcolor}{rgb}{0,0,0.7}
\definecolor{mybox}{rgb}{0.5,0.5,0.3}
\providecommand{\U}[1]{\protect\rule{.1in}{.1in}}
\newtheorem{theorem}{Theorem}
\newtheorem{corollary}{Corollary}
\newtheorem{definition}{Definition}
\newtheorem{example}{Example}
\newtheorem{proposition}{Proposition}
\newtheorem{remark}{Remark}
\newenvironment{proof}[1][Proof]{\noindent\textbf{#1.} }{\ \rule{0.5em}{0.5em}}
\def\supp{\operatorname{supp}}
\def\ex{\operatorname{ext}}
\newcommand{\mc}[1]{\mathcal{#1}}
\newcommand{\msc}[1]{\mathscr{#1}}
\begin{document}

\title{Partial order on passive states and Hoffman majorization in quantum thermodynamics}


\author{Uttam Singh}
\email{utsingh@ulb.ac.be}
\affiliation{Centre  for Quantum Information and Communication, {\'E}cole polytechnique de Bruxelles,\\
CP 165, Universit{\'e} libre de Bruxelles, 1050 Bruxelles, Belgium}
\author{Siddhartha Das}
\email{sidddas@ulb.ac.be}
\affiliation{Centre  for Quantum Information and Communication, {\'E}cole polytechnique de Bruxelles,\\
CP 165, Universit{\'e} libre de Bruxelles, 1050 Bruxelles, Belgium}
\author{Nicolas J. Cerf}
\email{ncerf@ulb.ac.be}
\affiliation{Centre  for Quantum Information and Communication, {\'E}cole polytechnique de Bruxelles,\\
CP 165, Universit{\'e} libre de Bruxelles, 1050 Bruxelles, Belgium}

\begin{abstract}
Passive states, i.e., those states from which no work can be extracted via unitary operations, play an important role in the foundations and applications of quantum thermodynamics. They generalize the familiar Gibbs thermal states, which are the sole passive states being stable under tensor product. Here, we introduce a partial order on the set of passive states that captures the idea of a passive state being \textit{virtually cooler} than another one. This partial order, which we build by defining the notion of  \textit{relative passivity}, offers a fine-grained comparison between passive states based on virtual temperatures (just like thermal states are compared based on their temperatures). We then characterize the quantum operations that are closed on the set of virtually cooler states with respect to some fixed input and output passive states. Viewing the activity, i.e., nonpassivity, of a state as a resource, our main result is then a \textit{necessary and sufficient} condition on the transformation of a class of pure active states under these relative passivity-preserving operations. This condition gives a quantum thermodynamical meaning to the majorization relation on the set of nonincreasing vectors due to Hoffman. The maximum extractable work under relative passivity-preserving operations is then shown to be equal to the ergotropy of these pure active states. Finally, we are able to fully characterize passivity-preserving operations in the simpler case of qubit systems, and hence to derive a state interconversion condition under passivity-preserving qubit operations. The prospect of this work is a general resource-theoretical framework for the extractable work via quantum operations going beyond thermal operations. 
\end{abstract}

\maketitle

\section{Introduction}
A major focus in the area of thermodynamics is work extraction. The laws of thermodynamics, which are expressed as a set of phenomenological rules, govern the transformations of states leading to work extraction (see e.g. \cite{Callen1985}). Traditionally, these laws are given for macroscopic systems in equilibrium at the so-called thermodynamic limit; therefore, they are inadequate to deal with the thermodynamics of systems at nano scale. However, with the recent technological developments and quest for miniaturization \cite{Scovil1959, Geusic1967, Scully2002, Howard1997, Hanggi2009, Faucheux1995, Pusz1978, Lenard78,Alicki1979, Rousselet1994}, the need for developing a microscopic version of thermodynamics, where quantum effects play a crucial role, becomes pertinent. Recently, a great deal of effort has been put to understand and possibly deduce thermodynamical laws that take quantum effects into account, which has led to the research area known as quantum thermodynamics \cite{Michal2013, Brandao2013, Brandao2015b, Cwiklinnski2015, Goold2016,Vinjanampathy2016, Millen2016,  Faist2015, Faist2015A, Nicole2016, Uzdin2018, Koukoulekidis2019, Faist2019, Uzdin2019, Marvian2020, Francica2020}.

Since thermodynamics is, fundamentally speaking, a theory of state transformations, the resource-theoretic framework offers a natural ground for the development of quantum thermodynamics.  For example, in Refs. \cite{Michal2013, Brandao2013}, the Gibbs (thermal) state at a fixed temperature is considered as free and thermal operations are considered as free operations, giving rise to a resource theory of thermal operations. Under these restrictions, the laws of state transformations have been described, and the necessary and sufficient conditions for the transformation of energy-diagonal states under thermal operations have been expressed as a thermo-majorization condition~\cite{Brandao2015b}. These transformations are based on a bath system in a fixed thermal state, which acts as catalyst.

The importance of work extraction cannot be overemphasized in quantum thermodynamics and therefore the characterization of quantum states based on their thermodynamical merits is very important in itself. It is clear from the definition of the passive states that no work can be extracted from them via cyclic unitary operations \footnote{By cyclic unitary operations, we mean unitary operations that are such that the Hamiltonian of the system returns to its initial expression at the end of a cycle.}. However, there are various thermodynamical contexts where one is interested in work extraction and state transformations of a quantum system interacting with its environment, which may be inaccessible to an observer in general and deviate from a thermal bath. A pertinent question is to determine general conditions under which state transformations can take place for the extraction of work from an individual system. In this work, we take a resource-theoretic approach to deal with the thermodynamics of a single quantum system interacting with a nonthermal environment. In this context, it seems natural to consider the operations which cannot create activity (i.e., nonpassivity) starting from a passive state as the useless or free operations that govern the thermodynamics of state transformations.

Given a single quantum system, two different thermal states of the system can be compared and distinguished based on their temperatures. However, such a notion of comparison is absent for passive states, which hinders the treatment of passive-environment maps going beyond thermal maps. As a starting point, we thus introduce here a partial order on the set of passive states that allows for comparison based on their thermodynamic merits. This partial order is a stochastic order that is connected to the concept of ``virtual temperature'' of a quantum system \cite{Brunner2012}. Given a quantum system in state $\rho=\sum_i p_i\op{i}$, the inverse virtual temperatures $\beta_{i,j}$ are defined for each pair of probabilities $(p_i,p_j)$ by the relation $(E_{j}-E_{i})\beta_{i,j}=\ln(p_i/p_j)$, where $E_i$ and $E_j$ are the energy eigenvalues corresponding to the energy eigenstates $\ket{i}$ and $\ket{j}$ of the Hamiltonian of the system. If all $\beta_{i,j}$ are the same, then the state is necessarily a thermal state. The partial order that we define on the set of passive states is naturally expressed in terms of virtual temperatures and therefore, we suggestively refer to it as ``being virtually cooler than". A passive state $\rho$ is said to be virtually cooler than another passive state $\sigma$, denoted as $\rho\succ_{vc}\sigma$, if all virtual temperatures of $\rho$ are lower than those of  $\sigma$. The partial order $\rho \succ_{vc} \sigma$ imposes very stringent conditions on $\rho$ and $\sigma$: it means that $\rho$ is passive \textit{relative to} $\sigma$. In fact,  $\rho \succ_{vc} \sigma$ also implies $\rho\succ \sigma$, where $\succ$ is a preorder called majorization. Moreover, the energy of $\rho$ is always lesser than that of $\sigma$. In another thermodynamic context, we show that if the passive states $\rho$ and $\sigma$ are used for refrigeration of an external qubit system, then a virtually cooler state performs this task better, that is, the use of $\rho$ renders the temperature of the external qubit system lower compared to the case when one uses $\sigma$.

After characterizing the convex set of virtually cooler states than a given passive state, we then consider the class of quantum operations that preserve this partial order. We call such operations as relative passivity-preserving operations and denote them as RPPOs. These operations are necessarily incoherent operation, i.e., operations that map the set of  diagonal states into itself in a fixed reference basis (here, the energy eigenbasis). We then consider the question of state transformations under strictly incoherent RPPOs. In particular, for a specific class of pure active states, we show that the relation we name Hoffman majorization provides a necessary and sufficient condition for state interconversion. Hoffman majorization is a partial order defined on the set of nonincreasing vectors \cite{Hoffman1969}, and we thus uncover here its quantum thermodynamical meaning.

As a consequence of this condition, we also show that the maximal work extraction from a given class of pure active states, which is equal to ergotropy, is achievable with strictly incoherent RPPOs. To elucidate the general constructive proof of our result, we provide the construction of a strictly incoherent RPPO underlying the desired state transformation for a simple qutrit system. Interestingly, for transformations of pure qubit states under strictly incoherent RPPOs, the necessary and sufficient condition based on Hoffman majorization remains true also simply for passivity-preserving operations. The latter operations, which we denote as PPOs, are those that map the set of passive states into itself.

Further, we discuss passivity-preserving operations in general and characterize them completely in the case of qubit systems by explicitly providing their Kraus operators. Considering again work extraction, we show that the ergotropy is simply achievable via passivity-preserving operations for a given class of pure active qubit states. Moreover, as an example of passivity-preserving operations, we introduce what we call activity-breaking operations (ABOs) in analogy with entanglement-breaking operations. ABOs are defined as the operations that map any input state into the set of passive states. We show that any activity-breaking operation can be written as some measure and prepare channel.

Finally, coming back to a resource-theoretical framework for extractable work, we introduce several monotones on the set of active pure states, which are nonincreasing under RPPOs. These should be useful to better understand the nature of the resource consisting in not being ``virtually cooler than'' in a thermodynamical scenario.

The rest of the paper is organized as follows. In Sec. \ref{sec:prelims}, we introduce the classical preliminaries, in particular the concept of Hoffman majorization which induces a partial order on the set of nonincreasing probability vectors. In Sec. \ref{sect:passive-states-PPO}, we turn to quantum thermodynamics and recall the set of passive states (extending the nonincreasing vectors) together with the notion of passivity-preserving quantum operations and incoherent (and strictly incoherent) quantum operations, which are necessary for the rest to follow. We also characterize the passivity preserving -- and, conversely, the activity breaking -- quantum operations in terms of their Kraus decomposition. In Sec. \ref{sec:ver-coo}, we define relative passivity, namely the condition that a state is passive relative to another passive state. This leads us to define the notion of a ``virtually cooler" quantum state and explore its thermodynamical consequences. We then introduce the notion of relative passivity-preserving operations (RPPOs) and illustrate it with an example. In Sec. \ref{sec:stat-tran}, we come back to our central question, namely the thermodynamics of work extraction. We provide the necessary and sufficient condition for the transformation of a specific class of active pure states under RPPOs, which makes the connection with Hoffman majorization and gives it a thermodynamical meaning. We conclude in Sec. \ref{sec:conc} with further discussion on the ramification of the concepts introduced here. In Appendix \ref{append:Hoff-Maj}, we provide more details on the main theorems at the heart of Hoffman majorization and prove another equivalent characterization of Hoffman majorization in terms of sequence of passive $t$-transforms. 
In Appendix~\ref{append:qubits}, we focus on passivity-preserving operations for qubit systems, exploiting the fact that these are both incoherent and strictly incoherent at the same time. This enables us to explicitly characterize the form of qubit passivity-preserving operations in terms of Kraus operators. In Appendix \ref{append:exp}, we construct an explicit RPPO suitable for pure state transformations in the case of a qutrit system, which is instructive to understand RPPOs in general. Then, in Appendix \ref{append:thermo}, we explore the thermodynamical implication of the relative passivity condition, namely the refrigeration by using a passive state that is virtually cooler than another passive state. Finally, in Appendix \ref{append:extractablework}, we briefly touch upon the problem of defining the notion of extractable work under quantum channels beyond the thermal case and provide an example of work extraction under a RPPO for a qubit system.







\section{Classical preliminaries}
\label{sec:prelims}
We start by introducing the notion of majorization over the set of nonincreasing probability vectors, which we name Hoffman majorization. Our main result, which we present in Section~\ref{sec:stat-tran}, relies on an extension of Hoffman majorization to quantum passive states and finds a natural application to quantum thermodynamics (see Theorems \ref{th:qud-iff} and \ref{th:qubit-iff}).

\subsection{Nonincreasing vectors and Hoffman matrices}
\label{sec:prelims-A}
A probability vector $\vec{p}=(p_0,\cdots,p_{d-1})^T$, where $p_i\in\mathbb{R}_+$ and $\sum_{i=0}^{d-1}p_i=1$, is said to be a nonincreasing vector if it is such that $p_i \leq p_j$,  $\forall i >  j$. The set of nonincreasing probability vectors of dimension $d$, denoted as $\mathcal{S}(d)$, is a convex set whose  extreme points are given by $d$ vectors 
\begin{align}
\vec{e}_k = \frac{1}{k+1}(\overbrace{1,\cdots,1}^{k+1},\overbrace{0,\cdots,0}^{d-k-1})^T,
\end{align}
with $k=0,\cdots,d-1$. A simple proof of this statement can be found as Lemma $1$ in Ref.~\cite{Kastner1998}. In the rest of the paper, we denote nonincreasing probability vectors as passive vectors and the set $\mc{S}(d)$ as the set of passive vectors. The reason for this nomenclature is the close connection between the nonincreasing vectors and the passive states (see Section \ref{sect:passive-states-PPO}).

\bigskip
\noindent
{\it Hoffman matrices.--} 
We say that a $d\times d$ matrix $R$ is a Hoffman matrix if, for all $0\leq i,j\leq d-1$, it satisfies the following conditions \cite{Hoffman1969}:
\begin{align*}
&(a)~ R_{0,d-1}\geq 0 \, ;~(b)~\sum_{j=0}^{d-1}R_{i,j}=1\, ;~(c)~R_{i,j}=R_{j,i} \, ;\\
&(d)~R_{i,j}+R_{i-1,j+1}\geq R_{i-1,j}+ R_{i,j+1} ,~\forall~ i \leq j.
\end{align*}
Conditions (a) to (d) imply, in particular, that Hoffman matrices are doubly-stochastic matrices that are symmetric and map the set $\mathcal{S}(d)$ of passive vectors into itself. Let us denote the set of $d\times d$ Hoffman matrices by $\mathcal{R}(d)$.
In the simple case of $d=2$, the set of Hoffman matrices $\mathcal{R}(2)$ reduces to the so-called $t$-transforms with a constraint, that is,
\begin{align}
R=\begin{pmatrix}
t& 1-t\\
1-t & t
\end{pmatrix},  \qquad \mathrm{with~} t\geq 1/2.
\end{align}

\medskip
\noindent
{\it Explicit construction.--} Before exposing the role of Hoffman matrices in relation with majorization, let us provide an explicit construction.
Let $\mathcal{P}(d)$ denote the set of all $2^{d-1}$ partitions of the set $\{0,\cdots,d-1\}$ such that each part of a partition consists of consecutive integers. For example,
\begin{align*}
&\mathcal{P}(1)=\{(0)\};\\
&\mathcal{P}(2)=\{(0,1),(01)\};\\
&\mathcal{P}(3)=\{(0,1,2), (01,2),(0,12),(012)\}.
\end{align*}
Let us label any such partition in $\mathcal{P}(d)$ as $\tau=(\tau_1,\cdots, \tau_k)$, where $\tau_t$ is a specific part and $k$ is the number of parts within partition $\tau$. For example, in the case $d=3$, the partition $\tau=(0,12)$ consists of $k=2$ parts, namely $\tau=(\tau_1,\tau_2)$ with $\tau_1=0$ and $\tau_2=12$. For each partition $\tau \in \mathcal{P}(d)$, we define the symmetric doubly-stochastic matrix $M^{\tau}$ by the rule
\begin{align}
M^{\tau}_{i,j}=\begin{cases}
\frac{1}{|\tau_t|} ~\mathrm{if}~i,j\in\tau_t, \quad t\in [1,k]\\
0~~\mathrm{otherwise}.
\end{cases}
\end{align}
where $\tau_t$ can be any of the $k$ parts within partition $\tau$ and $|\tau_t|$ is the size of part $\tau_t$.
We denote the set of the above $2^{d-1}$ matrices by $\mathcal{M}^{\mathcal{P}(d)}$ since each partition $\tau \in \mathcal{P}(d)$ corresponds to a matrix $M^{\tau} \in \mathcal{M}^{\mathcal{P}(d)}$. A generic element of $\mathcal{M}^{\mathcal{P}(d)}$ is thus
\begin{align*}
M^{\tau}=\oplus_{t=1}^{k}\frac{\mathfrak{I}_{|\tau_t |}}{|\tau_t |},
\end{align*}
where $\mathfrak{I}_{|\tau_t |}$ is a  $|\tau_t | \times |\tau_t |$ matrix with all entries equal to one.  For example, in the case of $d=3$, the four partitions and associated matrices are given by
\begin{align*}
&\tau=(0,1,2)
&M^{\tau}=\begin{pmatrix}
1&0 & 0\\
0 & 1 &0\\
0 & 0&1
\end{pmatrix};\\
&\tau=(01,2)
&M^{\tau}=\frac{1}{2}\begin{pmatrix}
1&1& 0\\
1& 1& 0\\
0 & 0 &2
\end{pmatrix};\\
&\tau=(0,12)
&M^{\tau}=\frac{1}{2}\begin{pmatrix}
2& 0 & 0\\
0 & 1 &1\\
0 & 1&1
\end{pmatrix};\\
&\tau=(012)
&M^{\tau}=\frac{1}{3}\begin{pmatrix}
1&1 & 1\\
1& 1& 1\\
1 & 1 &1
\end{pmatrix}.
\end{align*}
The following theorem, due to Hoffman \cite{Hoffman1969}, gives a meaning to this construction.
\begin{theorem}[\cite{Hoffman1969}]
\label{th:hoff1}
The set $\mc{R}(d)$ of $d\times d$ Hoffman matrices coincides with the convex hull of $\mathcal{M}^{\mathcal{P}(d)}$.
\end{theorem}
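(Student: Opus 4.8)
The plan is to establish the two inclusions $\mathrm{conv}(\mathcal{M}^{\mathcal{P}(d)})\subseteq\mathcal{R}(d)$ and $\mathcal{R}(d)\subseteq\mathrm{conv}(\mathcal{M}^{\mathcal{P}(d)})$ separately, the first being routine and the second carrying all the weight. Note first that $\mathcal{R}(d)$ is a compact convex polytope: it is the intersection of the Birkhoff polytope of $d\times d$ doubly-stochastic matrices with the linear subspace cut out by the symmetry condition $(c)$ and the finitely many closed half-spaces defined by the linear inequalities $(a)$ and $(d)$. Hence $\mathcal{R}(d)$ equals the convex hull of its extreme points, and the theorem is equivalent to the two statements: $(i)$ $\mathcal{M}^{\mathcal{P}(d)}\subseteq\mathcal{R}(d)$, and $(ii)$ every extreme point of $\mathcal{R}(d)$ belongs to $\mathcal{M}^{\mathcal{P}(d)}$.

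For $(i)$ I would verify conditions $(a)$--$(d)$ directly for each $M^{\tau}=\oplus_{t}\mathfrak{I}_{|\tau_{t}|}/|\tau_{t}|$. Conditions $(a)$, $(b)$, $(c)$ are immediate from this block form. For $(d)$, fix $i\le j$ (with the convention that entries whose index falls outside $\{0,\dots,d-1\}$ are zero) and carry out a short case analysis on how the four indices $i-1,i,j,j+1$ meet the parts of $\tau$, \emph{crucially using that every part is an interval}: if $i-1$ and $j+1$ lie in a common part $P$ then so do $i$ and $j$, all four entries equal $1/|P|$, and $(d)$ holds with equality; otherwise $M^{\tau}_{i-1,j+1}=0$, and $M^{\tau}_{i-1,j}$ and $M^{\tau}_{i,j+1}$ cannot both be nonzero (a part containing $i-1,j$ and a part containing $i,j+1$ would both contain $i$ and hence coincide, forcing $i-1\sim j+1$), while whichever of the two may be nonzero equals $M^{\tau}_{i,j}$; so $(d)$ again holds. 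Since $\mathcal{R}(d)$ is convex, this gives $\mathrm{conv}(\mathcal{M}^{\mathcal{P}(d)})\subseteq\mathcal{R}(d)$.

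The real content is $(ii)$, which I would prove by induction on $d$ (the base cases $d\le2$ being the identity $R=(2t-1)M^{(0,1)}+(2-2t)M^{(01)}$ for the constrained $t$-transform). Given $R\in\mathcal{R}(d)$, one first disposes of the block-diagonal case: if $R_{i,j}=0$ whenever $i,j$ lie in different parts of some nontrivial interval partition, then each diagonal block is a Hoffman matrix of smaller size (conditions $(a)$--$(d)$ are inherited by diagonal blocks), the inductive hypothesis applies blockwise, and a direct sum of matrices $M^{\sigma}$ is again of the form $M^{\tau}$ with $\tau\in\mathcal{P}(d)$; this settles the case. Otherwise one must \emph{peel off} a positive multiple of a suitable $M^{\tau}$: choosing $\tau$ according to which entries of $R$ vanish and which of its inequalities $(a)$, $(d)$ are tight, and taking the largest weight $\lambda>0$ for which the normalised remainder $(R-\lambda M^{\tau})/(1-\lambda)$ is still in $\mathcal{R}(d)$, one arranges that the remainder acquires a new zero entry or a newly tight $(d)$-inequality, so that iterating terminates and writes $R$ as a convex combination of elements of $\mathcal{M}^{\mathcal{P}(d)}$. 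Equivalently and more explicitly, one guesses a closed formula expressing the decomposition weights $\{\lambda_{\tau}\}$ as nonnegative integer combinations of the slacks of conditions $(a)$ and $(d)$, and checks it directly; for $d=3$ this reads $\lambda_{(012)}=3R_{0,2}$, $\lambda_{(01,2)}=2(R_{0,1}-R_{0,2})$, $\lambda_{(0,12)}=2(R_{1,2}-R_{0,2})$, $\lambda_{(0,1,2)}=R_{1,1}+R_{0,2}-R_{0,1}-R_{1,2}$ -- the first being $3\times$ the slack of $(a)$, the next two $2\times$ the slacks of $(d)$ at $(0,1)$ and $(1,2)$, and the last the slack of $(d)$ at $(1,1)$.

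The main obstacle is precisely this step $(ii)$: controlling how condition $(d)$ behaves under the decomposition. The two ``extreme'' weights cause no trouble -- the weight of the coarsest partition is proportional to $R_{0,d-1}\ge0$ by $(a)$, and that of the finest partition is a sum of $(d)$-slacks -- but the intermediate weights require combining the various instances of $(d)$, including its boundary instances (which force the first row of $R$ to be non-increasing), and possibly the further structural consequence, noted in the text, that any $R\in\mathcal{R}(d)$ maps $\mathcal{S}(d)$ into itself (so that each $R\vec{e}_{k}$ is non-increasing). Getting the combinatorics of the interval-partition lattice to cooperate, so that all $2^{d-1}$ weights come out nonnegative -- equivalently, so that the greedy peeling always leaves a remainder still satisfying $(d)$ -- is where the effort lies; everything else is routine verification.
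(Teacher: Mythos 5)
First, a point of context: the paper does not prove this statement. Theorem~\ref{th:hoff1} is imported from Ref.~\cite{Hoffman1969} without proof, so there is no in-paper argument to compare yours against; your proposal has to stand on its own. Its architecture is right, and the easy half is complete and correct: the interval-part case analysis verifying condition $(d)$ for each $M^{\tau}$ is sound (the key observation that a part containing $i-1$ and $j$ and a part containing $i$ and $j+1$ would have to coincide is exactly what makes it work), $\mathcal{R}(d)$ is indeed a compact polytope and hence the hull of its extreme points, the block-diagonal reduction in the converse direction is legitimate (conditions $(a)$--$(d)$ do restrict to diagonal blocks, and direct sums of the $M^{\sigma}$ recombine into matrices $M^{\tau}$ with product weights), and I have checked that your explicit weight formulas for $d=2$ and $d=3$ reproduce $R$, sum to one, and are nonnegative precisely by virtue of $(a)$ and the boundary and interior instances of $(d)$.

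The gap is the one you flag yourself, and it is not a small one: it is the entire content of Hoffman's theorem. For an irreducible $R$ in dimension $d\geq 4$ you supply neither a rule for selecting the partition $\tau$ to peel, nor a proof that the normalised remainder still satisfies $(d)$ (nonnegativity and the row sums are easy to preserve; $(d)$ is not, since subtracting $\lambda M^{\tau}$ perturbs the slack at every pair $(i,j)$ lying inside or straddling a part of $\tau$), nor a termination argument. The alternative closed-formula route is also not secured: the $d=3$ pattern of ``one slack of $(a)$ or $(d)$ per interval partition'' is an accident of small dimension. There are $2^{d-1}$ interval partitions but only $\binom{d+1}{2}$ instances of $(d)$ (with linear relations among their slacks), and already for $d=5$ the partitions outnumber the slacks, so the general weights must be genuinely nonnegative \emph{combinations} of slacks; guessing them and ``checking directly'' is a program, not a proof. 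As written, your argument establishes the theorem only for $d\leq 3$. To complete it you would need either to carry out the peeling with an explicit invariant guaranteeing $(d)$ survives, or to invert the triangular system $R_{i,j}=\sum_{\tau:\, i\sim_{\tau} j}\lambda_{\tau}/|\tau_{(i)}|$ (for $i\leq j$, with $\tau_{(i)}$ the part containing $i$) and prove nonnegativity of the resulting $\lambda_{\tau}$ from $(a)$ and $(d)$ by induction on the coarseness of $\tau$ --- or else simply cite Hoffman's original argument, as the paper itself does.
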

In other words, any Hoffman matrix $R$ can be constructed as a convex mixture of individual $M^\tau$ matrices. The action of an individual matrix $M^{\tau}$ on a passive vector $\vec{p}$ yields another passive vector $\vec{p}\, '$ such that
\begin{align*}
\vec{p}\, ' = M^{\tau} \, \vec{p} =\oplus_{t=1}^{k} \, {\mathfrak{p}}_t ,
\end{align*}
where
\begin{align*}
\mathfrak{p}_t=  \frac{\sum_{i \in \tau_t} p_i }{|\tau_t |} \,  (\, \overbrace{1,\cdots,1}^{|\tau_t |} \,)^T .
\end{align*}
Since each matrix $M^{\tau} \in \mathcal{M}^{\mathcal{P}(d)}$ maps the set  $\mathcal{S}(d)$ into itself, the same is true for any Hoffman matrix $R\in 
\mc{R}(d)$. The proof of Theorem \ref{th:hoff1} is explained in Appendix \ref{append:Hoff-Maj} for completeness.

\subsection{Hoffman majorization}
Before defining Hoffman majorization,  let us first recall the regular majorization relation between two vectors.


\bigskip
\noindent
{\it Majorization.--} For $x,y\in \mathbb{R}_+^d$, let $x^{\downarrow}$ and $y^{\downarrow}$ be the corresponding vectors with components arranged in nonincreasing order. Then, we say that $x$ is majorized by $y$, denoted by $x\prec y$, if and only if
\begin{align}
\label{eq:majo-cond1}
&\sum_{i=0}^{k}x^{\downarrow}_i \leq \sum_{i=0}^{k} y^{\downarrow}_i , ~~~\forall k=0,\cdots, d-2,~~~\mathrm{and}\\
&\sum_{i=0}^{d-1}x^{\downarrow}_i = \sum_{i=0}^{d-1} y^{\downarrow}_i .
\label{eq:majo-cond2}
\end{align}
Note that Eq. (\ref{eq:majo-cond2}) is automatically verified when $x$ and $y$ are probability vectors, which is the case of interest here.
The relation $x\prec y$ is equivalent to relation $y\succ x$, meaning that $y$ majorizes $x$. A key result of majorization theory is that $x\prec y$ if and only if there exists a doubly-stochastic matrix $D$ such that $x=D y$.
The restriction of the above vectors $x$ and $y$ to the set $\mathcal{S}(d)$ of passive vectors gives rise to what we call Hoffman majorization. This is the content of the following theorem.

\begin{theorem}[\cite{Hoffman1969}]
\label{th:hoff2}
For passive vectors $x,y\in \mathcal{S}(d)$, $x$~is Hoffman-majorized by $y$, denoted as  $x\prec_h y$ [i.e., $x$ and $y$ satisfy Eqs.~\eqref{eq:majo-cond1} and \eqref{eq:majo-cond2} without the need for prior rearrangement], if and only if there exists a Hoffman matrix $R\in \mc{R}(d)$ such that $x=R\, y$.
\end{theorem}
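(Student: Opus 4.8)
The plan is to prove the two implications of Theorem~\ref{th:hoff2} separately, noting first that for $x,y\in\mathcal{S}(d)$ the relation $x\prec_h y$ is literally the ordinary majorization $x\prec y$ (no rearrangement is needed since $x^{\downarrow}=x$ and $y^{\downarrow}=y$). The ``if'' direction is then immediate: a Hoffman matrix $R\in\mathcal{R}(d)$ is doubly stochastic, so $x=Ry$ gives $x\prec y$, hence $x\prec_h y$. For the converse I would start from Theorem~\ref{th:hoff1}: since $R\mapsto Ry$ is linear and $\mathcal{R}(d)=\operatorname{conv}(\mathcal{M}^{\mathcal{P}(d)})$, the attainable set $\{Ry:R\in\mathcal{R}(d)\}$ is exactly $\operatorname{conv}\{M^{\tau}y:\tau\in\mathcal{P}(d)\}$. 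So it suffices to show that every $x\in\mathcal{S}(d)$ with $x\prec_h y$ lies in $\operatorname{conv}\{M^{\tau}y\}$.

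I would prove this by induction on $d$, phrased in terms of the concave partial-sum (Lorenz) sequences $X_k=\sum_{i=0}^{k}x_i$ and $Y_k=\sum_{i=0}^{k}y_i$, which satisfy $X_k\le Y_k$ with equality at $k=d-1$. Call $k\in\{0,\dots,d-2\}$ a \emph{contact point} if $X_k=Y_k$. If a contact point $k$ exists, cut $\{0,\dots,d-1\}$ into $\{0,\dots,k\}$ and $\{k+1,\dots,d-1\}$; the two pieces of $x$, rescaled to be probability vectors, are Hoffman-majorized by the corresponding pieces of $y$ (the partial-sum inequalities survive the cut because $X_k=Y_k$), so the induction hypothesis supplies Hoffman matrices $R'$ and $R''$ on the blocks and $R=R'\oplus R''$ is the required matrix. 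One then checks conditions (a)--(d) for a block-diagonal $R$: (a), (b), (c) are immediate, and the only genuine point is the ``mixed'' instances of (d) straddling the block boundary, which reduce to the statement that, for any Hoffman matrix, $R_{i,j}$ is non-decreasing in $i$ and non-increasing in $j$ throughout the region $i\le j$. This monotonicity I would prove once and for all by verifying it on each generator $M^{\tau}$ (where it is visible from the block structure) and passing to convex combinations.

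It remains to treat $x\in\mathcal{S}(d)$ with no interior contact point, i.e.\ $X_k<Y_k$ for $1\le k\le d-2$. If $x$ equals the uniform vector $\bar y:=\tfrac1d(1,\dots,1)^T$ we are done, as $\bar y=M^{(01\cdots d-1)}y$. Otherwise I would ``peel off'' a multiple of $\bar y$: set $x(\mu)=(x-\mu\bar y)/(1-\mu)$ and let $\mu^{*}$ be the largest $\mu\in[0,1)$ for which $x(\mu)\in\mathcal{S}(d)$ and $x(\mu)\prec_h y$. A direct computation shows $\mu^{*}\in(0,1)$: strict separation at $\mu=0$ leaves room, while $x\ne\bar y$ makes $\|x(\mu)\|\to\infty$ as $\mu\to1$, so the trajectory must exit the bounded feasible set. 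Moreover, as $\mu$ increases the differences $x(\mu)_i-x(\mu)_{i+1}$ only grow, so the constraint that first becomes tight at $\mu^{*}$ is either $X(\mu^{*})_k=Y_k$ for some $k\le d-2$ or $x(\mu^{*})_{d-1}=0$; in the latter case $X(\mu^{*})_{d-2}=1=Y_{d-2}$, so $d-2$ is a contact point anyway. Thus $x':=x(\mu^{*})$ has an interior contact point and, by the previous paragraph, $x'=R'y$ for some $R'\in\mathcal{R}(d)$; then $x=\mu^{*}\bar y+(1-\mu^{*})x'=\big(\mu^{*}M^{(01\cdots d-1)}+(1-\mu^{*})R'\big)y$, and convexity of $\mathcal{R}(d)$ finishes the induction.

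The step I expect to be the main obstacle is making the last two paragraphs watertight in the degenerate configurations --- when $y$ (or $x$) has trailing zeros or flat stretches --- where one must check carefully that $\mu^{*}>0$ and that the binding constraint really produces a contact point rather than merely touching the boundary of $\mathcal{S}(d)$; verifying that a direct sum of Hoffman matrices is Hoffman (i.e.\ the monotonicity lemma) is the other technical core. A more structural alternative would avoid the explicit peeling by showing directly that the extreme points of the polytope $\{x\in\mathcal{S}(d):x\prec_h y\}$ are precisely the vectors $M^{\tau}y$: at an extreme point, a mass-shift perturbation between two constancy blocks lying strictly between consecutive contact points stays feasible, forcing $x$ to be constant (hence equal to the $y$-average) on each such stretch. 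This route is arguably cleaner, but the perturbation argument carries essentially the same case analysis.
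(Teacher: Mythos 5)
The paper does not actually prove Theorem~\ref{th:hoff2}: it is imported verbatim from Hoffman's 1969 paper, so there is no in-text argument to compare yours against. Judged on its own, your reconstruction is sound and essentially complete given Theorem~\ref{th:hoff1}. The ``if'' direction is indeed immediate from double stochasticity plus the standing assumption $x\in\mathcal{S}(d)$. For the converse, the contact-point induction on the Lorenz sequences is correct: when $X_k=Y_k$ the two blocks inherit Hoffman majorization (the second block because $X_j-X_k\le Y_j-Y_k$), and your reduction of the cross-boundary instances of condition (d) to the monotonicity of $R_{i,j}$ in $i$ and $j$ on the region $i\le j$ is exactly the right lemma, provable on the generators $M^{\tau}$ as you say. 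The peeling step also survives the degenerate cases you worry about: if $x_{d-1}=0$ then $X_{d-2}=1\ge Y_{d-2}$ forces $Y_{d-2}=1$, so $k=d-2$ is already a contact point and Case~2 never sees a vector with a vanishing last entry; consequently $\mu^{*}>0$ by strictness of all constraints at $\mu=0$, and at $\mu^{*}$ the binding constraint is always a partial-sum equality (when $y_{d-1}>0$ feasibility gives $x(\mu)_{d-1}\ge y_{d-1}>0$, so positivity can never bind first). Since the feasible set in $\mu$ need not be an interval, define $\mu^{*}$ as the supremum of the feasible set and invoke closedness plus finiteness of the constraint list to get a tight constraint; your sketch already implicitly does this.

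Two small repairs and one simplification. First, your ``no interior contact point'' case should read $X_k<Y_k$ for all $0\le k\le d-2$ (the index $k=0$ is a legitimate contact point and must be routed to Case~1). Second, the verification that a direct sum of Hoffman matrices is Hoffman can bypass the monotonicity lemma entirely: writing $R'=\sum_a\lambda_a M^{\tau'_a}$ and $R''=\sum_b\mu_b M^{\tau''_b}$ via Theorem~\ref{th:hoff1}, one gets $R'\oplus R''=\sum_{a,b}\lambda_a\mu_b\,(M^{\tau'_a}\oplus M^{\tau''_b})$, and each $M^{\tau'_a}\oplus M^{\tau''_b}$ is itself a generator $M^{\tau}$ for the concatenated consecutive partition, so $R'\oplus R''\in\operatorname{conv}(\mathcal{M}^{\mathcal{P}(d)})=\mathcal{R}(d)$ with no case analysis. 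With those adjustments the argument is a legitimate, self-contained proof of the cited result.
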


By definition, for $x,y\in\mathcal{S}(d)$, the relation $x \prec_h y$ implies the relation $x \prec y$ and vice-versa (they are equivalent relations over the set of passive vectors) \cite{Marshall2011}. However, the relation $\prec$ is invariant under permutations while the relation $\prec_h$ is not, which is why the Hoffman matrix $R$ involved in relation $\prec_h$ obeys additional constraints beyond being doubly-stochastic. Furthermore, we note that Hoffman majorization $\prec_h$ is a partial order relation as it satisfies the antisymmetry property \footnote{Let $x,y\in\mathcal{S}(d)$. Then $x\prec_h y$ and $y\prec_h x$ imply that $\sum_{i=0}^k x_i = \sum_{i=0}^k y_i$ for all $i=0,\cdots,d-1$, which in turn implies that $x=y$.}, while majorization is only a preorder which does not satisfy the antisymmetry property.

\begin{remark}
The existence of a Hoffman matrix $R$ for relation $x\prec_h y$ to hold is a nontrivial consequence of the restriction to passive vectors $x$ and $y$. 
The fact that the  doubly-stochastic matrix $R$ is symmetric may come as a surprise, but it can be made intuitive by appealing to the geometry of the set $\mc{C}_y$ of vectors $x\in \mc{S}(d)$ satisfying $x\prec_h y$ for a fixed $y\in \mc{S}(d)$. It was shown in Ref. \cite{Hoffman1969} that the set $\mc{C}_y$ is actually a polyhedron and that the extremal points of this polyhedron are of the form $M^{\tau} y$, where $M^{\tau}\in \mc{M}^{\mc{P}(d)}$. In fact, this is the main idea behind one of the proofs of Theorem \ref{th:hoff2} presented in Ref. \cite{Hoffman1969} (see Appendix \ref{append:Hoff-Maj} for a simple proof of Theorem \ref{th:hoff2}).
\end{remark}

\begin{remark}
According to Theorem \ref{th:hoff2}, for two passive vectors $\vec{p}$ and $\vec{q}$, we have $\vec{p}\prec_h\vec{q}$ if and only if there exists a Hoffman matrix $R$ such that $\vec{p}=R\vec{q}$. As mentioned earlier, for two dimensional case ($d=2$), every Hoffman matrix is a $t$-transform with constraint, i.e.,
\begin{align}
R=\begin{pmatrix}
t & \bar{t}\\
\bar{t} & t
\end{pmatrix},
\end{align}
where $\bar{t}=1-t$ and $1/2\leq t\leq 1$. Thus, the condition for Hoffman majorization for two-dimensional vectors simply boils down to the existence of such a $t$-transform, which we call a passive $t$-transform. The extension to higher dimensions  ($d>2$) is discussed in Appendix \ref{append:Hoff-Maj}, where we connect Hoffman majorization to the existence of a decomposition into passive $t$-transforms, see Theorem \ref{th:exist-t-for-hm}. This also leads us to the consider the existence of a doubly-stochastic matrix connecting $\vec{p}$ and $\vec{q}$ that is not necessarily symmetric (hence, it is not a Hoffman matrix $R$) and maps the set $\mc{S}(d)$ into itself.
\end{remark}



\section{Passive states and passivity-preserving quantum operations}
\label{sect:passive-states-PPO}
Let us move to the quantum scenario and introduce the set of passive states (extending the passive vectors) together with the notion of passivity-preserving quantum operations. We will also discuss active states (i.e., nonpassive states) as well as what we call activity-breaking quantum operations.

\subsection{Passive states}
Consider a separable Hilbert space $\mc{H}_S$ associated with a finite-dimensional quantum system $S$ with Hamiltonian $\hat{H}_S$. Let $\{\ket{i}\}$ with $0\leq i\leq d-1$ be the energy eigenbasis, which forms an orthonormal basis of $\mc{H}_S$, and let us denote as $E_i$ the energy eigenvalue corresponding to $\ket{i}$. We have
\begin{equation}
\label{eq:ham}
\hat{H}_S=\sum_{i=0}^{d-1}E_i \op{i},
\end{equation} 
where by convention $E_{i}\leq E_{j}$  if  $i \leq j$. A state $\rho_S$ of the system $S$ is passive if and only if its average energy cannot be lowered by unitary operations \cite{Pusz1978, Lenard78}, i.e.,
\begin{equation}
\Tr\{\hat{H}_S \, \rho_S\} \leq \Tr\{\hat{H}_S \, \mathcal{U}_S[\rho_S] \}
\end{equation}
for all unitary operations $\mathcal{U}_S[\cdot]\coloneqq U_S(\cdot)U_S^\dag$, where $U_S$ is a unitary operator. Further, it was shown in Refs. \cite{Pusz1978, Lenard78} that a state $\rho_S$ of the system $S$ with Hamiltonian $\hat{H}_S$ is passive if and only if it can be expressed in the following form 
\begin{equation}
\rho_S=\sum_{i=0}^{d-1}p^{\downarrow}_i \op{i},
\end{equation}
where $\{p^{\downarrow}_i\} \in \mathcal{S}(d)$ denotes a nonincreasing probability distribution, i.e., $p^{\downarrow}_{i} \leq p^{\downarrow}_{j}$, $\forall i>j$. It is evident that any passive state $\rho_S$ commutes with the Hamiltonian $\hat{H}_S$, i.e., $[\rho_S,\hat{H}_S]=0$, hence all passive states are incoherent states if one fixes the energy eigenbasis as the reference basis. (Incoherent states are those states that admit a diagonal density operator in a fixed reference basis). Moreover, the set of passive states is a convex set and the $d$ extremal points of this set are given by $\{\sigma_k\}_{k=0}^{d-1}$, where
\begin{equation}
\sigma_k=\frac{1}{k+1}\sum_{i=0}^{k}\op{i}.
\end{equation}
Let us denote respectively by $\msc{D}(S)$ and $\msc{P}(S)$ the set of all states and the set of all passive states of system $S$. The concept of passive states, which was introduced to derive and justify statistical physics starting from certain physical assumptions on quantum states and operations \cite{Pusz1978, Lenard78}, has played a major role in the development of quantum thermodynamics and has been applied to several contexts \cite{Allahverdyan2004, Skrzypczyk2015, Gemmer2015, Korzekwa2016, Brown2016, Perarnau2015, Perarnau-Llobet2015}. In particular, passive states can be viewed as a natural generalization of thermal states $\rho(\beta)=Z^{-1}e^{-\beta \hat{H}_s}$, with $Z=\sum_{k=0}^{d-1}e^{-\beta E_k}$ , where the negative exponential ensures the nonincreasing probability distribution. The set of all thermal states of system $S$ (for all $\beta\ge 0$) will be denoted as $\msc{B}(S)$ in the following. 

In the context of quantum thermodynamics, a major theme is to obtain the maximal amount of work that can be extracted using some quantum operations (see, e.g., Refs. \cite{Allahverdyan2004, Michal2013, Faist2019}).
In particular, the maximal extractable work from a system in state $\rho$ under unitary transformations, denoted as $\mc{U}$, is given by
\begin{align*}
W_{\max}^\mc{U}(\rho)\coloneqq \max_{\mc{U}_S}\Tr\{\hat{H}_S\left(\rho-\mc{U}_S[\rho]\right)\}.
\end{align*}
The above expression is also called the ergotropy of the state $\rho$ \cite{Allahverdyan2004} and is of course zero for a passive state. For an arbitrary state $\rho=\sum_{i=0}^{d-1}q_i \op{q_i} $, with $q_{i}\leq q_{j}$, $\forall i > j$, the ergotropy is thus given by \cite{Allahverdyan2004}
\begin{align}
\label{def:erg}
W_{\max}^\mc{U}(\rho) = \sum_{i,k=0}^{d-1} q_i E_k \left(|\bra{q_i}\ket{k}|^2-\delta_{i,k}\right).
\end{align}
Note that unitary transformations leave the entropy of the system unchanged, thereby making the change in energy of the system a reasonable quantifier for the extractable work. However, if a quantum operation changes the entropy of the system, the  difference in energy before and after the operation does not remain a valid quantifier for the extractable work (e.g., see Appendix \ref{append:extractablework} for a discussion).

Following the concept of passive states, which are useless for work extraction under unitary operations, it is natural to consider quantum operations that are useless for work extraction. In particular, we now consider the class of quantum operations that preserve the set $\msc{P}(S)$ of passive states.

\subsection{Passivity-preserving operations (PPO)}

In general, a quantum operation or quantum channel $\msc{N}_{S\to S'}: \msc{D}(S)\to \msc{D}(S')$ is a completely positive, trace-preserving map that acts on the input system $S$ and yields the output system $S'$. A quantum channel $\mc{N}_{S\to S'}$ is said to be passivity-preserving if its output is passive whenever its input is passive, i.e., $\mc{N}(\rho_S)\in\msc{P}(S'), \forall \rho_{S}\in\msc{P}(S)$. Thus, since passivity-preserving operations (PPO) cannot create an active (i.e., nonpassive) state starting from a passive state, they cannot be used to create ergotropy :  if $\mc{N}_{S\to S'}$ is a PPO and $\rho_{S}\in\msc{P}(S)$ then $W_{\max}^\mc{U}\left(\mc{N}_{S\to S'}(\rho_S)\right)=0$.

As already mentioned, passive states are incoherent states with respect to the energy eigenbasis. As a result, the concept of passivity-preserving operations is related with the concept of incoherent-preserving operations (i.e., channels that map incoherent states to incoherent states). From now on, we will refer to incoherent-preserving operations simply as \textit{incoherent operations} for brevity ~\footnote{These operations have sometimes been referred to as maximally incoherent operations, see, for example, Refs. \cite{Aberg2006, Chitambar2016}, but we find it more appropriate to call them incoherent-preserving operations or simply incoherent operations.}. 
Further, we will define as \textit{strictly incoherent operations} the incoherent operations that admit a strictly incoherent Kraus decomposition (i.e., such that all individual Kraus operators also map incoherent states to incoherent states \footnote{These strictly incoherent operations have sometimes been called incoherent operations, see for example Refs. \cite{Baumgratz2014}.}).
For more details on the resource theory of coherence see, e.g., Refs. ~\cite{Baumgratz2014, Winter2016, Marvian2013, Streltsov2015, Streltsov2017A, Lami2020}.

The following proposition is a structural statement about the passivity-preserving operations.

\begin{proposition} All passivity-preserving operations are incoherent operations.
\end{proposition}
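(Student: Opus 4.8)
The plan is to exploit the fact that the extremal passive states $\{\sigma_k\}_{k=0}^{d-1}$ identified above do more than generate $\msc{P}(S)$ by convex combinations: they actually form a basis of the $d$-dimensional real vector space of Hermitian operators that are diagonal in the energy eigenbasis $\{\ket{i}\}$. To see this, one writes a generic diagonal operator $X=\sum_{i=0}^{d-1}x_i\op{i}$ and asks for real coefficients $c_k$ with $X=\sum_{k=0}^{d-1}c_k\sigma_k$; since $(\sigma_k)_{ii}=1/(k+1)$ for $i\le k$ and $0$ otherwise, this is the triangular system $x_i=\sum_{k\ge i}c_k/(k+1)$, which is always solvable (explicitly $c_k=(k+1)(x_k-x_{k+1})$ for $k<d-1$ and $c_{d-1}=d\,x_{d-1}$). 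In particular, any incoherent state $\rho_S=\sum_i p_i\op{i}$ — \emph{without} assuming its spectrum is non-increasing — can be written as $\rho_S=\sum_{k=0}^{d-1}c_k\sigma_k$ with real coefficients, some of which may well be negative.

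Next I would invoke linearity of the channel. A passivity-preserving operation $\mc{N}_{S\to S'}$ is by definition a completely positive trace-preserving \emph{linear} map on operators, so it acts on the above decomposition term by term, $\mc{N}(\rho_S)=\sum_{k=0}^{d-1}c_k\,\mc{N}(\sigma_k)$. Because each $\sigma_k\in\msc{P}(S)$ and $\mc{N}$ is passivity-preserving, every $\mc{N}(\sigma_k)\in\msc{P}(S')$ is a passive state of the output system and is therefore diagonal in the energy eigenbasis of $S'$. A real linear combination of operators diagonal in a fixed basis is again diagonal in that basis, so $\mc{N}(\rho_S)$ is diagonal in the energy eigenbasis of $S'$, i.e.\ it is incoherent. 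Since $\rho_S$ was an arbitrary incoherent state, $\mc{N}$ maps incoherent states to incoherent states, which is exactly the claim that $\mc{N}$ is an incoherent operation.

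The only delicate point is the first step — that the $\sigma_k$ span the whole diagonal subspace rather than merely the passive cone, and hence that one is legitimately feeding $\mc{N}$ a \emph{signed} combination of passive states — but this is immediate once one recalls that a quantum channel is a linear map defined on all of operator space, not only on the state simplex. I do not expect any genuine obstacle here; in fact the argument shows slightly more, namely that it suffices for $\mc{N}$ to map the $d$ extremal passive states $\sigma_0,\dots,\sigma_{d-1}$ to passive (equivalently, incoherent) states for it to be an incoherent operation.
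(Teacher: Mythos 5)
Your proof is correct and is essentially the paper's own argument: the paper likewise observes that $\op{k}=(k+1)\sigma_k-k\sigma_{k-1}$ is a signed combination of extremal passive states, applies linearity of the channel, and uses that passive outputs are diagonal; your explicit inversion $c_k=(k+1)(x_k-x_{k+1})$ is just this relation telescoped to a general diagonal operator. No gap.
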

\begin{proof}
Let us consider that $\mc{N}_{S\to S'}$ is a passivity-preserving operation. Then, for all extremal passive states $\sigma_{k}$,  the output $\gamma_k = \mc{N}_{S\to S'}(\sigma_{k})$ is a passive state for  $0\le k \le d-1$. In particular, 
\begin{align*}
(k+1)\gamma_k - k \gamma_{k-1} &= \mc{N}_{S\to S'}\left((k+1)\sigma_{k}- k\sigma_{k-1} \right)\\
&=\mc{N}_{S\to S'}\left( \op{k} \right).
\end{align*}
Since, $\mc{N}_{S\to S'}$ is a quantum channel,  $\mc{N}_{S\to S'}\left( \op{k} \right) \in \msc{D}(S')$, hence $[(k+1)\gamma_k - k \gamma_{k-1} ] \in \msc{D}(S')$. Further, $\gamma_k$ and $\gamma_{k-1}$ are diagonal states in energy basis, which implies that $\mc{N}_{S\to S'}\left( \op{k} \right)$ is diagonal in energy basis for all $k$. Thus we conclude that all passivity-preserving operations map diagonal states to diagonal states in energy basis.
\end{proof}

\begin{remark}
The converse of the above proposition does not hold. For example, a permutation of energy eigenstates is an example of incoherent operation, however, it is not a passivity-preserving operation.
\end{remark}

\bigskip
\noindent
Passivity-preserving operations exhibit the following useful properties. 

\bigskip
\noindent
{\it (P1) Convexity:} The set of passivity-preserving channels is convex in the sense that if $\mc{N}_{S\to S'}$ and $\mc{N}'_{S\to S'}$ are two passivity-preserving channels, then  $p \mc{N}_{S\to S'}+(1-p)\mc{N}'_{S\to S'} $ with $0\le p\le 1$ is also a passivity-preserving channel.

\medskip
\noindent
{\it (P2) Composability:} The composition of passivity-preserving channels is again a passivity-preserving channel, that is, if $\mc{N}_{S_1\to S_2}$ and $\mc{N}'_{S_2\to S_3}$ are two passivity-preserving channels, then $\mc{N}_{S_2\to S_3}\circ \mc{N}'_{S_1\to S_2}$ is also a passivity-preserving channel.

\medskip
\begin{remark}
In the special case of two-dimensional (qubit) input and output systems $S$ and $S'$, the set of incoherent operations and the set of strictly incoherent operations are known to coincide \cite{Chitambar2016}. Hence, all qubit passivity-preserving operations are not only incoherent operations but also strictly incoherent operations (all Kraus operators are incoherent).
\end{remark}

This property can be used to provide an explicit characterization of qubit passivity-preserving operations in terms of five (incoherent) Kraus operators, as shown in Appendix \ref{append:qubits}. It is also noted in this Appendix that qubit passivity-preserving operations enjoy another nice physical property, in terms of a Stinespring dilation comprising an energy-preserving unitary operation and a passive environment.


Now consider the case where the input and output systems $S$ and $S'$ have the same (arbitrary) dimension. By restricting to strictly incoherent passivity-preserving operations from pure states to pure states (with some restriction on the states), it appears that such operations can be explicitly characterized in terms of incoherent Kraus operators as follows.


\begin{proposition}
If a strictly incoherent passivity-preserving operation is used to transform a pure state into another pure state (with the restriction that each energy level in both states has nonzero amplitude and both states have the same dimension), then it is sufficient to consider Kraus operators that contain one and only one nonzero element in each row and column.
\end{proposition}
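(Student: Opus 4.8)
The plan is to start from an arbitrary strictly incoherent passivity-preserving operation $\mc{N}$ effecting the transformation $\op{\psi}\mapsto\op{\phi}$, where $\ket{\psi}=\sum_i a_i\ket{i}$ and $\ket{\phi}=\sum_j b_j\ket{j}$ with all $a_i\neq 0$ and all $b_j\neq 0$. Fix a strictly incoherent Kraus representation $\{K_n\}$, so that each $K_n$ maps incoherent states to incoherent states. The key structural fact about a strictly incoherent Kraus operator is that it has at most one nonzero entry in each row and at most one nonzero entry in each column (this is the standard characterization of strictly incoherent operators: $K_n=\sum_i c_{n,i}\op{\pi_n(i)}{i}$ for some partial permutation $\pi_n$; see Refs.~\cite{Baumgratz2014, Winter2016}). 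So the content to be proven is really that, \emph{for this particular transformation}, each relevant $K_n$ can be taken to have exactly one nonzero entry in each row and column, i.e.\ to be a full permutation dressed with nonzero coefficients, rather than a partial permutation with some empty rows/columns.

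First I would apply $\mc{N}$ to $\op{\psi}$ and expand: $\op{\phi}=\sum_n K_n\op{\psi}K_n^\dagger$. Acting $K_n$ on $\ket{\psi}$ gives $K_n\ket{\psi}=\sum_i c_{n,i}a_i\ket{\pi_n(i)}$, a vector supported on $\pi_n(\mathrm{dom}\,\pi_n)$. Since the left-hand side is the rank-one projector onto $\ket{\phi}$, which has \emph{full support} (all $b_j\neq0$), every $K_n\ket{\psi}$ that is nonzero must be proportional to $\ket{\phi}$, hence must itself have full support; this forces $\pi_n$ to be \emph{surjective} onto $\{0,\dots,d-1\}$ (no empty column) for every $n$ with $K_n\ket{\psi}\neq 0$. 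Next, because each $a_i\neq 0$, if $i\in\mathrm{dom}\,\pi_n$ then $c_{n,i}a_i\neq 0$ contributes to $K_n\ket{\psi}$; combined with surjectivity and the fact that $\pi_n$ is an injection on its domain, a counting argument shows $\mathrm{dom}\,\pi_n$ must be all of $\{0,\dots,d-1\}$ as well (no empty row), so each such $K_n$ is a full permutation matrix with nonzero weights — exactly one nonzero element per row and column. Finally I would discard the Kraus operators with $K_n\ket{\psi}=0$: they do not affect the action on $\op\psi$, but one must check they can be dropped while keeping $\mc N$ a channel \emph{on the relevant states}; since we only care about the transformation $\op\psi\mapsto\op\phi$, renormalizing $\{K_n : K_n\ket\psi\neq0\}$ (they already satisfy $\sum K_n^\dagger K_n\preceq\mathbb{1}$, and on the line spanned by $\ket\psi$ equality holds because trace is preserved) yields a valid operation of the claimed form agreeing with $\mc N$ on $\op\psi$.

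I would also use the incoherence/passivity hypotheses to pin down \emph{which} permutations are allowed — not every permutation-with-weights is passivity-preserving — but this is not needed for the stated conclusion, which only asserts the Kraus operators have the one-nonzero-entry-per-row-and-column shape; the passivity constraint will re-enter in the subsequent theorem on Hoffman majorization. The main obstacle I anticipate is the "no empty row" step: surjectivity of $\pi_n$ forbids empty columns immediately, but ruling out a Kraus operator that is, say, a nonsquare partial isometry with fewer than $d$ nonzero rows requires carefully combining the full-support of $\ket\phi$, the full-support of $\ket\psi$, and the injectivity of $\pi_n$ on its domain — essentially the pigeonhole observation that an injection from a proper subset of a $d$-set cannot be surjective onto the whole $d$-set. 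A secondary subtlety is being careful that "strictly incoherent" is used in the sense defined in the paper (Kraus operators individually incoherent), so that the partial-permutation normal form is legitimately available as the starting point.
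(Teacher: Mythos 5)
Your proposal is correct and takes essentially the same route as the paper: from $\sum_n K_n\op{\psi}K_n^\dagger=\op{\phi}$ one gets $K_n\ket{\psi}=c_n\ket{\phi}$, the full support of $\ket{\phi}$ then forces every contributing Kraus operator to have a nonzero entry in every row, and combining this with the at-most-one-nonzero-entry-per-column property of strictly incoherent Kraus operators, a counting/pigeonhole argument yields exactly one nonzero entry per row and column. Two cosmetic remarks: your parenthetical row/column labels are swapped (surjectivity of $\pi_n$ means no empty \emph{row}, fullness of its domain means no empty \emph{column}), and the paper's definition of strictly incoherent only guarantees the column condition rather than the full partial-permutation normal form you invoke as a starting point --- but your argument never actually needs the row half of that normal form, so nothing breaks.
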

\begin{proof}
If an operation $\Phi:=\{K_\mu\}$ transforms a pure state $\ket{\psi}$ into another pure state $\ket{\phi}$ such that both states have nonzero amplitudes for each energy level, i.e., $\ket{\psi}=\sum_{i=0}^{d-1} \alpha_i \ket{i}$  and $\ket{\phi}=\sum_{i=0}^{d-1} \beta_i \ket{i}$ with $\alpha_i,\beta_i>0$, $\forall i$, and $\sum_{i=0}^{d-1}|\alpha_i|^2 =\sum_{i=0}^{d-1}|\beta_i|^2=1$, then, $\Phi(\op{\psi})=\op{\phi}$ implies that the Kraus operators $\{K_{\mu}\}$ satisfy
\begin{align}
\label{eq:pure-kraus}
K_{\mu}\ket{\psi}=c_\mu \ket{\phi}, ~\forall \mu.
\end{align}
Here, $c_\mu$ are proportionality constants such that $\sum_\mu|c_\mu|^2=1$. Since each component of $\ket{\phi}$ in energy basis is nonzero, any $K_\mu$ which has a zero row cannot contribute to the state transformation owing to Eq. \eqref{eq:pure-kraus}. Therefore  every Kraus operator that contributes to this pure state transformation must have all its rows containing at least one nonzero entry. Also, since we consider a strictly incoherent operation, each Kraus operator can have at most one nonzero entry in each column. These two constrains together imply that each Kraus operator must have one and only one nonzero entry in each row and column. 
\end{proof}

\bigskip
Finally, to complete this section, let us define active states (in particular, a class of pure active states which will play a central role in our main theorem in Sec.~\ref{sec:stat-tran}) as well as the notion of activity-breaking operations.

\subsection{Active states}
A state $\rho \in \msc{D}(S)$ is said to be an active state if it is not passive. By definition, unitary operations may decrease the energy of any active state and, therefore, active states can be thought of as a source of usable energy, e.g., a charged quantum battery. All pure states except the ground state of $S$ are, by definition, active states. Among all pure states, there is a particular set $\mathfrak{D}$ of pure states which we will use, namely, the set of states of the form
\begin{align*}
\ket{\psi}=\sum_{i=0}^{d-1}e^{-i\, \theta_i}\sqrt{p_i}\ket{i},
\end{align*}
where $\{ p_{i} \} \in \mathcal{S}(d)$, that is $p_{i} \leq p_{j}$, $\forall i> j$, and $p_0\in(0,1)$. For example, a pure state of the form
\begin{align*}
\ket{\psi}=\sum_{i=0}^{d-1}\sqrt{\frac{e^{-\beta E_i}}{Z}}\ket{i}
\end{align*}
lies in the set $\mathfrak{D}$, where $Z=\mathrm{Tr}[e^{-\beta \hat{H}_S}]$ and $\beta\geq 0$. In the simplest case of qubits, the pure states in the set $\mathfrak{D}$ lie on the surface of upper half of the Bloch sphere. Thus, although it may seem very constrained, the set $\mathfrak{D}$ contains a significant part of all nontrivial pure quantum states.

\subsection{Activity-breaking operations (ABO)}

We define a special class of passivity-preserving operations that are interesting from a resource-theoretic viewpoint, namely the activity-breaking operations.  Naturally, a quantum channel $\mc{N}_{S\to S'}$ is called \textit{activity-breaking} if the output state of the channel is always passive for any input state, i.e., $\mc{N}(\rho_S)\in\msc{P}(S')$ for all $\rho_S\in\msc{D}(S)$. This is a straightforward analog to the notions of entanglement-breaking or coherence-breaking channels.
The following theorem gives a complete characterization of activity-breaking channels.
\begin{theorem}
A quantum channel $\mc{N}_{S\to S'}$, where the Hamiltonian of system $S'$ is denoted as $H_{S'}=\sum_{k=0}^{d-1}E'_k\op{E'_k}{E'_k}$  and $\dim(\mc{H}_{S'})=\dim(\mc{H}_{S})$, is activity-breaking if and only if it admits the following form
\begin{equation}
\mc{N}(\rho_{S})=\sum_{k=0}^{d-1}\Tr\{\rho_{S}\Gamma_{k}\}\op{E'_k}_{S'},
\label{eq:ABO}
\end{equation}
where the operators $\{\Gamma_k\}_{k=0}^{d-1}$ form a positive-operator-valued measure (POVM), i.e., $\sum_{k=0}^{d-1}\Gamma_{k}=\mathbbm{1}_{S}$ and $\Gamma_{k}\geq 0$, $\forall k$, satisfying $\Gamma_{k} \leq \Gamma_{k'}~\forall k>k'$.
\end{theorem}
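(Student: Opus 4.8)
The plan is to prove both directions of the equivalence. For the ``if'' direction, suppose $\mc{N}$ has the form~\eqref{eq:ABO} with a POVM $\{\Gamma_k\}$ satisfying $\Gamma_k \le \Gamma_{k'}$ for $k > k'$. First I would check complete positivity and trace preservation: the map is manifestly of measure-and-prepare form, hence CPTP since $\sum_k \Gamma_k = \mathbbm{1}_S$. Then for \emph{any} input $\rho_S \in \msc{D}(S)$, the output is diagonal in the energy eigenbasis of $S'$ with eigenvalues $q_k = \Tr\{\rho_S \Gamma_k\}$. The monotonicity $\Gamma_k \le \Gamma_{k'}$ for $k > k'$ together with positivity of $\rho_S$ gives $q_k = \Tr\{\rho_S \Gamma_k\} \le \Tr\{\rho_S \Gamma_{k'}\} = q_{k'}$, so $\{q_k\}$ is a non-increasing probability vector and the output lies in $\msc{P}(S')$. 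This direction is essentially immediate.

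For the ``only if'' direction, assume $\mc{N}_{S\to S'}$ is activity-breaking. The key observation is that every rank-one projector $\op{\phi}_S$ is a state, so $\mc{N}(\op{\phi}_S)$ must be passive, hence diagonal in the energy eigenbasis of $S'$; by linearity and the fact that such projectors span the space of Hermitian operators, $\mc{N}(X_S)$ is diagonal in the $S'$ energy basis for every Hermitian $X_S$. Therefore I can write $\mc{N}(\rho_S) = \sum_{k=0}^{d-1} \lambda_k(\rho_S) \op{E_k}_{S'}$ where each coefficient functional $\lambda_k$ is linear in $\rho_S$, and by the Riesz representation of linear functionals on Hermitian operators there exist Hermitian operators $\Gamma_k$ with $\lambda_k(\rho_S) = \Tr\{\rho_S \Gamma_k\}$. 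Positivity of $\mc{N}$ on all states forces $\Tr\{\rho_S \Gamma_k\} \ge 0$ for all $\rho_S \ge 0$, hence $\Gamma_k \ge 0$; trace preservation forces $\sum_k \Tr\{\rho_S \Gamma_k\} = \Tr \rho_S$ for all $\rho_S$, hence $\sum_k \Gamma_k = \mathbbm{1}_S$. This establishes that $\{\Gamma_k\}$ is a POVM and that $\mc{N}$ has the form~\eqref{eq:ABO}.

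The remaining — and I expect main — step is to extract the ordering constraint $\Gamma_k \le \Gamma_{k'}$ for $k > k'$ from the passivity of all outputs. The requirement is that for \emph{every} state $\rho_S$, the vector $(\Tr\{\rho_S \Gamma_0\}, \dots, \Tr\{\rho_S \Gamma_{d-1}\})$ be non-increasing, i.e., $\Tr\{\rho_S(\Gamma_{k'} - \Gamma_k)\} \ge 0$ for all $k > k'$ and all states $\rho_S$. Since this holds for every $\rho_S \ge 0$ with unit trace (and hence, by scaling, for every $\rho_S \ge 0$), it is equivalent to $\Gamma_{k'} - \Gamma_k \ge 0$, which is exactly $\Gamma_k \le \Gamma_{k'}$ for $k > k'$. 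One subtlety to handle carefully is that passivity of the output requires the eigenvalues to be non-increasing when the Hamiltonian eigenvalues are ordered $E_i \le E_j$ for $i \le j$, with the caveat about degeneracies in $\hat{H}_{S'}$; if there are degenerate energy levels, the ordering constraint between the corresponding $\Gamma_k$'s need only hold in the appropriate block sense, but under the convention that we simply demand non-increasing eigenvalues along the fixed basis ordering, the clean statement $\Gamma_k \le \Gamma_{k'}$ for $k>k'$ follows. Assembling the two directions completes the proof.
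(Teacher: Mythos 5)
Your proof is correct, and its overall architecture matches the paper's: the ``if'' direction is the same immediate check, and the ordering constraint $\Gamma_{k}\leq\Gamma_{k'}$ for $k>k'$ is extracted in both cases from the observation that $\Tr\{\rho_S(\Gamma_{k'}-\Gamma_k)\}\geq 0$ for all states $\rho_S$ forces $\Gamma_{k'}-\Gamma_k\geq 0$. The one genuine difference is how the measure-and-prepare structure is obtained in the ``only if'' direction: the paper invokes the fact that an activity-breaking channel is in particular coherence-breaking and imports the characterization of coherence-breaking channels from Ref.~\cite[Theorem~2]{Bu2016}, whereas you derive the form $\mc{N}(\rho_S)=\sum_k\Tr\{\rho_S\Gamma_k\}\op{E_k}_{S'}$ from scratch --- noting that states span the Hermitian operators so the output is always diagonal, that each diagonal coefficient is a linear functional hence of the form $\Tr\{\rho_S\Gamma_k\}$ with $\Gamma_k=\mc{N}^{\dagger}(\op{E_k})$, and that positivity and trace preservation make $\{\Gamma_k\}$ a POVM. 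Your route is more self-contained and elementary (it does not even use complete positivity, only positivity and trace preservation), at the cost of re-deriving a known result; the paper's route is shorter and situates the theorem within the coherence-breaking literature. Your remark about degenerate energy levels is a reasonable caveat that the paper sidesteps by adopting the fixed-basis convention for passivity, so it does not affect the validity of either argument.
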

\begin{proof}
To prove the ``if" statement of the above theorem, we employ the fact that activity-breaking channels have to be coherence-breaking \cite{Bu2016}. Let us assume that the channel $\mc{N}_{S\to S'}$ is activity-breaking. Since it is then a special case of a coherence-breaking channel, the channel $\mc{N}_{S\to S'}$ can be expressed as \cite[Theorem~2]{Bu2016}
\begin{equation}
\mc{N}(\rho_{S})=\sum_{k=0}^{d-1}\Tr\{\rho_{S}\Gamma_k\}\op{E'_k}_{S'},
\end{equation}
such that $\sum_{k}\Gamma_k=\mathbbm{1}_{S}$ and $\Gamma_k\geq 0$ for all $k\in\{0,\cdots,d-1\}$. Additionally, imposing that $\mc{N}(\rho_{S})\in\msc{P}(S')$ since the channel is activity-breaking implies that for all $k>k'$
\begin{equation}
\Tr[\rho_{S}\Gamma_k]\leq \Tr[\rho_{S}\Gamma_{k'}].
\end{equation}
This inequality holds for all input states $\rho_{S}$ only if 
$\Gamma_{k'}-\Gamma_k \geq 0$. This condition necessarily requires $\supp(\Gamma_{k})\subseteq \supp(\Gamma_{k'})$ as $\Gamma_{k}\geq 0$.

To prove the converse, we assume that $\mc{N}(\rho_{S})$ is given by Eq. (\ref{eq:ABO}) and notice that $\sum_{k=0}^{d-1}\Tr\{\Gamma_{k}\rho_{S}\}\op{E'_k}_{S'}\in\msc{P}(S')$ regardless of $\rho_{S}\in\msc{D}(S)$ as soon as the POVM $\{\Gamma_k\}$ satisfies $\Gamma_{k}\leq \Gamma_{k'}$, $\forall k>k'$. This concludes the proof of the theorem.
\end{proof}

As a direct consequence of the above theorem, we have following corollary.
\begin{corollary}
An athermality-breaking channel $\mc{N}$, i.e., a channel which outputs a thermal state for any input state, has the following form:
\begin{equation}
\mc{N}(\rho_{S})=\sum_{k=0}^{d-1}\Tr\{\Gamma_{k}\rho_{S}\}\op{E'_k}_{S'},
\end{equation}
where $\{\Gamma_k\}_{k}$ is a POVM such that $
\forall 0\leq k\leq d-1:\ \Gamma_{k}=\frac{e^{-\beta E'_k}}{Z}\mathbbm{1}_{S}$ with $Z=\sum_{k=0}^{d-1}e^{-\beta E'_k}$. 
\end{corollary}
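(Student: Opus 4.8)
The plan is to leverage the preceding characterization of activity-breaking channels. Since every thermal state is passive, an athermality-breaking channel is a fortiori activity-breaking, so (using $\dim\mc H_S=\dim\mc H_{S'}$) it is already of the form in Eq.~\eqref{eq:ABO}, i.e.\ $\mc N(\rho_S)=\sum_k\Tr\{\rho_S\Gamma_k\}\op{E_k}_{S'}$ for some POVM $\{\Gamma_k\}$ with $\Gamma_k\leq\Gamma_{k'}$ whenever $k>k'$. What remains is to upgrade the weak conclusion ``the output is passive'' to ``the output is thermal for every input'', and to see that this pins each $\Gamma_k$ down to the corresponding Gibbs weight times the identity.

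First I would observe that the output population vector $\vec q(\rho_S)=(\Tr\{\rho_S\Gamma_0\},\dots,\Tr\{\rho_S\Gamma_{d-1}\})$ is a linear function of $\rho_S$, so its range over the convex set $\msc D(S)$ is a convex subset of the probability simplex; by hypothesis this range lies inside the Gibbs curve $\mc G=\{(e^{-\beta E_0}/Z_\beta,\dots,e^{-\beta E_{d-1}}/Z_\beta):\beta\geq 0\}$ of $\hat H_{S'}$, where $Z_\beta=\sum_k e^{-\beta E_k}$. The geometric input is then that $\mc G$ is a strictly curved arc: a short computation shows that its tangent at $\beta$ has components proportional to $q_k(\beta)\big(\langle\hat H_{S'}\rangle_\beta-E_k\big)$, which is genuinely $\beta$-dependent as soon as the spectrum of $\hat H_{S'}$ takes at least three distinct values, so the only convex subsets of $\mc G$ are single points. (This non-degeneracy is needed: if $\hat H_{S'}$ has at most two distinct eigenvalues — in particular for a qubit — then $\mc G$ degenerates to a point or a straight segment and non-constant athermality-breaking channels exist, so the statement should be read for generic Hamiltonians.) Granting this, $\vec q(\rho_S)$ is constant, equal to $(e^{-\beta^\ast E_k}/Z_{\beta^\ast})_k$ for one fixed $\beta^\ast\geq 0$; equivalently $\mc N$ is the constant channel outputting $Z_{\beta^\ast}^{-1}e^{-\beta^\ast\hat H_{S'}}$.

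To finish, $\Tr\{\rho_S\Gamma_k\}=e^{-\beta^\ast E_k}/Z_{\beta^\ast}$ for \emph{all} $\rho_S\in\msc D(S)$ forces $\Gamma_k=\frac{e^{-\beta^\ast E_k}}{Z_{\beta^\ast}}\mathbbm{1}_S$, since a Hermitian operator whose expectation value is the same on every density matrix is proportional to the identity (its traceless part is trace-orthogonal to every difference of states, hence vanishes). The POVM condition $\sum_k\Gamma_k=\mathbbm{1}_S$ is then automatic from $\sum_k e^{-\beta^\ast E_k}=Z_{\beta^\ast}$, and the converse (any channel of this form does output the thermal state $Z_{\beta^\ast}^{-1}e^{-\beta^\ast\hat H_{S'}}$) is immediate. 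I expect the middle step — making precise and proving that the Gibbs family is a curved arc containing no line segment, and correctly delimiting the degenerate low-dimensional exceptions — to be the only real content; everything else is substituting into Eq.~\eqref{eq:ABO} and invoking the one-line linear-algebra fact above.
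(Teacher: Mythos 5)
Your proposal is correct, and it supplies substantially more than the paper does: the paper offers no proof at all, asserting only that the corollary is ``a direct consequence'' of the activity-breaking theorem. It is not a direct consequence. The reduction to Eq.~\eqref{eq:ABO} is indeed immediate (thermal states are passive, so athermality-breaking implies activity-breaking), and the final linear-algebra step ($\Tr\{\rho_S\Gamma_k\}$ constant over all $\rho_S$ forces $\Gamma_k\propto\mathbbm{1}_S$) is standard. But the bridge between them --- that the affine image of the convex set $\msc{D}(S)$ under $\rho_S\mapsto(\Tr\{\rho_S\Gamma_k\})_k$ must be a single point because the Gibbs family $\{Z_\beta^{-1}e^{-\beta\hat H_{S'}}\}_{\beta\ge0}$ contains no nontrivial line segment --- is genuine mathematical content that the paper silently assumes. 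Your identification of the degenerate cases is also correct and exposes that the corollary, as literally stated, is \emph{false} for qubits: there every passive state with $p_0<1$ is thermal for some $\beta\ge0$, so e.g.\ $\Gamma_0=\mathrm{diag}(3/4,1/2)$, $\Gamma_1=\mathrm{diag}(1/4,1/2)$ gives a non-constant athermality-breaking channel not of the claimed form. One refinement: your tangent-vector argument for strict curvature is a little loose (non-constancy of the tangent direction at a point does not by itself exclude a segment elsewhere); a cleaner route is to note that if two distinct thermal states $\rho_{\beta_1},\rho_{\beta_2}$ both lie in the image then so does their midpoint, and the resulting identity $a\,e^{-\beta_1E_k}+b\,e^{-\beta_2E_k}=c\,e^{-\beta_3E_k}$ for three distinct energies contradicts the nonsingularity of the generalized Vandermonde matrix unless $\beta_1=\beta_2$. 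With that substitution your argument is complete for Hamiltonians with at least three distinct eigenvalues, which is the honest scope of the corollary.
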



\section{Relative passive states and relative passivity-preserving quantum operations}
\label{sec:ver-coo}

In order to prepare the grounds for our main result in Sec.~\ref{sec:stat-tran}, let us now introduce a partial order relation between passive states, as well as the notion of quantum operations that preserves it. This partial order, which we call \textit{relative passivity}, provides us with a way of comparing two passive states which is analogous to the comparison between thermal states in terms of temperature. Consider again a quantum system $S$ with a Hamiltonian given by Eq.~\eqref{eq:ham} and consider the set of thermal states, namely $\msc{B}(S):=\{\rho(\beta):=Z_\beta^{-1}e^{-\beta \hat{H}_s}\}_{\beta\geq 0}$. The set $\msc{B}(S)$ is endowed with a natural order : for two states $\rho(\beta)$ and $\rho(\beta')$ in $\msc{B}(S)$, $\rho(\beta)$ is said to be cooler than $\rho(\beta')$ if $\beta\geq \beta'$. [Rigorously, we should say that $\rho(\beta)$ is not hotter than $\rho(\beta')$.] Extending on this, we can define a new partial order on the set of passive states $\msc{P}(S)$.


\subsection{Relative passive states}
\begin{definition}
A state $\rho$ is said to be passive relative to some passive state $\sigma$ if and only if $\left(\sigma^{-1/2}\rho \, \sigma^{-1/2}\right)/\mathrm{Tr}[\rho \, \sigma^{-1}]$ is a passive state. 
\end{definition}
The definition above has the following consequence. 

\begin{proposition}
Any state $\rho$ that is passive relative to some passive state $\sigma$, is itself necessarily a passive state.
\end{proposition}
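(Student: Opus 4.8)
The plan is to exploit the fact that, by the characterization of passive states recalled above, passivity is equivalent to being diagonal in the energy eigenbasis $\{\ket{i}\}$ with a non-increasing spectrum, together with the observation that the congruence $X\mapsto \sigma^{-1/2}X\,\sigma^{-1/2}$ is invertible (inverse $X\mapsto \sigma^{1/2}X\,\sigma^{1/2}$) and preserves diagonality in $\{\ket{i}\}$. So the passivity of the rescaled operator can simply be transported back to $\rho$.

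First I would spell out the spectral forms. Since $\sigma$ is passive, $\sigma=\sum_{i=0}^{d-1} s_i \op{i}$ with $s_0\geq s_1\geq\cdots\geq s_{d-1}$; for $\sigma^{-1/2}$ and $\Tr[\rho\,\sigma^{-1}]$ to be meaningful we take $\sigma$ full rank, $s_i>0$ for all $i$ (otherwise one works on $\supp(\sigma)$ and assumes $\supp(\rho)\subseteq\supp(\sigma)$). By hypothesis $\tilde\rho:=\bigl(\sigma^{-1/2}\rho\,\sigma^{-1/2}\bigr)/\Tr[\rho\,\sigma^{-1}]$ is passive, hence also diagonal in the energy eigenbasis, $\tilde\rho=\sum_{i=0}^{d-1} t_i \op{i}$ with $t_0\geq t_1\geq\cdots\geq t_{d-1}\geq 0$. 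Inverting the rescaling gives $\rho=\Tr[\rho\,\sigma^{-1}]\,\sigma^{1/2}\tilde\rho\,\sigma^{1/2}$, and since $\sigma^{1/2}=\sum_i\sqrt{s_i}\op{i}$ and $\tilde\rho$ are both diagonal in $\{\ket{i}\}$, so is their product: $\rho=\Tr[\rho\,\sigma^{-1}]\sum_{i=0}^{d-1} s_i t_i \op{i}$. Imposing $\Tr\rho=1$ fixes the prefactor (which is legitimately positive since $\rho\geq 0$, $\rho\neq 0$, $\sigma^{-1}>0$ force $\Tr[\rho\,\sigma^{-1}]>0$), yielding $\rho=\sum_{i=0}^{d-1}\frac{s_i t_i}{\sum_j s_j t_j}\op{i}$, manifestly diagonal in the energy eigenbasis.

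It then remains only to check that this spectrum is non-increasing, which is the single genuine — and still elementary — step: the termwise product of two non-negative non-increasing sequences is non-increasing, because for $i>j$ one has $s_i t_i\leq s_j t_i\leq s_j t_j$ using $s_i\leq s_j$, $t_i\leq t_j$ and non-negativity. Hence $\{s_i t_i/\sum_j s_j t_j\}\in\mathcal{S}(d)$ and $\rho\in\msc{P}(S)$. I do not anticipate a serious obstacle: the only points deserving a word of care are the invertibility/support assumption on $\sigma$ (implicit in the definition) needed to make $\sigma^{-1/2}$ and $\Tr[\rho\,\sigma^{-1}]$ well defined, and the triviality, once diagonality is established, that the product of two sorted non-negative sequences is sorted.
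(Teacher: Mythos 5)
Your proof is correct and is essentially the paper's own argument read in the opposite direction: the paper imposes passivity on $\sigma^{-1/2}\rho\,\sigma^{-1/2}$ to force $\rho$ diagonal with $r_i/p_i$ non-increasing and then multiplies back by the non-increasing $p_i$, while you invert the congruence and observe that the termwise product of two non-negative non-increasing sequences is non-increasing — the identical inequality. Your explicit remark about the support/full-rank assumption on $\sigma$ is a small point of added care that the paper leaves implicit.
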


\begin{proof}
Since $\sigma$ is a passive state, we can write $\sigma=\sum_{i=0}^{d-1} p_i \op{i}$ with $p_i\geq p_{i+1}$ for all $0\leq i\leq d-2$.  Let $\rho=\sum_{i,j=0}^{d-1}\rho_{ij}\ket{i}\bra{j}$ be an arbitrary state. Then,
\begin{align}
\sigma^{-1/2}\rho \, \sigma^{-1/2}=\sum_{i,j=0}^{d-1}p_i^{-1/2}\rho_{ij} ~ p_j^{-1/2}\ket{i}\!\bra{j}.
\end{align}
Since $\left(\sigma^{-1/2}\rho \, \sigma^{-1/2}\right)/\mathrm{Tr}[\rho \, \sigma^{-1}]$ is required to be passive, we have $\rho_{ij}=r_i\delta_{ij}$ and
$r_{i} / p_{i}  \geq r_{i+1} / p_{i+1}$, for all $0\leq i \leq d-2$.
This implies $r_{i}\geq r_{i+1}$ as $p_{i}\geq p_{i+1}$ for all $0\leq i \leq d-2$; therefore, $\rho$ is a passive state.
\end{proof}

From the above proposition, it is clear that if a passive state $\rho \! \coloneqq \! \rho(\vec{r})$ is passive relative to another passive state $\sigma \! \coloneqq \! \sigma(\vec{p})$, then we have the condition 
\begin{align}
r_i/r_j\geq p_i/p_j \geq 1 , \qquad \forall i < j.
\end{align}
Intuitively, $\rho(\vec{r})$ is ``more passive'' than $\sigma(\vec{p})$ in the sense that the components of $\vec{r}$ decay faster than those of $\vec{p}$ (see Fig. \ref{fig:virtually-cooler}). Here, we use the convention that $a/0=\infty$ whenever $a\neq 0$. We note that if some $p_i$ is zero, then we have $p_j=0$ for all $j > i$ since the state $\sigma$ is passive. This implies that the corresponding $r_i$'s must vanish as well for $\rho$ to be a passive state relative to $\sigma$.

\begin{figure}[htbp!]
\centering
\includegraphics[width=\columnwidth]{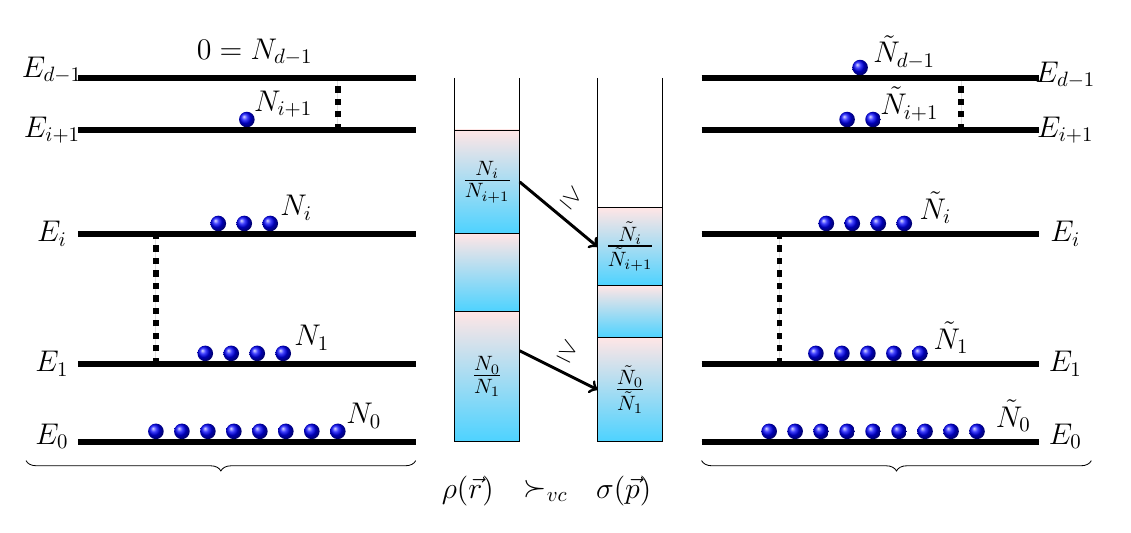}
\caption{Schematic of the notion of virtually cooler passive states. In the schematic, $\{E_i\}_{i=0}^{d-1}$ is the set of energy eigenvalues of the Hamiltonian of the system. $N_i$ and $\tilde{N}_i$ are populations in energy eigenstates $\ket{i}$ corresponding to the passive states $\rho(\vec{r})$ and $\sigma(\vec{p})$, respectively. Now, if $\frac{r_i}{r_{i+1}}:=\frac{N_i}{N_{i+1}}\geq \frac{\tilde{N}_i}{\tilde{N}_{i+1}}:=\frac{p_i}{p_{i+1}}$ for all $i=0,\cdots, d-2$, we say that $\rho(\vec{r})$ is a virtually cooler state than $\sigma(\vec{p})$, i.e., $\rho(\vec{r})\succ_{vc}\sigma(\vec{p})$.}
\label{fig:virtually-cooler}
\end{figure}

Furthermore, the notion of relative passivity on the set of passive states is in close connection with the notion of ``being cooler than'' on the set of thermal states (see Fig. \ref{fig:virtually-cooler}). To see this, let us use the concept of virtual temperatures \cite{Brunner2012, Skrzypczyk2015, Sparaciari2017}. Consider a passive state $\rho(\vec{r})=\sum_{i=0}^{d-1}r_i\op{i} \in \msc{P}(S)$ with $r_i\geq r_{i+1}$ for all $0\leq i\leq d-2$. We can define $d\choose 2$ virtual (inverse) temperatures $\beta_{i,j}$ for all pairs of probabilities appearing in $\rho(\vec{r})$ as follows,
\begin{align}
\beta_{i,j}:=(E_j-E_i)^{-1} \ln \left(\frac{r_i}{r_{j}}\right)  , \qquad \forall i <  j,
\end{align}
where $\beta_{i,j}\geq 0$ as the state $\rho(\vec{r})$ is passive. Similarly, for another passive state $\sigma(\vec{p})=\sum_{i=0}^{d-1}p_i\op{i} \in \msc{P}(S)$ with $p_i\geq p_{i+1}$ for all $0\leq i\leq d-2$, we define the virtual (inverse) temperatures $\beta'_{i,j}$ as
\begin{align}
\beta'_{i,j}:=(E_j-E_i)^{-1} \ln \left(\frac{p_i}{p_{j}}\right)  , \qquad  \forall i <  j.
\end{align}
where $\beta'_{i,j}\geq 0$. Now, expressing the condition that $\rho(\vec{r})$ is passive relative to $\sigma(\vec{p})$ is equivalent to
\begin{align}
\label{eq:virt-cool}
\beta_{i,j} \geq \beta'_{i,j} \geq 0, \qquad \forall i <  j.
\end{align}
We can interpret this condition by saying that $\rho(\vec{r})$ is {\it virtually cooler} than $\sigma(\vec{p})$, in the sense that 
all $d\choose 2$ virtual temperatures $\beta_{i,j}^{-1}$ of $\rho(\vec{r})$ are lower than those of  $\sigma(\vec{p})$ (see Fig. \ref{fig:virtually-cooler}). Thus the partial order relation induced by relative passivity  expresses the physical condition of ``being virtually cooler than'' on the set of passive states, and we denote it by 
\begin{equation}
\rho(\vec{r})\succ_{vc}\sigma(\vec{p}).
\end{equation}
In the special case of thermal states, all virtual temperatures coincide and the condition $\rho(\vec{r})\succ_{vc}\sigma(\vec{p})$ boils down to the condition that $\rho(\vec{r})$ is cooler than $\sigma(\vec{p})$. It is easy to see that the relation $\succ_{vc}$ is a partial order, i.e., (1) $\rho(\vec{r})\succ_{vc}\rho(\vec{r})$ (reflexivity). (2) $\rho(\vec{r})\succ_{vc}\sigma(\vec{p})$ and $\sigma(\vec{p})\succ_{vc}\eta(\vec{s})$ imply $\rho(\vec{r})\succ_{vc}\eta(\vec{s})$ (transitivity). (3) If $\rho(\vec{r})\succ_{vc}\sigma(\vec{p})$ and $\sigma(\vec{p})\succ_{vc}\rho(\vec{r})$, then $\rho(\vec{r})=\sigma(\vec{p})$ (antisymmetry). We also note that the relation $\succ_{vc}$ enables a comparison between passive states but is inadequate to compare a passive state with some nonpassive state.  As a side remark, let us mention that the relation $\succ_{vc}$ appears in mathematical statistics under the name of ``likelihood ratio order'' and has numerous applications including the field of statistical inference, economy and optimal scheduling problems \cite{Shaked2007}. 

We note that the notion of being virtually cooler $\succ_{vc}$ can be connected to Hoffman majorization $\succ_h$, which allows us  in particular to compare the energy of the two states.


\begin{proposition}
\label{prop:vc-implies-Hoffman}
Consider any two passive states $\rho(\vec{r})$ and $\sigma(\vec{p})$. If $\rho(\vec{r})\succ_{vc}\sigma(\vec{p})$, then $\vec{r}\succ_h \vec{p}$ and $E(\rho(\vec{r}))\leq E(\sigma(\vec{p}))$, where $E(\rho)\coloneqq \mathrm{Tr}[\rho\hat{H}_S]$ denotes energy of the state $\rho$.
\end{proposition}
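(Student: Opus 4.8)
The plan is to prove the two claims separately but with a common engine: the hypothesis $\rho(\vec r)\succ_{vc}\sigma(\vec p)$ says exactly that the ratios $r_i/r_j$ dominate $p_i/p_j$ for all $i<j$, equivalently (as established just before the statement) $r_i/r_{i+1}\geq p_i/p_{i+1}\geq 1$ for $i=0,\dots,d-2$. First I would show $\vec r\succ_h\vec p$ by invoking Theorem~\ref{th:hoff2}: since both $\vec r$ and $\vec p$ are non-increasing probability vectors, it suffices to verify that $\vec r$ majorizes $\vec p$ in the ordinary (unrearranged) sense, i.e. $\sum_{i=0}^k r_i\geq\sum_{i=0}^k p_i$ for all $k=0,\dots,d-2$ (equality at $k=d-1$ being automatic). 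Fix $k$; I would argue by contradiction, assuming $\sum_{i=0}^k r_i<\sum_{i=0}^k p_i$, hence necessarily $\sum_{i=k+1}^{d-1}r_i>\sum_{i=k+1}^{d-1}p_i$. Combined with the pointwise ratio condition, which forces $r_i p_j\geq r_j p_i$ whenever $i\leq k<j$, one derives a contradiction: multiplying through and summing, $\left(\sum_{i=0}^k r_i\right)\left(\sum_{j=k+1}^{d-1}p_j\right)\geq\left(\sum_{i=0}^k p_i\right)\left(\sum_{j=k+1}^{d-1}r_j\right)$, which clashes with the two strict inequalities above. This yields $\vec r\prec^{-1}$—rather, $\vec p\prec\vec r$—hence $\vec r\succ_h\vec p$ by Theorem~\ref{th:hoff2}.

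For the energy comparison, I would use the majorization just obtained together with the fact that the energies $E_i$ are non-decreasing in $i$. Writing $E(\rho(\vec r))-E(\sigma(\vec p))=\sum_{i=0}^{d-1}E_i(r_i-p_i)$ and performing an Abel summation (summation by parts), using $\sum_{i=0}^{d-1}(r_i-p_i)=0$, one rewrites this as $-\sum_{k=0}^{d-2}(E_{k+1}-E_k)\left(\sum_{i=0}^k(r_i-p_i)\right)$. Each factor $E_{k+1}-E_k\geq 0$ by the energy ordering convention, and each partial sum $\sum_{i=0}^k(r_i-p_i)\geq 0$ by the majorization established in the first part, so the whole expression is $\leq 0$, giving $E(\rho(\vec r))\leq E(\sigma(\vec p))$. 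Alternatively, since $\vec p\prec\vec r$ there is a doubly-stochastic $D$ with $\vec p=D\vec r$, and $\Tr[\hat H_S\,\sigma(\vec p)]$ is then an average of energies that cannot drop below the passive-ordered value $\Tr[\hat H_S\,\rho(\vec r)]$; I would present whichever route is cleaner.

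The main obstacle is the first part — converting the \emph{local} ratio inequalities $r_i/r_{i+1}\geq p_i/p_{i+1}$ into the \emph{cumulative} majorization inequalities $\sum_{i=0}^k r_i\geq\sum_{i=0}^k p_i$. The contradiction argument above is essentially the standard proof that the likelihood-ratio order implies the stochastic (here, majorization) order, but one must be careful about zero entries: if some $p_i=0$ then $p_j=0$ for all $j>i$ and, as noted in the text, the corresponding $r_i$ vanish too, so the ratio manipulations should be carried out on the support and the boundary cases checked directly. Once the majorization is in hand, the energy inequality is a routine Abel-summation consequence, so I expect that half of the proof to be short.
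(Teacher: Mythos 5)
Your proposal is correct and follows essentially the same route as the paper: you convert the local ratio conditions $r_i p_j \geq p_i r_j$ into the cumulative inequalities $\sum_{i=0}^k r_i \geq \sum_{i=0}^k p_i$ (the paper does this by summing reciprocals and renormalizing, you by cross-multiplying and summing — the same computation), and then both you and the paper obtain the energy inequality by the identical Abel summation. The only cosmetic difference is that your multiplicative formulation sidesteps the division-by-zero issue that the paper's version glosses over, which is a minor tidiness gain rather than a different argument.
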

\begin{proof}
From the partial order $\rho(\vec{r})\succ_{vc}\sigma(\vec{p})$, if we choose $k$ such that $0\leq k \leq d-1$, we have
\begin{align*}
\frac{\sum_{i=0}^k r_i}{r_l} \geq \frac{\sum_{i=0}^k p_i}{p_l}, \qquad \forall ~(k+1)\leq l \leq (d-1).
\end{align*}
Inverting the above inequality and summing over $l$, we get
\begin{align*}
\frac{\sum_{l=k+1}^{d-1}r_l}{\sum_{i=0}^k r_i} \leq \frac{\sum_{l=k+1}^{d-1}p_l}{\sum_{i=0}^k p_i}.
\end{align*}
By adding one on both sides, using that $\sum_{i=0}^{d-1}r_i=\sum_{i=0}^{d-1}p_i=1$, and again inverting the inequality, we obtain
\begin{align*}
\sum_{i=0}^k r_i \geq \sum_{i=0}^k p_i , \qquad \forall \, 0\leq k \leq d-1  .
\end{align*}
The above inequality implies that $\vec{r}\succ_h \vec{p}$. Now
\begin{align*}
E(\sigma(\vec{p}))- E(\rho(\vec{r}))&=\sum_{i=0}^{d-1}E_i(p_i-r_i)\\
&=\sum_{k=0}^{d-1}(E_k- E_{k+1})\sum_{i=0}^{k}(p_i-r_i)\\
&\geq 0,
\end{align*}
with the convention $E_d=0$.
The last inequality follows from the majorization condition and the fact that $E_k\leq E_{k+1}$ for all $0\leq k\leq d-2$. The term with $k=d-1$ vanishes as the vectors $\vec{r}$ and $\vec{p}$ are normalized. This concludes the proof of the proposition.
\end{proof}

It is intuitive to see that a virtually cooler state (compared to a reference state) necessarily has a lower energy (compared to this reference state). 
Also, since $\rho(\vec{r})\succ_{vc}\sigma(\vec{p})$ implies $\vec{r}\succ_h\vec{p}$, not only the energy function is a monotone but also all Schur-concave functions are monotones. Further, the notion of virtually cooler states can be given a thermodynamical interpretation based on the setup used to demonstrate the working of a quantum refrigerator \cite{Brunner2012, Skrzypczyk2015, Silva2016}. As we prove in Appendix \ref{append:thermo}, a virtually cooler passive state can indeed be used to cool an external system to a further extent than the passive state it is compared to.

Note that the converse of Proposition \ref{prop:vc-implies-Hoffman} does not hold in general, except for qubits. For example, in a qutrit case, $\vec{r}=(0.8, 0.18, 0.02)^T$ and $\vec{p}=(0.75, 0.15, 0.1)^T$ are two decreasing vectors such that $\vec{r}\succ_h \vec{p}$
 while $\rho(\vec{r})$ is not virtually cooler than $\sigma(\vec{p})$.

Finally, let us mention the following two states which are of special importance with respect to the partial order $\succ_{vc }$ relation: (1) the ground state, which is virtually cooler than any other passive state, and (2) the maximally mixed state, with respect to which every other passive state is virtually cooler. In particular, we have the following proposition.

\begin{proposition}
\label{prop:max-mix}
The set of virtually cooler states with respect to some fixed passive state $\sigma(\vec{p})=\sum_{i=0}^{d-1}p_i\op{i}$ is a convex set. Moreover, the set of virtually cooler states with respect to the maximally mixed state is equal to the set of passive states.
\end{proposition}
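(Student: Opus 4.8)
The claim has two parts, and I would handle them separately. For the first part (convexity of the set of virtually cooler states relative to a fixed $\sigma(\vec{p})$), the plan is to translate the condition $\rho(\vec{r}) \succ_{vc} \sigma(\vec{p})$ into a family of \emph{linear} inequalities in the components $r_i$. From the definition of relative passivity (equivalently, from Eq.~\eqref{eq:virt-cool} or from the ratio form $r_i/r_j \geq p_i/p_j$ for $i<j$), rewrite each condition $r_i/r_{i+1} \geq p_i/p_{i+1}$ as $p_{i+1} r_i - p_i r_{i+1} \geq 0$. This is a homogeneous linear inequality in $\vec{r}$, so the set of $\vec{r}$ satisfying all of them, intersected with the probability simplex and with the non-increasing condition $r_i \geq r_{i+1}$ (which is itself implied, as Proposition~2's analogue shows), is an intersection of half-spaces with an affine hyperplane — hence convex. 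I would then note that a convex combination $\lambda \rho(\vec{r}) + (1-\lambda)\rho(\vec{r}\,')$ is again diagonal in the energy eigenbasis with component vector $\lambda \vec{r} + (1-\lambda)\vec{r}\,'$, and linearity of the inequalities closes the argument.

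For the second part, I would specialize $\sigma(\vec{p})$ to the maximally mixed state, so $p_i = 1/d$ for all $i$, and compute what $\rho(\vec{r}) \succ_{vc} \sigma(\vec{p})$ then demands. The conditions $p_{i+1} r_i - p_i r_{i+1} \geq 0$ collapse to $r_i - r_{i+1} \geq 0$, i.e., $\vec{r}$ is a non-increasing probability vector — which is exactly the condition for $\rho(\vec{r})$ to be a passive state. Conversely, every passive state $\rho(\vec{r})$ has $\vec{r}$ non-increasing, hence satisfies all these inequalities, hence is virtually cooler than the maximally mixed state. One should also check the degenerate-support caveat: since all $p_i = 1/d > 0$, the convention $a/0 = \infty$ never creates an obstruction, and the relative-passivity quotient $(\sigma^{-1/2}\rho\,\sigma^{-1/2})/\mathrm{Tr}[\rho\sigma^{-1}]$ is well-defined and equals $\rho$ itself up to normalization, so passivity of the quotient is literally passivity of $\rho$. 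This gives set equality.

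The main obstacle I anticipate is essentially bookkeeping rather than conceptual: one must be careful that the ratio conditions $r_i/r_j \geq p_i/p_j$ for \emph{all} pairs $i<j$ are genuinely equivalent to just the consecutive ones $r_i/r_{i+1} \geq p_i/p_{i+1}$ (they are, by telescoping the products, but this should be stated), and one must handle the possibility of vanishing components cleanly so that "dividing by $r_j$" is legitimate — here the earlier remark that $p_i = 0 \Rightarrow p_j = 0$ for $j>i$ and the corresponding forced vanishing of $r_i$ is what keeps everything consistent, though for the maximally mixed case this issue evaporates. Beyond that, the proof is a short translation of the partial order into linear constraints followed by the observation that "non-increasing" and "passive" coincide.
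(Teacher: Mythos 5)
Your proof is correct, and it is worth separating the two halves when comparing it with the paper. For the convexity claim you and the paper do essentially the same thing: the defining conditions $r_i/p_i \geq r_{i+1}/p_{i+1}$ are homogeneous linear inequalities in $\vec{r}$, and the paper simply verifies them directly for a convex combination $t\,\rho(\vec{r})+\bar{t}\,\rho(\vec{s})$ rather than invoking the intersection-of-half-spaces picture — the same argument in different clothing. For the second claim your route is genuinely more direct: you observe that with $p_i=1/d$ the conditions collapse to $r_i\geq r_{i+1}$, which together with diagonality \emph{is} the definition of passivity, so both inclusions drop out of one computation. The paper instead checks that each extremal passive state $\sigma_k$ is virtually cooler than $\mathbb{I}/d$ and then uses the convexity just established to conclude that every passive state is virtually cooler than $\mathbb{I}/d$, leaving the reverse inclusion to its earlier proposition that any state passive relative to a passive state is itself passive. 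Your version is shorter and self-contained; the paper's version buys a small illustration of how the extremal decomposition of the passive set interacts with the partial order. Your attention to the telescoping equivalence between consecutive and all-pairs ratio conditions, and to vanishing components, is appropriate — and, as you correctly note, both issues evaporate when the reference state is maximally mixed since all $p_i>0$.
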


\begin{proof}
Let $\rho(\vec{r})=\sum_{i=0}^{d-1}r_i\op{i}$ and $\rho(\vec{s})=\sum_{i=0}^{d-1}s_i\op{i}$ be two virtually cooler states than $\sigma(\vec{p})$. To show that the set of all virtually cooler states than $\sigma(\vec{p})$ is a convex set, it is sufficient to show that $t\, \rho(\vec{r})+\bar{t}\, \rho(\vec{s})\succ_{vc }\sigma(\vec{p})$, where $0\leq t\leq 1 $ and $\bar{t}=1-t$. We have
\begin{align*}
t\, \rho(\vec{r})+\bar{t}\, \rho(\vec{s})=\sum_{i=0}^{d-1}(t \, r_i+\bar{t} \, s_i)\op{i}.
\end{align*}
and
\begin{align*}
\frac{(t \, r_i+\bar{t} \, s_i)}{p_i}\geq \frac{(t \, r_{i+1}+\bar{t} \, s_{i+1})}{p_{i+1}}
\end{align*}
since $\rho(\vec{r})\succ_{vc}\sigma(\vec{p})$ and $\rho(\vec{s})\succ_{vc}\sigma(\vec{p})$. This concludes the proof of the first part.

For the second part, notice that all the extremal states of the set of passive states of a qu$d$it system (including the maximally mixed state) are virtually cooler than the maximally mixed state. So, from the convexity of the set of the virtually cooler states, it follows that all passive states are virtually cooler than the maximally mixed state. This concludes the proof of the second part of the proposition.  
\end{proof}

We will see in Sec.~\ref{sec:stat-tran} that the notion of relative passivity and virtually cooler states plays a critical role in the transformation of active states under incoherent operations. Before turning to this result, we need to define the notion of quantum operations that preserve relative passivity (in analogy with the quantum operations that preserve passivity).

\subsection{Relative passivity-preserving operations}
A \textit{relative passivity-preserving operation} (RPPO) is a quantum channel that is defined with respect to two fixed passive states. Given two fixed passive states $\sigma(\vec{p})$ and $\sigma(\vec{q})$, a quantum channel defines a  RPPO if it maps all passive states that are virtually cooler than $\sigma(\vec{p})$ into passive states that are virtually cooler than $\sigma(\vec{q})$. In other words, for all states $\rho(\vec{r})$ such that $\rho(\vec{r})\succ_{vc}\sigma(\vec{p})$, a quantum channel $\Lambda$ is a RPPO if $\Lambda(\rho(\vec{r}))\succ_{vc} \sigma(\vec{q})$. Let us denote the set of all RPPOs with respect to passive states $\sigma(\vec{p})$ and $\sigma(\vec{q})$ by $\mathfrak{L}_{p,q}$ (see Fig. \ref{fig:rppos}). By definition, the set $\mathfrak{L}_{p,q}$ of RPPOs preserve the partial order (on the set of passive states) of being virtually cooler than the reference passive state $\sigma(\vec{p})$ at input and $\sigma(\vec{q})$ at output.

\begin{figure}[htbp!]
\centering
\includegraphics[width=0.75\columnwidth]{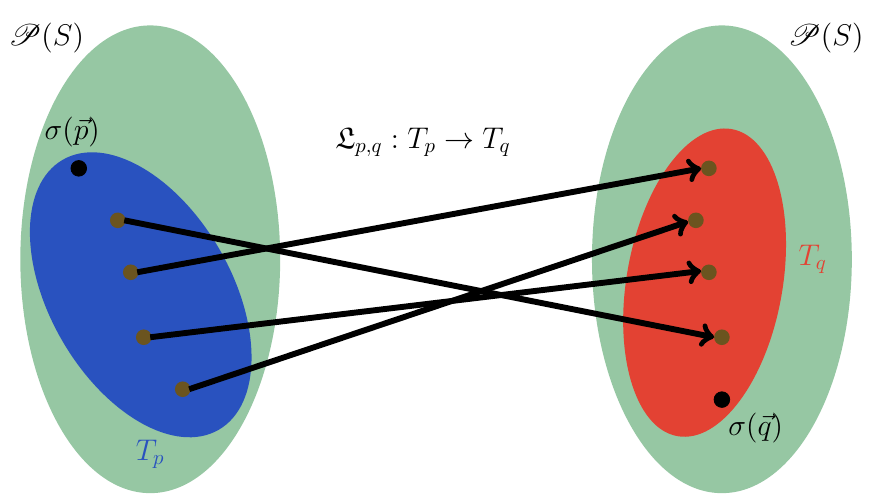}
\caption{A schematic for RPPOs. In the schematic, $\mathscr{P}(S)$ is the set of all passive states. $\sigma(\vec{p})$ and $\sigma(\vec{q})$ are two fixed passive states. $T_p$ is the set of all passive states that are virtually cooler than $\sigma(\vec{p})$ and  $T_q$ is the set of all passive states that are virtually cooler than $\sigma(\vec{q})$. That is, $T_p:=\{\rho\in\mathscr{P}(S): \rho\succ_{vc} \sigma(\vec{p})\}$ and $T_q:=\{\rho\in\mathscr{P}(S): \rho\succ_{vc} \sigma(\vec{q})\}$. Then $\mathfrak{L}_{p,q}$ denotes RPPOs and is the set of all quantum channels from the set $T_p$ into the set $T_q$.}
\label{fig:rppos}
\end{figure}

Choosing the passive states $\sigma(\vec{p})$ and $\sigma(\vec{q})$ as being the maximally mixed state $\mathbb{I}/d$ in the set of operations $\mathfrak{L}_{p,q}$ is of particular relevance as we have following proposition.
\begin{proposition}
For the choice $\sigma(\vec{p})= \sigma(\vec{q}) = \mathbb{I}/d$, the set $\mathfrak{L}_{p,q}$ is equal to the set of passivity-preserving operations.
\end{proposition}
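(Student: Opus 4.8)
The plan is to reduce the statement entirely to Proposition \ref{prop:max-mix}, which already pins down the set of states virtually cooler than the maximally mixed state. First I would unfold the definition of $\mathfrak{L}_{p,q}$ in the special case $\sigma(\vec p)=\sigma(\vec q)=\mathbb I/d$: a quantum channel $\Lambda$ lies in this set precisely when it maps $T_p:=\{\rho\in\msc P(S):\rho\succ_{vc}\mathbb I/d\}$ into $T_q:=\{\rho\in\msc P(S'):\rho\succ_{vc}\mathbb I/d\}$.

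Next I would invoke the second part of Proposition \ref{prop:max-mix}: the set of all states virtually cooler than the maximally mixed state equals the full set of passive states. Applying this on the input side gives $T_p=\msc P(S)$, and applying it on the output side gives $T_q=\msc P(S')$ (it is understood here that $\dim\mc H_S=\dim\mc H_{S'}=d$, so that ``$\mathbb I/d$'' denotes the maximally mixed state on both ends; otherwise one simply replaces $d$ by the relevant dimension on each side).

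Finally I would compare the resulting condition with the definition of a passivity-preserving operation: a channel maps $T_p=\msc P(S)$ into $T_q=\msc P(S')$ if and only if it sends every passive input state to a passive output state, i.e.\ $\Lambda(\rho_S)\in\msc P(S')$ for all $\rho_S\in\msc P(S)$, which is exactly the defining property of a PPO. Hence $\mathfrak{L}_{p,q}$ with this choice of reference states coincides with the set of passivity-preserving operations. I do not expect any genuine obstacle here — once Proposition \ref{prop:max-mix} is available the argument is a one-line identification of two sets defined by verbally different but logically identical conditions; the only point deserving a word of care is the implicit assumption that input and output systems share the same dimension so that ``$\mathbb I/d$'' is meaningful on both sides.
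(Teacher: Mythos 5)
Your argument is exactly the paper's: both reduce the claim to the second part of Proposition \ref{prop:max-mix} (passivity $\Leftrightarrow$ being virtually cooler than $\mathbb{I}/d$) and then identify the two defining conditions; you merely spell out the set-theoretic unfolding and flag the equal-dimension caveat, which the paper leaves implicit. Correct, same route.
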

\begin{proof}
The proof of the above proposition follows from Proposition \ref{prop:max-mix} (the passivity condition is equivalent to being virtually cooler than the maximally mixed state) and from the definition of RPPO.
\end{proof}

Thus, with this choice of $\sigma(\vec{p})$ and $\sigma(\vec{q})$, the RPPOs are equivalent to PPOs. For other choices, however, RPPOs are not necessarily passivity-preserving operations, except in their action on the ground state of the system. Indeed, it is clear that the ground state is virtually cooler than any passive state $\sigma(\vec{p})$, therefore, under the set $\mathfrak{L}_{p,q}$, the ground state is always mapped onto a state that is virtually cooler than the passive state $\sigma(\vec{q})$, hence it is passive. Thus, if the input state is the ground state, then the output state of any RPPO is always passive.

In the following, we show that the RPPOs are incoherent operations, that is, they preserve incoherent states (diagonal states in a reference basis which is the energy eigenbasis).

\begin{proposition}
For any choice of passive states $\sigma(\vec{p})$ and $\sigma(\vec{q})$, any relative passivity-preserving operation $\Lambda\in \mathfrak{L}_{p,q}$ is an incoherent operation.
\end{proposition}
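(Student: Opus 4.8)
The plan is to adapt the argument given above for passivity-preserving operations, with one essential modification. In that earlier proof one fed the extremal passive states $\sigma_k=\frac{1}{k+1}\sum_{i=0}^{k}\op{i}$ into the channel and exploited passivity of the outputs; here, however, these states are generically \emph{not} virtually cooler than $\sigma(\vec{p})$, so they need not lie in the set $T_p$ on which an RPPO is constrained. I would therefore use instead the ``$\vec{p}$-weighted truncations'' of the reference state,
\[
\sigma^{(p)}_k:=\frac{1}{P_k}\sum_{i=0}^{k}p_i\op{i},\qquad P_k:=\sum_{i=0}^{k}p_i,\qquad k=0,\dots,d-1,
\]
working under the hypothesis that $\sigma(\vec{p})$ has full support ($p_i>0$ for all $i$), which is the case of interest and is needed for the statement to hold. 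The first step is to verify that each $\sigma^{(p)}_k$ is a passive state that is virtually cooler than $\sigma(\vec{p})$: indeed the ratios $(\sigma^{(p)}_k)_i/p_i$ equal $1/P_k$ for $i\le k$ and $0$ for $i>k$, hence form a non-increasing sequence, which is precisely the condition $\sigma^{(p)}_k\succ_{vc}\sigma(\vec{p})$.

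Next, since $\Lambda\in\mathfrak{L}_{p,q}$, the outputs $\gamma_k:=\Lambda(\sigma^{(p)}_k)$ satisfy $\gamma_k\succ_{vc}\sigma(\vec{q})$, hence are passive (any state passive relative to a passive state is itself passive), hence diagonal in the energy eigenbasis. I would then use the telescoping identity $p_k\op{k}=P_k\,\sigma^{(p)}_k-P_{k-1}\,\sigma^{(p)}_{k-1}$ for $k\ge1$ (with $p_0\op{0}=P_0\,\sigma^{(p)}_0$). Applying the linear map $\Lambda$ and dividing by $p_k\neq0$,
\[
\Lambda(\op{k})=\frac{1}{p_k}\bigl(P_k\,\gamma_k-P_{k-1}\,\gamma_{k-1}\bigr),
\]
which, being a real linear combination of energy-diagonal matrices, is itself energy-diagonal. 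Finally, an arbitrary incoherent state $\rho=\sum_k\mu_k\op{k}$ is mapped, by linearity, to $\Lambda(\rho)=\sum_k\mu_k\Lambda(\op{k})$, which is again diagonal; hence $\Lambda$ sends incoherent states to incoherent states, i.e.\ it is an incoherent operation.

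The only genuinely non-routine point is the first one: the realization that, although the extremal passive states $\{\sigma_k\}$ used in the PPO proof are not available here, the family $\{\sigma^{(p)}_k\}$ does lie in $T_p$, and its affine hull still exhausts all energy-diagonal states, so the telescoping recovers $\Lambda(\op{k})$ at every level. The full-support hypothesis on $\sigma(\vec{p})$ is also doing real work: were $\sigma(\vec{p})$ supported only on $\{0,\dots,m\}$ with $m<d-1$, the set $T_p$ would be confined to that subspace and $\Lambda$ would be unconstrained -- hence not necessarily incoherent -- on the remaining energy levels. Everything else (passivity, and hence diagonality, of states virtually cooler than $\sigma(\vec{q})$, together with linearity of $\Lambda$) follows immediately from the definitions and the earlier propositions.
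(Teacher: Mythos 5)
Your proof is correct, but it follows a genuinely different route from the paper's. The paper proceeds by induction: it first observes that $\Lambda(\op{0})$ is incoherent (the ground state being virtually cooler than every passive state), assumes incoherence of $\Lambda(\op{i})$ for $i\le d-2$, and then either truncates to the support of $\sigma(\vec{p})$ (when $p_{d-1}=0$) or feeds in the perturbed state with $r_0=p_0+\epsilon$, $r_{d-1}=p_{d-1}-\epsilon$, so that the identity $\Lambda(\rho(\vec{r}))-\Lambda(\sigma(\vec{p}))=\epsilon\left[\Lambda(\op{0})-\Lambda(\op{d-1})\right]$ forces $\Lambda(\op{d-1})$ to be diagonal. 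You instead transplant the paper's proof of the analogous statement for PPOs, replacing the uniform truncations $\sigma_k$ by the $\vec{p}$-weighted truncations $\sigma^{(p)}_k$ --- which you correctly verify are passive and satisfy $\sigma^{(p)}_k\succ_{vc}\sigma(\vec{p})$ --- and then telescope. This recovers $\Lambda(\op{k})$ for every level $k$ in one stroke, with no induction and no perturbation argument, and it makes transparent exactly which input states are doing the work; the paper's version, by contrast, isolates the degenerate-support case explicitly but leaves the intermediate levels $1\le i\le d-2$ of its induction hypothesis somewhat implicit. Your closing caveat about full support is well taken and not a defect of your argument: if $\supp\sigma(\vec{p})$ is a proper subspace, the set $T_p$ is confined to it and the RPPO condition places no constraint on $\Lambda$ off that subspace, so incoherence can only be concluded there --- which is precisely what the paper's Case~1 implicitly concedes by restricting to the $(d-1)$-dimensional space.
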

\begin{proof}
We have already seen that for $\Lambda\in\mathfrak{L}_{p,q}$, $\Lambda(\op{0})$ is an incoherent state. We prove the proposition by induction. Let us assume that $\Lambda(\op{i})$ is incoherent for $i=0,\cdots,d-2$. Then following two cases arise.\\
{\bf Case 1:} The state $\sigma(\vec{p})$ is such that $p_{d-1}=0$. In this case all the states $\rho(\vec{r})$ that are virtually cooler than $\sigma(\vec{p})$ must have $r_{d-1}=0$. Therefore  we can effectively consider the transformations on the $(d-1)$ dimensional Hilbert space and by the inductive assumption $\Lambda(\op{i})$ is incoherent for $i=0,\cdots,d-2$, the operation $\Lambda$ is incoherent.\\
{\bf Case 2:} The state $\sigma(\vec{p})$ is such that $p_{d-1}\neq0$. Let us consider a state $\rho(\vec{r})$ such that $r_i=p_i$ for $1\leq i\leq (d-2)$, $r_{0}=p_{0}+\epsilon$ and $r_{d-1}=p_{d-1}-\epsilon$ with $\epsilon>0$. Thus $\rho(\vec{r})\succ_{vc}\sigma(\vec{p})$. By definition, $\Lambda(\rho(\vec{r}))$ is a passive, therefore, incoherent state. Then,
\begin{align*}
&\Lambda(\rho(\vec{r}))-\Lambda(\sigma(\vec{p}))\\
&=\epsilon\left[\Lambda(\op{0})-\Lambda(\op{d-1})\right].
\end{align*}
Since $\Lambda(\op{0})$ and LHS of above equation both are incoherent, we get that $\Lambda(\op{d-1})$ is also incoherent, i.e., diagonal in the energy eigenbasis. This concludes the proof of the proposition.
\end{proof}

Let us now present an instructive example of a RPPO for the qutrit case.  Let us choose $\sigma(\vec{p})=\mathrm{diag}(p_0,p_1,p_2)$ and $\sigma(\vec{q})=\mathrm{diag}(q_0,q_1,q_2)$, where $p_0\geq p_1\geq p_2>0$, $q_0\geq q_1\geq q_2>0$, and  $\sum_{i=0}^2p_i=\sum_{i=0}^2q_i=1$. Further, choose  $p_0=p_1=(q_0+q_1)/2$ and $p_2=q_2$, so that $\vec{p}\prec_{h}\vec{q}$. Then, the quantum channel $\Lambda$ that is defined with Kraus operators $K_1$ and $K_2$,
\begin{align*}
&K_1=\begin{pmatrix}
\sqrt{\frac{q_0}{2p_0}} & 0 &0\\
0 & \sqrt{\frac{q_1}{2p_1}} & 0\\
0 & 0 & \sqrt{\frac{q_2}{2p_2}}
\end{pmatrix}, \\
&K_2=\begin{pmatrix}
0 & \sqrt{\frac{q_0}{2p_1}} &0\\
\sqrt{\frac{q_1}{2p_0}} & 0 & 0\\
0 & 0 & \sqrt{\frac{q_2}{2p_2}}
\end{pmatrix},
\end{align*}
is RPPO with respect to passive states $\sigma(\vec{p})$ and $\sigma(\vec{q})$, as we show in the following. That the map $\Lambda$ is trace preserving follows from the relation between $\vec{p}$ and $\vec{q}$, namely
\begin{align*}
\sum_{i=1}^{2}K_i^\dagger K_i &= \frac{q_0 + q_1}{2 p_0} \op{0}  +\frac{q_0 + q_1}{2 p_1} \op{1} + \frac{q_2}{p_2} \op{2}  \\
&= \openone.
\end{align*}
The action of such a map on a passive state $\rho(\vec{r})=\mathrm{diag}(r_0,r_1,r_2)$ with $r_0\geq r_1\geq r_2>0$ and $\sum_{i=0}^2 r_i=1$, is given by
\begin{align*}
&\Lambda(\rho(\vec{r}))= \sum_{i=1}^{2}K_i\rho(\vec{r})K_i^\dagger\\
&=\left(\frac{r_0}{2p_0}+\frac{r_1}{2p_1}\right)(q_0\op{0} + q_1\op{1})+ \frac{r_2}{p_2}q_2\op{2}\\
&:=s_0\op{0}+s_1\op{1}+s_2\op{2}.
\end{align*}
Using above equation, we can write
\begin{align*}
\begin{pmatrix}
s_0/q_0\\
s_1/q_1\\
s_2/q_2
\end{pmatrix}=\underbrace{ \frac{1}{2} \begin{pmatrix}
1 & 1 & 0\\
1 & 1 & 0\\
0 & 0 & 2
\end{pmatrix} }_{R}
\begin{pmatrix}
r_0/p_0\\
r_1/p_1\\
r_2/p_2
\end{pmatrix}.
\end{align*}
Since we recognize that $R$ is a Hoffman matrix, the above equation implies that if $\rho(\vec{r})\succ_{vc}\sigma(\vec{p})$ then $\Lambda(\rho(\vec{r}))\succ_{vc}\sigma(\vec{q})$, so the quantum channel $\Lambda$ is a RPPO.

\subsection{Hierarchy of various quantum operations}
It is instructive to compare the various sets of quantum operations considered so far, namely, the passivity-preserving operations (PPO), the relative passivity-preserving operations (RPPO), and incoherent operations (those that preserve incoherent states in the energy eigenbasis). First, note that for an arbitrary choice of the passive states $\sigma(\vec{p})$ and $\sigma(\vec{q})$, RPPOs only preserve the passivity of virtually cooler states than $\sigma(\vec{p})$ since these are mapped onto virtually cooler states than $\sigma(\vec{q})$, which are themselves passive. This means that RPPOs are not passivity preserving, in general. Similarly, all PPOs are not RPPOs for some choice of passive states $\sigma(\vec{p})$ and $\sigma(\vec{q})$. Thus $\mathfrak{L}_{p,q}\not\subset \mathrm{PPO}$ and $\mathrm{PPO}\not\subset \mathfrak{L}_{p,q}$, where PPO denotes here the set of passivity-preserving operations. Similar relationships hold if we consider strictly incoherent RPPOs and strictly incoherent PPOs. In contrast, we know that both PPO and RPPO sets are necessarily inside the set of incoherent operations. We refer to Fig. \ref{fig:ops} for more details of these relationships as a schematic diagram.

\begin{figure}[htbp!]
\centering
\includegraphics[width=0.61\columnwidth]{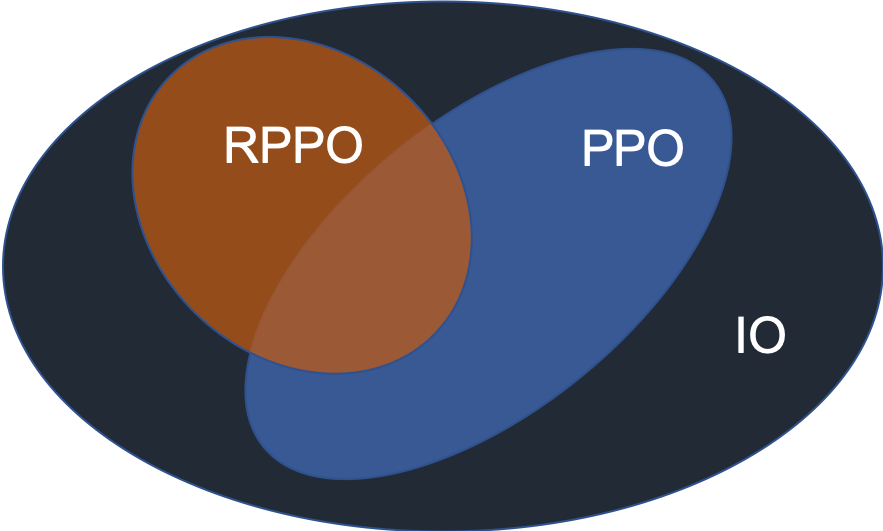}
\caption{In above schematic diagram, RPPO is the set of relative passivity preserving operations for two fixed passive state $\sigma(\vec{p})$ and $\sigma(\vec{q})$, PPO is the set of passivity preserving operations, and  IO is the set of incoherent operations. If we choose $\sigma(\vec{p})$ and $\sigma(\vec{q})$ both to be  equal to the maximally mixed state, then RPPO is equal to PPO.}
\label{fig:ops}
\end{figure}

\section{Transformation of pure active states under relative passivity-preserving \\ quantum operations}
\label{sec:stat-tran}
Building on the notions of relative passive states and relative passivity-preserving operations (RPPOs), we are now ready to consider the problem of  pure active state transformations under the set $\mathfrak{L}_{p,q}$ of RPPOs.  Our central result is to find the necessary and sufficient condition for such a transformation to exist, based on Hoffman majorization.

\

\subsection{System of arbitrary dimension}
Let us consider the set $\mathfrak{D}$ of pure states that have the form $\ket{\psi}=\sum_{i=1}^{d}\sqrt{p_i}\, e^{i\theta_i}\ket{i}$ with $\{\theta_i\}\in\mathbb{R}$ and $\{p_i\}\in \mathcal{S}(d)$, that is $p_i\leq p_j$ for all $i> j$, and $p_0\in (0,1)$. The set $\mathfrak{D}$ is a subset of the class of active states, and we note that each state in $\mathfrak{D}$ can be associated with a passive state $\sigma(\vec{p})$ having the same population of energy eigenstates. In the following, we provide a necessary and sufficient condition for the transformation between a pair of pure states in $\mathfrak{D}$ under RPPO relative to their respective passive distributions.


\begin{theorem}
\label{th:qud-iff}
An active state $\ket{\psi} \in \mathfrak{D}$ can be transformed to another active state $\ket{\phi}\in \mathfrak{D}$ under a strictly incoherent RPPO with respect to $\sigma(\vec{p})$ and $\sigma(\vec{q})$ if and only if $\vec{p}\prec_h \vec{q}$, where $\ket{\psi}=\sum_{i=0}^{d-1}\sqrt{p_i}e^{i\theta_i}\ket{i}$, $\ket{\phi}=\sum_{i=0}^{d-1}\sqrt{q_i}e^{i\nu_i}\ket{i}$, and $\vec{p}$ and  $\vec{q}$ are passive vectors.
\end{theorem}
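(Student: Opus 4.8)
The plan is to treat the two implications separately, reducing throughout to the case $p_i>0$ for all $i$ by restricting to $\operatorname{supp}(\vec p)=\{0,\dots,m-1\}$ (which contains $\operatorname{supp}(\vec q)$ once $\vec p\prec_h\vec q$, and on which $\ket{\psi}$ automatically has full support).

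For the \emph{only if} direction, suppose a strictly incoherent RPPO $\Lambda=\{K_\mu\}$ obeys $\Lambda(\op{\psi})=\op{\phi}$; only strict incoherence will be used. Each $K_\mu$ has at most one nonzero entry per column, and since $K_\mu\op{\psi}K_\mu^\dagger\propto\op{\phi}$ we get $K_\mu\ket{\psi}=c_\mu\ket{\phi}$ with $\sum_\mu|c_\mu|^2=1$. Because $\ket{\phi}$ is nonzero throughout $\operatorname{supp}(\vec q)$ and $\ket{\psi}$ throughout $\operatorname{supp}(\vec p)$, every $K_\mu$ with $c_\mu\ne0$ has its nonzero entries confined to rows in $\operatorname{supp}(\vec q)$, one per row, so they define a bijection $r_\mu$ from its set of nonzero columns onto $\operatorname{supp}(\vec q)$ with $|(K_\mu)_{r_\mu(i),i}|^2=|c_\mu|^2 q_{r_\mu(i)}/p_i$ (this is essentially the content of the structural proposition above on strictly incoherent pure-to-pure maps). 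Trace preservation $\sum_\mu K_\mu^\dagger K_\mu=\mathbb I$ then reads $p_i=\sum_\mu|c_\mu|^2 q_{r_\mu(i)}$, summed over those $\mu$ for which $i$ lies in the domain of $r_\mu$. Summing over $i=0,\dots,k$, the inner sum ranges over at most $k+1$ distinct entries of the non-increasing vector $\vec q$, hence is at most $q_0+\cdots+q_k$, so $\sum_{i=0}^k p_i\le\sum_{i=0}^k q_i$ for all $k$; that is, $\vec p\prec_h\vec q$.

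For the \emph{if} direction, suppose $\vec p\prec_h\vec q$. Theorem~\ref{th:hoff2} provides a Hoffman matrix $R$ with $\vec p=R\vec q$, and Theorem~\ref{th:hoff1} writes $R=\sum_\tau\lambda_\tau M^\tau$. Decomposing each block $\mathfrak I_m/m=\tfrac1m\sum_{l=0}^{m-1}C^l$ into powers of the cyclic shift $C$ turns this into $R=\sum_\mu w_\mu P_{\pi_\mu}$, a convex combination of permutation matrices $P_{\pi_\mu}$ whose permutations shift cyclically inside the parts of the relevant partition. Writing $\ket{\psi}=\sum_i\sqrt{p_i}e^{i\theta_i}\ket{i}$ and $\ket{\phi}=\sum_i\sqrt{q_i}e^{i\nu_i}\ket{i}$, I would define the Kraus operators
\begin{align*}
K_\mu=\sqrt{w_\mu}\sum_{i=0}^{d-1}e^{\,i(\nu_{\pi_\mu(i)}-\theta_i)}\sqrt{\tfrac{q_{\pi_\mu(i)}}{p_i}}\;\ket{\pi_\mu(i)}\!\bra{i}.
\end{align*}
Each $K_\mu$ has exactly one nonzero entry per row and column, hence is an incoherent Kraus operator and $\Lambda:=\{K_\mu\}$ is strictly incoherent; a direct check gives $K_\mu\ket{\psi}=\sqrt{w_\mu}\,\ket{\phi}$, so $\Lambda(\op{\psi})=\op{\phi}$, while $\sum_\mu K_\mu^\dagger K_\mu=\sum_i p_i^{-1}(R\vec q)_i\,\op{i}=\sum_i\op{i}=\mathbb I$ by symmetry of $R$ and $R\vec q=\vec p$, so $\Lambda$ is a legitimate channel.

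The step I expect to be the main obstacle is verifying that $\Lambda$ is genuinely an RPPO with respect to $\vec p,\vec q$, i.e.\ that it maps every state virtually cooler than $\sigma(\vec p)$ to one virtually cooler than $\sigma(\vec q)$, not merely the single state $\ket{\psi}$. For $\rho(\vec r)\succ_{vc}\sigma(\vec p)$, so that the componentwise ratio $\vec r/\vec p$ is non-increasing, a short computation yields $\Lambda(\rho(\vec r))=\rho(\vec s)$ with $\vec s/\vec q=R(\vec r/\vec p)$: after renormalizing the input by $\sigma(\vec p)$ and the output by $\sigma(\vec q)$, the channel acts on ratio vectors exactly as the Hoffman matrix $R$. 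Since Hoffman matrices map non-increasing non-negative vectors to non-increasing vectors — decompose such a vector into the unnormalized extreme rays $(\underbrace{1,\dots,1}_{k+1},0,\dots,0)^{T}$, each of which $R$ sends into $\mathcal S(d)$ up to a positive scale — the vector $\vec s/\vec q$ is non-increasing, whence $\vec s$ is a non-increasing probability vector and $\Lambda(\rho(\vec r))\succ_{vc}\sigma(\vec q)$, so $\Lambda\in\mathfrak L_{p,q}$. This is precisely where the argument needs the full Hoffman structure and not merely double stochasticity (a doubly stochastic matrix need not preserve the non-increasing order). The remaining points — the full-support reduction, and that weighted permutations count as strictly incoherent Kraus operators — are routine; and since the necessity argument never invoked the RPPO property, $\vec p\prec_h\vec q$ is in fact necessary and sufficient already for the transformation under an arbitrary strictly incoherent channel, so restricting to RPPOs costs nothing for these pure-state conversions.
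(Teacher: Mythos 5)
Your \emph{if} direction is, up to packaging, the paper's own construction. The decomposition $\mathfrak{I}_m/m=\frac{1}{m}\sum_{l}C^l$ turns $R=\sum_\tau\lambda_\tau M^\tau$ into exactly the convex combination of block-cyclic permutations that the paper indexes by the pair $(\tau,\vec a)$; your $K_\mu$ are the paper's $G^{\tau}_{\vec a}$ with the phases carried along rather than stripped off by a preliminary diagonal unitary; and the closing step — $\vec s/\vec q=R(\vec r/\vec p)$ plus the fact that a Hoffman matrix preserves the cone of non-increasing non-negative vectors — is the same. (Your explicit decomposition of that cone into the rays $(1,\dots,1,0,\dots,0)^T$ actually justifies a step the paper leaves implicit, and your support-restriction replaces the paper's induction/Case 1 to the same effect.) This half is correct.

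The \emph{only if} direction is where you genuinely depart from the paper, which simply cites Refs.~\cite{Du2015,DuE2017}; it is also where your argument has a hole. The identity $p_i=\sum_\mu|c_\mu|^2q_{r_\mu(i)}$ does not follow from trace preservation as written, because under this paper's definition of strict incoherence (at most one nonzero entry per \emph{column}; rows are unconstrained) a Kraus operator can annihilate $\ket{\psi}$ while still having nonzero entries over $\operatorname{supp}(\vec p)$: for instance $K\propto\ket{0}\bigl(\sqrt{p_1}\,e^{-i\theta_0}\bra{0}-\sqrt{p_0}\,e^{-i\theta_1}\bra{1}\bigr)$ kills $\ket{\psi}$ yet contributes positively to the diagonal of $\sum_\mu K_\mu^\dagger K_\mu$. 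Including such operators, trace preservation only yields $p_i\ge\sum_{\mu:c_\mu\ne 0}|c_\mu|^2q_{r_\mu(i)}$ — the wrong direction for your counting step. (Proposition~2 of the paper, which you lean on, shows only that such operators do not \emph{contribute} to the output state, not that they are absent; so it does not close this gap.) The repair: sum the inequality over all $i$, use that each $r_\mu$ is onto $\operatorname{supp}(\vec q)$ so the inner sum equals $1$, and use $\sum_{\mu:c_\mu\ne0}|c_\mu|^2=1$ to force equality for every $i$, whence the annihilating operators vanish on $\operatorname{supp}(\vec p)$. A second, smaller omission: the injectivity of $r_\mu$ (one nonzero entry per row), which your bound $\sum_{i\le k}q_{r_\mu(i)}\le\sum_{i\le k}q_i$ requires, follows by a counting argument only when $|\operatorname{supp}(\vec q)|=|\operatorname{supp}(\vec p)|$; if $\operatorname{supp}(\vec q)$ is strictly smaller, two columns can feed the same output row with cancelling phases and this case needs separate handling. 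With these two patches your necessity argument is a valid, self-contained replacement for the paper's citation.
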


\begin{proof}
Before we begin our proof, let us remark that for the purpose of state transformations under RPPOs, the states $\ket{\psi}=\sum_{i=0}^{d-1}\sqrt{p_i}\ket{i}$ and $\ket{\psi'}=\sum_{i=0}^{d-1}\sqrt{p_i}e^{i\theta_i}\ket{i}$ are equivalent as $\ket{\psi'} = \mathrm{diag}(e^{i\theta_0},\cdots, e^{i\theta_{d-1}}) \ket{\psi}$, where $U(\vec{\theta})\coloneqq \mathrm{diag}(e^{i\theta_0},\cdots, e^{i\theta_{d-1}})$ is a unitary operator corresponding to RPPO. Unitary operation $\mathcal{U}_{\vec{\theta}}(\cdot)\coloneqq U(\vec{\theta})(\cdot)U^\dag (\vec{\theta}) $ is such that $\Lambda\circ\mathcal{U}_{\vec{\theta}}, \mathcal{U}_{\vec{\theta}}\circ\Lambda\in\mathfrak{L}_{p,q}$ for all $\Lambda\in \mathfrak{L}_{p,q}$. So, without loss of generality, we can drop all the phases from $\ket{\psi}$ and $\ket{\phi}$ in the above theorem. 

More formally, we can say that active states $\ket{\psi(\vec{\theta})}\in\mathfrak{D}$ in the set $\left\{\ket{\psi(\vec{\theta})}\right\}_{\vec{\theta}}$ form an equivalence class if they can be transformed to one another by an energy-preserving unitary operation $\mathcal{U}\in\mathfrak{L}_{p,p}$ (note here that the input and output reference probabilities are equal to $\vec{p}$). For example, $\ket{\psi}=\sum_{i=0}^{d-1}\sqrt{p_i}\ket{i}$ and $\ket{\psi'}=\sum_{i=0}^{d-1}\sqrt{p_i}e^{i\theta_i}\ket{i}$ belong to the same equivalence class as $\ket{\psi'} = \mathrm{diag}(e^{i\theta_0},\cdots, e^{i\theta_{d-1}})\ket{\psi}$, where $\mathrm{diag}(e^{i\theta_0},\cdots, e^{i\theta_{d-1}})$ is unitary operator corresponding to unitary RPPO in $\mathfrak{L}_{p,p}$.

{\it Proof of ``if" part.--} We will prove the theorem using mathematical induction. Let us start with  $\dim \mc{H}=d=2$. If $p_1=0$, the Hoffman majorization implies that $q_1=0$. Therefore $\ket{\psi}=\ket{\phi} =\ket{0}$ and the transformation is trivial via the identity operation. Let us assume that $p_1\neq 0$. Now, since $\begin{pmatrix}
p_0\\
p_1\end{pmatrix} \prec_h \begin{pmatrix}
q_0\\
q_1\end{pmatrix}$, there exists a Hoffman matrix $R$ such that 
\begin{align}
\begin{pmatrix}
p_0\\
p_1\end{pmatrix}=R\begin{pmatrix}
q_0\\
q_1\end{pmatrix},
\end{align}
where $R=\begin{pmatrix}
a & \bar{a}\\
\bar{a} & a
\end{pmatrix}
$ with $1/2\leq a\leq 1$ and $\bar{a}=1-a$. It is easy to check that the operation $\Phi_p$ defined via Kraus operators $L_1$ and $L_2$ (represented in energy eigenbasis)
\begin{align}
L_1=\begin{pmatrix}
\sqrt{a\frac{q_0}{p_0}}&
0\\0&\sqrt{a\frac{q_1}{p_1}}
\end{pmatrix};~L_2=\begin{pmatrix}
 0 &\sqrt {\bar{a}\frac{q_0}{p_1}}\\
 \sqrt {\bar{a}\frac{q_1}{p_0}}&0
 \end{pmatrix}
\end{align}
maps $\ket{\psi}$ to $\ket{\phi}$. We need to show that this map $\Phi_p$ is a RPPO. We have
\begin{align}
&\Phi_p(\op{0})
=\frac{1}{p_0}\left(R_{0,0} q_0 \op{0}+  R_{0,1} q_1\op{1}\right) \\
&\Phi_p(\op{1})= \frac{1}{p_1}\left(R_{1,0} q_0 \op{0}+  R_{1,1} q_1\op{1}\right).
\end{align}
Now, let us consider an input state $\rho(\vec{r})=r_0\op{0}+r_1\op{1}$ that is virtually cooler than the passive state $\sigma(\vec{p})=p_0\op{0}+p_1\op{1}$ associated with $\ket{\psi}$. Then, the corresponding output state reads
\begin{align*}
\rho(\vec{s}):=\Phi_p(\rho(\vec{r})) = \sum_{i=0}^1 \sum_{j=0}^1 \frac{r_i R_{i,j} q_j}{p_i}\op{j}.
\end{align*}
implying that 
\begin{align*}
\frac{s_j}{q_j} = \sum_{i=0}^1\frac{r_i R_{i,j} }{p_i}
\end{align*}

Since $R$ is a (symmetric) Hoffman matrix, the condition  $\rho(\vec{r})\succ_{vc}\sigma(\vec{p})$ (i.e., $r_0 / p_0\geq r_1 / p_1$) implies the condition that  $\rho(\vec{s})\succ_{vc}\sigma(\vec{q})$ (i.e. $s_0 / q_0\geq s_1 / q_1$) where $\sigma(\vec{q})=q_0\op{0}+q_1\op{1}$ is the passive state associated with  $\ket{\phi}$. Therefore $\Phi_p$ maps the state $\rho(\vec{r})$, which is virtually cooler than $\sigma(\vec{p})$, to a state $\rho(\vec{s})$, which is virtually cooler than $\sigma(\vec{q})$. Also, from the form of Kraus operators, it is clear that $\Phi_p$ is a strictly incoherent operation.  Now, assuming that the theorem holds true for $\dim \mc{H}\leq d-1$, we will prove that the theorem holds true for $\dim \mc{H}=d$. Now, the two cases arise.

\bigskip
\noindent
{\it Case 1.} Let there be a $k$ ($0< k< d-1$) such that $p_k\neq 0$ and $p_{k+1}=\cdots=p_{d-1}=0$. From Hoffman majorization, it follows that $q_{k+1}=\cdots=q_{d-1}=0$. Now, let the vector
$(p_0, \cdots, p_k)^T \prec_h (q_0,\cdots, q_k)^T$. Then, from the inductive assumption, we know there exits a RPPO $\Phi_p'$ on $\msc{D}_k$ (the set of all $k\times k$ density matrices) specified by the Kraus operators $\{L_\mu\}$, $\mu=0,\cdots, N-1$ such that $\sum_{i=0}^{k-1} \sqrt{p_i}\ket{i}\xrightarrow{\Phi_p'}\sum_{i=0}^{k-1} \sqrt{q_i}\ket{i}$. Let us consider $K_\mu=L_\mu\oplus \frac 1 {\sqrt N}\mathbbm{I}_{d-k}$, then $\Phi_p(\cdot)=\sum_{\mu=0}^{N-1} K_\mu\cdot K_\mu^\dag$ is a RPPO that transforms $\ket{\psi}$ to $\ket{\phi}$.

\bigskip
\noindent
{\it Case 2.} When $p_{d-1}\neq 0$. Since $\vec{p}\prec_h \vec{q}$, from Theorem \ref{th:hoff2}, there exists a Hoffman matrix $R\in \mc{R}(d)$ such that $\vec{p}=R\vec{q}$. From Theorem \ref{th:hoff1}, we know that $R=\sum_{\tau\in\mathcal{P}(d)}\alpha_\tau M^{\tau}$, where $\tau$ is a partition, $\alpha_\tau$ are probabilities and $\sum_{\tau\in\mathcal{P}(d)} \alpha_\tau=1$. Each $M^{\tau}$ is a $d\times d$ matrix that corresponds to a partition $\tau=(\tau_1,\tau_2,\cdots \tau_k)$ and can be written as $M^{\tau}=\oplus_{i=1}^{k}M_{\tau_i}$, where $M_{\tau_i}=\frac{\mathfrak{I}_{|\tau_i|}}{|\tau_i|}$ and $|\tau_i|$ is the cardinality of the part $\tau_i$. Let us partition the vectors $\vec{p}$ and $\vec{q}$ following the same partition $\tau=(\tau_1,\tau_2,\cdots \tau_k)$, that is
\begin{align*}
\vec{p} = \left(\vec{p}_{\tau_1},\cdots,\vec{p}_{\tau_k}\right); \quad \vec{q} = \left(\vec{q}_{\tau_1},\cdots,\vec{q}_{\tau_k}\right).
\end{align*}
In this notation,
\begin{align*}
M^{\tau}\vec{q} &= \left(M_{\tau_1}\oplus \cdots \oplus M_{\tau_k}\right)\left(\vec{q}_{\tau_1},\cdots,\vec{q}_{\tau_k}\right)\\
&= \left(M_{\tau_1}\vec{q}_{\tau_1},\cdots,M_{\tau_k}\vec{q}_{\tau_k}\right)\\
&:= \left(\vec{r}_{\tau_1},\cdots,\vec{r}_{\tau_k}\right) = \vec{r}^{~\tau}.
\end{align*}
Further, let us similarly divide the total Hilbert space as 
\begin{align}
\mc{H}=\oplus_{i=1}^k\mc{H}_{\tau_i},
\end{align}
Now, let us build the following set of operators $\{G_{a_i}^{\tau_i}\}$, where, for each $i$, the $|\tau_i|$ operators $G_{a_i}^{\tau_i}$ act on subspace $\mc{H}_{\tau_i}$ and are labeled by $a_i\in\{0,\cdots,|\tau_i|-1\}$. These operators $\{G_{a_i}^{\tau_i}\}$ will be used to define Kraus operators.
Letting  $\mu_1=0$ and $\mu_i=\sum_{l=1}^{i-1}|\tau_l|$ for $i>1$, we define
\begin{align}
\label{eq:def-G}
G_{a_i}^{\tau_i}&=\sum_{j=0}^{|\tau_i|-1} \sqrt{\frac{q_{\mu_i+\pi_{a_i}^{\tau_i}(j)}}{p_{\mu_i+j}}}\ket{{\mu_i+{\pi_{a_i}^{\tau_i}(j)}}}\bra{{\mu_i+j}},
\end{align}
where $\{\pi^{\tau_i}_{a_i}\}$ are cyclic permutations of indices from the set $\{0,\cdots, |\tau_i|-1\}$ and each permutation is labeled by  $a_i$.
For $j=\{0,\cdots, |\tau_i|-1\}$, we thus have 
\begin{align}
\label{eq:def-G2}
G_{a_i}^{\tau_i}\ket{{\mu_i+j}}= \sqrt{\frac{q_{\mu_i+\pi^{\tau_i}_{a_i}(j)}}{p_{\mu_i+j}}}\ket{{\mu_i+{\pi_{a_i}^{\tau_i}(j)}}}.
\end{align}
Let us define
\begin{align}
\ket{\psi}=\sum_{i=1}^{k}\ket{\psi_i};~\ket{\phi}=\sum_{i=1}^{k}\ket{\phi_i},
\end{align}
where each $\ket{\psi_i}$ has disjoint support over $|\tau_i|$-dimensional disjoint subspaces $\mc{H}_{\tau_i}$. The same is true for each $\ket{\phi_i}$. Now, we see that
\begin{align*}
G_{a_i}^{\tau_i}\ket{\psi}=\ket{\phi_i}.
\end{align*}
Next, we define a map $\Phi_p$ with Kraus operators
\begin{align}
G_{\vec{a}}^\tau=\sqrt{\frac{\alpha_\tau}{|\tau|}} \sum_{i=1}^k G_{a_i}^{\tau_i},
\end{align}
where $\vec{a}=\left(a_1,\cdots,a_k\right)$, $a_i\in\{0,\cdots, |\tau_i|-1\}$, $|\tau|=\prod_{i=1}^k |\tau_i|$ and each operator in above sum has disjoint support on total Hilbert space. 
From the form of the Kraus operators, it is clear that they map incoherent states to incoherent states, therefore $\Phi_p$ constitutes a strictly incoherent operation. Now, it is easy to see that
\begin{align*}
G_{\vec{a}}^{\tau}\ket{\psi} 
&=\sqrt{\frac{\alpha_\tau}{|\tau|}}\sum_{i=1}^k G_{a_i}^{\tau_i}\ket{\psi}=\sqrt{\frac{\alpha_\tau}{|\tau|}}\sum_{i=1}^k \ket{\phi_i}\nonumber\\
&=\sqrt{\frac{\alpha_\tau}{|\tau|}} \ket{\phi}  ,
\end{align*}
implying that 
\begin{align}
\label{eq:def-G-cond2}
 \Phi_p(\op{\psi}) = \sum_\tau\sum_{\vec{a}}G_{\vec{a}}^{\tau}\op{\psi}G_{\vec{a}}^{{\tau}\dagger}=\op{\phi}.
\end{align}
The complete positivity of above map is guaranteed as it is presented in terms of Kraus operators. Now, we show that it is also trace preserving. For this, we need to show $\sum_{\tau}\sum_{\vec{a}} G_{\vec{a}}^{\tau\dagger} G_{\vec{a}}^\tau = \mathbb{I}_d$. We have indeed
\begin{align*}
&\sum_{\tau}\sum_{\vec{a}} G_{\vec{a}}^{\tau\dagger} G_{\vec{a}}^\tau\\
&=\sum_{\tau}\sum_{\vec{a}} \frac{\alpha_\tau}{|\tau|} \left(\sum_{i=1}^k G_{a_i}^{\tau_i\dagger}\right) \left(\sum_{l=1}^k G_{a_l}^{\tau_l}\right)\\
&=\sum_{\tau}\frac{\alpha_\tau}{|\tau|}\sum_{\vec{a}}  \sum_{i=1}^k G_{a_i}^{\tau_i\dagger} G_{a_i}^{\tau_i}\\
&=\sum_{\tau}\frac{\alpha_\tau}{|\tau|}\sum_{i=1}^k  \frac{|\tau|}{|\tau_i|}  \sum_{a_i=0}^{|\tau_i|-1} G_{a_i}^{\tau_i\dagger} G_{a_i}^{\tau_i} \\
&=\sum_{\tau} \alpha_\tau \sum_{i=1}^k \frac{1}{|\tau_i|} \sum_{j=0}^{|\tau_i|-1} \frac{ \sum_{a_i}q_{\mu_i+\pi_{a_i}^{\tau_i}(j)}}{p_{\mu_i+j}}\ket{{\mu_i+j}}\!\bra{{\mu_i+j}}\\
\end{align*}
\begin{align*}
&=\sum_{\tau} \alpha_\tau \sum_{i=1}^k \sum_{j=0}^{|\tau_i|-1} \frac{(\vec{r}_{\tau_i})_j}{(\vec{p}_{\tau_i})_j}\ket{{\mu_i+j}}\!\bra{{\mu_i+j}}~~~~~~~~~~~~~~~~~~~~\\
&=\sum_{\tau} \alpha_\tau \sum_{l=0}^{d-1} \frac{(\vec{r}^{~\tau})_l}{p_{l}}\ket{{l}}\!\bra{{l}}\\
&=\sum_{l=0}^{d-1} \ket{l}\bra{l}= \mathbb{I}_d,
\end{align*}
where we have used $\sum_{\tau} \alpha_\tau \vec{r}^{~\tau} = \vec{p}$ 
as well as the fact that summing over all cyclic permutations $\{\pi^{\tau_i}_{a_i}\}$ from the set $\{0,\cdots, |\tau_i|-1\}$ results in
\begin{align}
\sum_{a_i=0}^{|\tau_i|-1}q_{\mu_i+\pi^{\tau_i}_{a_i}(j)} = |\tau_i| \, (\vec{r}_{\tau_i})_j.
\end{align}
Therefore the map $\Phi_p$ with Kraus operators $\{G_{\vec{a}}^\tau\}$ forms a completely positive trace-preserving map. The final step is to prove that $\Phi_p$ is a RPPO with respect to the reference passive state $\sigma(\vec{p})$ associated with $\ket{\psi}$ at the input and reference passive state $\sigma(\vec{q})$ associated with $\ket{\phi}$ at the output. For a given $\tau$ and for all $\ket{{j}} \in \mc{H}_{\tau_i}$, i.e., $\ket{{\mu_i+j}}$, from Eq. \eqref{eq:def-G2}, we have
\begin{align*}
&\sum_{\vec{a}}G_{\vec{a}}^{\tau}\ket{{\mu_i+j}} \bra{{\mu_i+j}}G_{\vec{a}}^{\tau\dagger}\nonumber\\
&=\frac{\alpha_\tau}{|\tau|}\sum_{\vec{a}}  \left(\sum_{l=1}^{k}G_{a_l}^{\tau_l}\right) \ket{{\mu_i+j}}\! \bra{{\mu_i+j}} \left(\sum_{m=1}^{k}G_{a_m}^{\tau_m\dagger}\right)  \nonumber\\
&=\frac{\alpha_\tau}{|\tau|}\sum_{\vec{a}}  G_{a_i}^{\tau_i} \ket{{\mu_i+j}}\! \bra{{\mu_i+j}} G_{a_i}^{\tau_i\dagger}  \nonumber\\
&=\frac{\alpha_\tau}{|\tau|}\sum_{\vec{a}} \frac{q_{\mu_i+\pi_{a_i}^{\tau_i}(j)}}{p_{\mu_i+j}} \ket{{\mu_i+{\pi_{a_i}^{\tau_i}(j)}}} \! \bra{{\mu_i+{\pi_{a_i}^{\tau_i}(j)}}}\nonumber\\
&=\frac{\alpha_\tau}{|\tau_i|}\sum_{a_i=0}^{|\tau_i|-1} \frac{q_{\mu_i+\pi_{a_i}^{\tau_i}(j)}}{p_{\mu_i+j}} \ket{{\mu_i+{\pi_{a_i}^{\tau_i}(j)}}}\!  \bra{{\mu_i+{\pi_{a_i}^{\tau_i}(j)}}}\nonumber\\
&=\frac{\alpha_\tau}{|\tau_i|}\sum_{l=0}^{|\tau_i|-1} \frac{q_{\mu_i+l}}{p_{\mu_i+j}} \ket{{\mu_i+l}} \! \bra{{\mu_i+l}}\nonumber\\
&=\frac{\alpha_\tau}{(\vec{p}_{\tau_i})_j}\sum_{l=0}^{|\tau_i|-1} (M_{\tau_i})_{j,l} \,  (\vec{q}_{\tau_i})_l \ket{{\mu_i+l}} \! \bra{{\mu_i+l}}  .
\end{align*}
Therefore, for an input state $\ket{i}$, we have
\begin{align*}
\Phi_P(\op{i})&=\sum_\tau \sum_{\vec{a}}G_{\vec{a}}^{\tau}\op{i} G_{\vec{a}}^{\tau\dagger}\\
&=\sum_\tau   \frac{\alpha_\tau}{p_i}  \sum_{j=0}^{d-1}    (M^\tau)_{i,j} \, q_j   \op{j} \\
&=\frac{1}{ p_i}\sum_{j=0}^{d-1} R_{i,j} \, q_{j} \op{j},
\end{align*}
where we have used $\sum_\tau  \alpha_\tau M^\tau = R$. 
Now, consider an arbitrary input passive state $\rho(\vec{r}) =\sum_{i=0}^{d-1} r_i\op{i}$ that is virtually cooler than the passive state $\sigma(\vec{p}) =\sum_{i=0}^{d-1} p_i\op{i}$. Then, at the output, we have
\begin{align*}
&\rho(\vec{s}):=\Phi_P\big(\rho(\vec{r})\big)=\sum_{j=0}^{d-1}\left(\sum_{i=0}^{d-1} \frac{r_i R_{i,j} }{ p_i} \right)q_{j} \op{j}.
\end{align*}
From the above equation, we have
\begin{align*}
\frac{s_j}{q_j} =\sum_{i=0}^{d-1} \frac{r_i R_{i,j} }{ p_i}   .
\end{align*}
Since $R$ is a (symmetric) Hoffman matrix, it follows that 
\begin{align*}
\frac{r_i}{p_i} \geq \frac{r_{i+1}}{p_{i+1}} \Rightarrow \frac{s_i}{q_i} \geq \frac{s_{i+1}}{q_{i+1}}.
\end{align*}
Thus the map $\Phi_p$ maps a passive state $\rho(\vec{r})$ that is virtually cooler than $\sigma(\vec{p})$ onto a passive state $\rho(\vec{s})$ that is virtually cooler than $\sigma(\vec{q})$. Hence, $\Phi_{p}$ is a RPPO. This concludes the proof of the ``if'' part of the theorem. 

\medskip
\noindent
{\it Proof of ``only if" part.--} Let us assume that there exists a strictly incoherent RPPO $\Phi_p$ such that $\op{\phi}=\Phi_p(\op{\psi})$. Since it is a strictly incoherent operation, from Refs.~\cite{Du2015, DuE2017}, we conclude that $\vec{p}\prec \vec{q}$. Given that $\vec{p}$ and $\vec{q}$ are passive vectors, this in turn implies that $\vec{p}\prec_h \vec{q}$ and concludes the proof of the theorem.
\end{proof}

It must be noted that the RPPO $\Phi_p$ that we have constructed in the proof satisfies $\Phi_p(\sigma(\vec{p}))=\sigma(\vec{q})$, in addition to the requested property
$\Phi_p(\op{\psi}) = \op{\phi}$. This makes sense since $\Phi_p$ is (strictly) incoherent and since $\sigma(\vec{p})$ is the incoherent version of $\ket{\psi}$ and $\sigma(\vec{q})$ is the incoherent version of $\ket{\phi}$. Now, we describe some consequences of the above theorem.

\begin{corollary}
\label{rem:conv-pure1}
The state $\ket{\psi}_{\max}=\frac{1}{\sqrt{d}}\sum_{i=0}^{d-1}e^{i\theta_i}\ket{i}$ can be transformed to any state $\ket{\phi}=\sum_{i=0}^{d-1}\sqrt{q_i}\, e^{i\nu_i}\ket{i}\in \mathfrak{D}$ using strictly incoherent RPPOs defined with respect to vectors $\vec{p}=\{d^{-1},\cdots,d^{-1}\}^T$ and $\vec{q}$.
\end{corollary}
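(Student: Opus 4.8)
The plan is to obtain this directly from Theorem~\ref{th:qud-iff}. That theorem asserts that a state $\ket{\psi}\in\mathfrak{D}$ can be converted into $\ket{\phi}\in\mathfrak{D}$ by a strictly incoherent RPPO with reference vectors $\vec{p},\vec{q}$ if and only if $\vec{p}\prec_h\vec{q}$. So the only things that need checking are (i) that $\ket{\psi}_{\max}$ genuinely lies in $\mathfrak{D}$ with the stated reference vector, (ii) that $\ket{\phi}\in\mathfrak{D}$ (given by hypothesis), and (iii) that the uniform vector $\vec{p}=(d^{-1},\dots,d^{-1})^T$ is Hoffman-majorized by $\vec{q}$.

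First I would observe that $\ket{\psi}_{\max}=\tfrac{1}{\sqrt d}\sum_i e^{i\theta_i}\ket{i}$ has squared-amplitude vector $\vec{p}=(d^{-1},\dots,d^{-1})^T$, which is (weakly) non-increasing and satisfies $p_0=d^{-1}\in(0,1)$ for $d\ge 2$; hence $\ket{\psi}_{\max}\in\mathfrak{D}$, with $\vec p\in\mathcal{S}(d)$. Together with $\ket{\phi}\in\mathfrak{D}$, both endpoints are of the form required by Theorem~\ref{th:qud-iff}.

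Next I would establish $\vec{p}\prec_h\vec{q}$. Since $\vec{q}$ is non-increasing, the average of its $k+1$ largest components dominates its overall average, $\tfrac{1}{k+1}\sum_{i=0}^{k}q_i\ge\tfrac{1}{d}\sum_{i=0}^{d-1}q_i=\tfrac1d$, so $\sum_{i=0}^{k}q_i\ge (k+1)/d=\sum_{i=0}^{k}p_i$ for all $k=0,\dots,d-2$, with equality of the full sums at $k=d-1$; this is exactly the majorization condition~\eqref{eq:majo-cond1}--\eqref{eq:majo-cond2}. As $\vec{p}$ and $\vec{q}$ are already non-increasing, no rearrangement is needed, and by the equivalence of $\prec$ and $\prec_h$ on $\mathcal{S}(d)$ noted after Theorem~\ref{th:hoff2} this is precisely $\vec{p}\prec_h\vec{q}$ (equivalently, one may simply invoke the standard fact that the uniform distribution is majorized by every probability vector). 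Applying the ``if'' direction of Theorem~\ref{th:qud-iff} with this $\vec{p}\prec_h\vec{q}$ then produces the desired strictly incoherent RPPO, defined with respect to $\vec{p}$ and $\vec{q}$, taking $\ket{\psi}_{\max}$ to $\ket{\phi}$.

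There is no real obstacle: the content is entirely carried by Theorem~\ref{th:qud-iff}, and the only point requiring a little care is that membership in $\mathfrak{D}$ demands non-increasing components \emph{without} prior rearrangement and $p_0\notin\{0,1\}$, both of which the uniform vector satisfies as soon as $d\ge 2$.
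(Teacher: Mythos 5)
Your proposal is correct and follows essentially the same route as the paper: the uniform vector is Hoffman-majorized by any non-increasing probability vector, and the conclusion follows from the ``if'' direction of Theorem~\ref{th:qud-iff}. The only difference is that you spell out the partial-sum inequality $\sum_{i=0}^{k}q_i\ge (k+1)/d$ and the membership of $\ket{\psi}_{\max}$ in $\mathfrak{D}$, both of which the paper leaves implicit.
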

\begin{proof}
Since, the vector $\vec{p}=\{d^{-1},\cdots,d^{-1}\}^T\prec_h \vec{q}$ for any choice of $\vec{q}$, the corollary follows from Theorem \ref{th:qud-iff}. Thus, in the set $\mathfrak{D}$, $\ket{\psi}_{\max}$ are the most active states. They can be converted into any active state $\ket{\phi}\in \mathfrak{D}$. Since $\sigma(\vec{p})$ is the maximally mixed state, the condition $\rho(\vec{r}) \succ_{vc} \sigma(\vec{p})$ boils down to the condition that $\rho(\vec{r})$ is passive. Hence, the RPPO transforming $\ket{\psi}_{\max}$ into $\ket{\phi}$ is such that it transforms any passive state into a state that is virtually cooler that $\sigma(\vec{q})$, i.e., the passive state associated to $\ket{\phi}$.
\end{proof}
\begin{corollary}
\label{rem:conv-pure2}
Any state $\ket{\psi}=\sum_{i=0}^{d-1}\sqrt{p_i}\, e^{i\gamma_i}\ket{i}\in \mathfrak{D}$ can be transformed to the ground state $\ket{0}$ using strictly incoherent RPPOs defined with respect to vectors $\vec{p}$ and $\vec{q}=\{1,0,\cdots,0\}^T$. 
\end{corollary}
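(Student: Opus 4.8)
The plan is to reduce the statement to the ``if'' direction of Theorem~\ref{th:qud-iff}. The target probability vector is $\vec{q}=(1,0,\dots,0)^T$, which is the greatest element of $\mathcal{S}(d)$ under (Hoffman) majorization: for any non-increasing probability vector $\vec{p}$ one has $\sum_{i=0}^{k}p_i\le 1=\sum_{i=0}^{k}q_i$ for every $0\le k\le d-1$, with equality at $k=d-1$, so conditions \eqref{eq:majo-cond1}--\eqref{eq:majo-cond2} hold without any rearrangement and hence $\vec{p}\prec_h\vec{q}$. In particular this holds for the non-increasing vector $\vec{p}$ attached to the given $\ket{\psi}\in\mathfrak{D}$. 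Feeding this into the constructive ``if'' part of Theorem~\ref{th:qud-iff} then produces a strictly incoherent RPPO with respect to $\vec{p}$ and $\vec{q}$ that carries $\ket{\psi}$ to $\sum_{i}\sqrt{q_i}\,\ket{i}=\ket{0}$.

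The one point requiring care --- and the only genuine obstacle --- is that $\ket{0}$ does not, strictly speaking, belong to $\mathfrak{D}$, since here $q_0=1\notin(0,1)$, so Theorem~\ref{th:qud-iff} cannot be quoted verbatim. I would sidestep this by writing down the channel directly, which is immediate in this boundary case. Choosing $\sigma(\vec{q})=\op{0}$ as the output reference passive state makes the output target set $T_q$ trivial: if $\rho(\vec{r})\succ_{vc}\sigma(\vec{q})$ then $r_i=0$ whenever $q_i=0$ (as noted in Sec.~\ref{sec:ver-coo}), which with $q_1=\dots=q_{d-1}=0$ forces $r_1=\dots=r_{d-1}=0$ and hence $\rho(\vec{r})=\op{0}$. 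Thus $T_q=\{\op{0}\}$, so \emph{any} strictly incoherent channel that sends every input state to $\op{0}$ is automatically an RPPO with respect to $\vec{p}$ and $\vec{q}$, for any choice of input reference $\vec{p}$.

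Concretely I would take the erase-and-reset channel $\Phi(\cdot)=\sum_{i=0}^{d-1}\ket{0}\!\bra{i}(\cdot)\ket{i}\!\bra{0}$, i.e.\ $\Phi(\rho)=\Tr[\rho]\,\op{0}=\op{0}$ for every state $\rho$. It is trace preserving because its Kraus operators $K_i=\ket{0}\!\bra{i}$ satisfy $\sum_{i=0}^{d-1}K_i^\dagger K_i=\sum_{i=0}^{d-1}\op{i}=\mathbb{I}_d$; each $K_i$ maps incoherent states to incoherent states, so $\Phi$ is strictly incoherent; it obviously satisfies $\Phi(\op{\psi})=\op{0}$; and it maps $T_p$ into $T_q=\{\op{0}\}$, so it is a strictly incoherent RPPO with respect to $\vec{p}$ and $\vec{q}=(1,0,\dots,0)^T$, which is exactly what is claimed. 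As a closing remark I would observe that $\Phi$ is in fact an activity-breaking operation of the form \eqref{eq:ABO} with $\Gamma_0=\mathbb{I}_S$ and $\Gamma_k=0$ for $k\ge1$, consistent with the earlier fact that when the output reference state is the ground state an RPPO is automatically passivity preserving; and that this corollary is the natural dual of Corollary~\ref{rem:conv-pure1}, the ground state playing the role there played by $\ket{\psi}_{\max}$.
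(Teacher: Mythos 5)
Your proposal is correct, and its first half is exactly the paper's route: observe that $\vec{q}=(1,0,\dots,0)^T$ sits at the top of the Hoffman order, so $\vec{p}\prec_h\vec{q}$ for every non-increasing $\vec{p}$, and invoke the constructive ``if'' direction of Theorem~\ref{th:qud-iff}. Where you diverge is in flagging, and then repairing, the boundary issue that the paper's own one-line proof passes over: since $q_0=1\notin(0,1)$, the target $\ket{0}$ is not literally a member of $\mathfrak{D}$ as defined, so the theorem cannot be quoted verbatim (the paper even calls $\ket{0}$ ``the only state in $\mathfrak{D}$ that is passive'', in tension with its own definition). Your fix --- noting that $T_q=\{\op{0}\}$ is a singleton so that \emph{any} strictly incoherent channel with constant output $\op{0}$ is an RPPO, and exhibiting the erase-and-reset channel with Kraus operators $K_i=\ket{0}\!\bra{i}$ --- is a more elementary, fully self-contained argument that bypasses the general $G^{\tau}_{\vec{a}}$ machinery altogether; the paper instead implicitly obtains an equivalent map by specializing the theorem's construction. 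Both arguments establish the same statement; yours is the more careful one at this degenerate endpoint, and your closing observation that the resulting channel is the activity-breaking map with $\Gamma_0=\mathbbm{1}_S$ is a correct and pleasant bonus not made explicit in the paper.
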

\begin{proof}
Noting that $\vec{p} \prec_h \vec{q}=\{1,0,\cdots,0\}^T$ for any vector $\vec{p}$, the corollary follows from Theorem \ref{th:qud-iff}. In the set $\mathfrak{D}$, the ground state $\ket{0}$  is the least active state; actually, it is the only state in $\mathfrak{D}$ that is passive. Since $\sigma(\vec{q})=\op{0}$, the condition $\rho(\vec{s}) \succ_{vc} \sigma(\vec{q})$ at the output implies that $\rho(\vec{s})=\op{0}$ (remember that the ground state is virtually cooler than every other state). Hence, the RPPO transforming $\ket{\psi}$ into $\ket{0}$ is such that any input passive state that is virtually cooler than $\sigma(\vec{p})$, i.e., the passive state associated to $\ket{\psi}$, must be converted into the ground state $\ket{0}$.
\end{proof}

\subsection{Special case of qubit systems}
Let us now examine the qubit case in more details. It is known that for qubit systems, the set of strictly incoherent operations and the set of incoherent operations coincide \cite{Chitambar2016}. In contrast, it is easy to see that for fixed qubit states $\sigma(\vec{p})\neq \mathbb{I}/2$ and $\sigma(\vec{q})\neq \mathbb{I}/2$, the set of RPPOs is not equal to the set of PPOs. However, it turns out that pure qubit active state transformations can be achieved with passivity-preserving operations (there is no need to consider relative passivity-preserving operations). The following theorem proves this assertion.

\begin{theorem}
\label{th:qubit-iff}
A qubit active state $\ket{\psi(\vec{\theta})} \in \mathfrak{D}$ can be transformed to another qubit active state $\ket{\phi(\vec{\nu})}\in \mathfrak{D}$ under a PPO  if and only if $\vec{p}\prec_h \vec{q}$, where $\ket{\psi(\vec{\theta})}=\sum_{i=0}^{1}\sqrt{p_i}e^{i\theta_i}\ket{i}$, $\ket{\phi(\vec{\nu})}=\sum_{i=0}^{1}\sqrt{q_i}e^{i\nu_i}\ket{i}$, $\vec{p}=\{p_0,p_1\}$ with $p_0\geq p_1$, and  $\vec{q}=\{q_0,q_1\}$ with $q_0\geq q_1$.
\end{theorem}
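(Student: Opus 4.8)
The ``only if'' direction should come essentially for free from the machinery already in place. A qubit PPO is an incoherent operation (every PPO is, as shown earlier), and for qubit systems the incoherent and strictly incoherent operations coincide \cite{Chitambar2016}; hence the hypothesized PPO is strictly incoherent. Then the ``only if'' half of Theorem~\ref{th:qud-iff} (equivalently \cite{Du2015, DuE2017}) applies directly: a strictly incoherent channel sending $\op{\psi}$ to $\op{\phi}$ forces $\vec p\prec\vec q$, and for non-increasing probability vectors this is the same as $\vec p\prec_h\vec q$.

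For the ``if'' direction the plan is to \emph{reuse} the explicit qubit channel $\Phi_p$ built in the $d=2$ base case of the proof of Theorem~\ref{th:qud-iff}, and to prove the extra claim that it is in fact a passivity-preserving operation (not merely relative-passivity-preserving). As justified there, phases are irrelevant, so I take $\ket\psi=\sqrt{p_0}\ket0+\sqrt{p_1}\ket1$ and $\ket\phi=\sqrt{q_0}\ket0+\sqrt{q_1}\ket1$ with $p_0\geq p_1>0$ (and likewise for $\vec q$); note $p_1>0$ because $\ket\psi\in\mathfrak D$. By Theorem~\ref{th:hoff2}, $\vec p\prec_h\vec q$ supplies a Hoffman matrix $R=\left(\begin{smallmatrix}a&\bar a\\\bar a&a\end{smallmatrix}\right)$ with $\bar a=1-a$, $\tfrac12\leq a\leq1$, and $\vec p=R\vec q$; if $q_0=q_1$ this collapses to $\vec p=\vec q$ and the identity channel (trivially a PPO) works, so I assume $q_0>q_1$, whence $a=(p_0-q_1)/(q_0-q_1)$. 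Define $\Phi_p$ by the Kraus operators $L_1,L_2$ of Theorem~\ref{th:qud-iff}; that argument (with $\vec p=R\vec q$ ensuring trace preservation) already shows $\Phi_p$ is a CPTP strictly incoherent channel mapping $\op{\psi}$ to $\op{\phi}$, so the only thing left to verify is passivity preservation.

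For that I would simply evaluate $\Phi_p$ on a generic passive qubit state $\rho(\vec r)=r_0\op{0}+r_1\op{1}$ with $r_0\geq r_1$: one finds $\Phi_p(\rho(\vec r))=s_0\op{0}+s_1\op{1}$ with $s_0=q_0(a r_0/p_0+\bar a r_1/p_1)$ and $s_1=q_1(\bar a r_0/p_0+a r_1/p_1)$, so that, using $a+\bar a=1$,
\[
s_0-s_1=(a-q_1)\,\frac{r_0}{p_0}+(q_0-a)\,\frac{r_1}{p_1}.
\]
Writing $r_0=\tfrac12+t$, $r_1=\tfrac12-t$ with $t\in[0,\tfrac12]$, the right-hand side is affine in $t$, so it is non-negative on $[0,\tfrac12]$ iff it is non-negative at both endpoints. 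At $t=\tfrac12$ it equals $(a-q_1)/p_0\geq0$ since $a\geq\tfrac12\geq q_1$; at $t=0$, setting $p_0=\tfrac12+\delta$ and $q_0=\tfrac12+\epsilon$ with $\delta,\epsilon\geq0$, a short computation shows it is a positive multiple of $\epsilon^2-\delta^2$, which is $\geq0$ precisely because $\vec p\prec_h\vec q$ amounts to $p_0\leq q_0$, i.e. $\delta\leq\epsilon$. Hence $s_0\geq s_1$ for every passive input, so $\Phi_p$ is a PPO, and the transformation $\op{\psi}\to\op{\phi}$ is realized by a PPO, as required.

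The one point that needs care is that the ``worst-case'' passive input for the passivity test is not the maximally mixed state in general---which endpoint of $[0,\tfrac12]$ is binding depends on the sign of $q_0-a$---and the affine-in-$t$ reduction to the endpoints is precisely what dispatches both cases uniformly. The remaining items (excluding $p_1=0$ via $\ket\psi\in\mathfrak D$, and handling $q_0=q_1$ by the identity) are routine, as is the underlying $2\times2$ Kraus computation, which is already present in the $d=2$ case of Theorem~\ref{th:qud-iff}.
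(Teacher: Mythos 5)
Your proposal is correct and follows essentially the same route as the paper: the ``only if'' direction via incoherence of PPOs and the coincidence of incoherent and strictly incoherent qubit operations, and the ``if'' direction by reusing the $d=2$ Kraus pair $L_1,L_2$ from Theorem~\ref{th:qud-iff} and verifying passivity on the extremal passive inputs (your endpoints $t=0$ and $t=\tfrac12$ are exactly the extremal states $\mathbb{I}/2$ and $\op{0}$ that the paper checks before invoking convexity). The only cosmetic difference is that at the maximally mixed input you compute $s_0-s_1$ exactly as a positive multiple of $\epsilon^2-\delta^2$, whereas the paper bounds it below via $a\geq\bar a$ and $q_0p_1-p_0q_1\geq 0$; both reduce to $q_0\geq p_0$, i.e.\ $\vec p\prec_h\vec q$.
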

\begin{proof}
Let us first notice that states $\ket{\psi(\vec{\theta})}$ and $\ket{\phi(\vec{\nu})}$ are equivalent to states $\ket{\psi}=\sum_{i=0}^{1}\sqrt{p_i}\ket{i}$ and  $\ket{\phi}=\sum_{i=0}^{1}\sqrt{q_i}\ket{i}$, respectively. Therefore it suffices to prove above theorem for states $\ket{\psi}$ and $\ket{\phi}$. 

We know that the passivity-preserving operations are incoherent operations and that incoherent operations are equivalent to strictly incoherent operations for the qubit case. Then, the existence of a passivity-preserving operation transforming $\ket{\psi}$ to $\ket{\phi}$ implies that $\vec{p}\prec_h \vec{q}$ based on Refs. \cite{Du2015, DuE2017}, which concludes the ``only if'' part of the theorem.

For the ``if part'', we need to construct a map that transforms $\ket{\psi}$ to $\ket{\phi}$ using the condition $\vec{p}\prec_h \vec{q}$ in the same way as we constructed in the proof of Theorem \ref{th:qud-iff}. We apply this map on the two extremal passive states for a qubit, namely
\begin{align}
\sigma_0^p=  \begin{pmatrix}
1 & 0\\
0 & 0
\end{pmatrix} \qquad  \sigma_1^p= \begin{pmatrix}
\frac{1}{2} & 0\\
0 & \frac{1}{2}
\end{pmatrix} 
\end{align}
For the first extremal state, we have
\begin{align}
&\Phi_p(\sigma_0^p)
=\begin{pmatrix}
 \frac{a q_0}{p_0} & 0\\
0 &  \frac{\bar{a}q_1}{p_0}
\end{pmatrix}.
\end{align}
The above state is passive as $a\geq \bar{a}$, $q_0\geq q_1$, and thus $a q_0\geq \bar{a}q_1$. Similarly, for the second extremal state, we have
\begin{align}
&\Phi_p(\sigma_1^p)
= \begin{pmatrix}
\frac{a q_0}{2p_0} + \frac{\bar{a} q_0}{2p_1} & 0\\
0 & \frac{\bar{a}q_1}{2p_0} + \frac{a q_1}{2p_1}
\end{pmatrix}.
\end{align}
To show that $\Phi_p(\sigma_1^p)$ is a passive state, let us consider
\begin{align*}
\frac{a q_0}{2p_0} + \frac{\bar{a} q_0}{2p_1} -\frac{\bar{a}q_1}{2p_0} - \frac{a q_1}{2p_1}
&\geq \frac{q_0}{2p_0} - \frac{ q_1}{2p_1}\\
&=\frac{q_0 p_1 - p_0 q_1}{2p_0 p_1}\geq 0.
\end{align*}
where the first inequality comes from $p_0 \ge p_1$. In the second inequality, 
we have used the fact that $\vec{p}\prec_h \vec{q}$ implies $q_0\ge p_0$ and $q_1\le p_1$ .  Since all passive states are convex mixtures of $\sigma_0^p$ and $\sigma_1^p$, this shows that the operation $\Phi_p$ is a passivity-preserving operation.
\end{proof}

\subsection{Maximal extractable work from RPPOs}
First, let us consider the case of pure active states from the set $\mathfrak{D}$. For such states, we have following proposition.
\begin{proposition}
The maximal amount of work that can be extracted from any state $\ket{\psi}=\sum_{i=0}^{d-1}\sqrt{p_i}\, e^{i\gamma_i}\ket{i}$ in the set $\mathfrak{D}$ using strictly incoherent RPPOs defined with respect to vectors $\vec{p}$ and $\vec{q}=\{1,0,\cdots,0\}^T$ is equal to the ergotropy of the state.
\end{proposition}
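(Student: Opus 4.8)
The plan is to prove the equality by bounding the extractable work from both sides: exhibiting an RPPO in the prescribed family that extracts the full ergotropy, and showing no RPPO in this family does better (in fact that every such RPPO extracts exactly the ergotropy). As a first step I would pin down the ergotropy of $\ket{\psi}$: since $\ket{\psi}$ is pure its spectrum is $(1,0,\dots,0)$, so the passive state sharing its spectrum is $\op{0}$, and Eq.~\eqref{def:erg} (or a direct computation) gives $W_{\max}^{\mc{U}}(\op{\psi}) = \Tr[\hat{H}_S\op{\psi}] - E_0 = \sum_{i=0}^{d-1} p_i E_i - E_0$.

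For achievability I would invoke Corollary~\ref{rem:conv-pure2}: there is a strictly incoherent RPPO $\Phi_p$ with respect to $\vec{p}$ and $\vec{q}=\{1,0,\dots,0\}^T$ such that $\Phi_p(\op{\psi})=\op{0}$. Since this is a pure-to-pure transformation the von Neumann entropy is unchanged (it vanishes before and after), so---as emphasized in the main text---the decrease in energy faithfully quantifies the work extracted; this work is $\Tr[\hat{H}_S(\op{\psi}-\op{0})] = \sum_i p_i E_i - E_0$, i.e. the ergotropy, and hence the ergotropy is attainable.

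For the converse I would show that \emph{every} strictly incoherent RPPO $\Lambda=\{K_\mu\}$ in this family sends $\op{\psi}$ to $\op{0}$. By reflexivity of $\succ_{vc}$, $\sigma(\vec{p})$ is virtually cooler than itself, so $\Lambda(\sigma(\vec{p}))\succ_{vc}\sigma(\vec{q})=\op{0}$; but, as noted in the proof of Corollary~\ref{rem:conv-pure2}, the only state virtually cooler than $\op{0}$ is $\op{0}$ itself, whence $\Lambda(\sigma(\vec{p}))=\op{0}$. Each incoherent Kraus operator has at most one nonzero entry per column, so we may write $K_\mu=\sum_i c_{\mu,i}\,\ket{f_\mu(i)}\bra{i}$; then $\Lambda(\sigma(\vec{p}))=\sum_\mu\sum_i p_i|c_{\mu,i}|^2\,\op{f_\mu(i)}=\op{0}$, and since each coefficient of $\op{j}$ with $j\geq 1$ is a sum of nonnegative terms, this forces $f_\mu(i)=0$ whenever $p_i>0$ and $c_{\mu,i}\neq 0$. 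As $\ket{\psi}$ is supported exactly on the levels with $p_i>0$, it follows that $K_\mu\ket{\psi}\propto\ket{0}$ for every $\mu$, hence $\Lambda(\op{\psi})=\op{0}$. Therefore every RPPO in the family maps $\ket{\psi}$ to the ground state and extracts exactly $\sum_i p_i E_i-E_0$, so the maximum does too.

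I expect the main obstacle to be the bookkeeping in the converse step: justifying that a strictly incoherent Kraus operator has at most one nonzero entry per column, handling the levels where $p_i=0$ (the tail of the non-increasing vector $\vec{p}$) consistently with the $a/0=\infty$ conventions, and applying the ``only state virtually cooler than $\op{0}$ is $\op{0}$'' fact correctly. A secondary, conceptual check---immediate here---is that the decrease in energy is the operationally correct notion of extracted work, which holds because all transformations involved are pure-to-pure and hence entropy-preserving, unlike the general case discussed in Appendix~\ref{append:extractablework}.
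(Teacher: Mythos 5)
Your proof is correct and its core coincides with the paper's: both establish achievability by invoking Corollary~\ref{rem:conv-pure2} to send $\ket{\psi}$ to the ground state via a strictly incoherent RPPO, observe that this pure-to-pure process involves no entropy change so the energy decrease is the legitimate work quantifier, and identify $\sum_{i} p_i E_i - E_0$ with the ergotropy of a pure state. Where you go beyond the paper is the upper bound. The paper treats it as immediate (no channel output can have energy below $E_0$, so no more than $\sum_i p_i E_i - E_0$ can ever be registered as an energy drop), whereas you prove the stronger statement that \emph{every} strictly incoherent RPPO with respect to $\vec{p}$ and $\vec{q}=(1,0,\cdots,0)^T$ necessarily maps $\op{\psi}$ to $\op{0}$. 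That argument is sound: reflexivity of $\succ_{vc}$ forces $\Lambda(\sigma(\vec{p}))\succ_{vc}\op{0}$, the only state virtually cooler than $\op{0}$ is $\op{0}$ itself (as the paper notes in Corollary~\ref{rem:conv-pure2}), and the one-nonzero-entry-per-column structure of incoherent Kraus operators propagates this constraint from the dephased state $\sigma(\vec{p})$ to the coherent state $\ket{\psi}$, which is supported on the same levels. What this buys you is a uniformity statement the paper only asserts informally after its proof: the output is always the pure ground state, so the energy-decrease quantifier is justified for every admissible operation (not just the constructed one), and the extracted work is the same, namely the ergotropy, for all of them.
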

\begin{proof}
Let us consider an arbitrary state $\ket{\psi}=\sum_{i=0}^{d-1}\sqrt{p_i}e^{i\theta_i}\ket{i}$ in $\mathfrak{D}$. From Corollary \ref{rem:conv-pure2}, we can transform any state in $\mathfrak{D}$ to the ground states $\ket{0}$ using RPPO with $\sigma(\vec{p})=\sum_{i=0}^{d-1}p_i\op{i}$ and $\sigma(\vec{q})=\op{0}$. Now, since there is no entropy change in this process, the maximal work that can be extracted from $\ket{\psi}$ using RPPOs is given by
\begin{align*}
W_{max}&=\mathrm{Tr}[\hat{H}\left( \op{\psi} - \op{0}\right)]\\
&= \sum_{i=0}^{d-1} p_i E_i - E_0.
\end{align*}
The above equation tells us that $W_{max}$ is exactly equal to the ergotropy of the states in $\mathfrak{D}$. See Eq. \eqref{def:erg} for the definition of the ergotropy. This concludes the proof.
\end{proof}

The fact that we can bring  $\ket{\psi}$ to the ground state $\ket{0}$ and extract $W_{max}$ is trivial (it would be enough to use a unitary mapping $\ket{\psi}$ to $\ket{0}$). But what is non trivial here is that there exists a transformation bringing $\ket{\psi}$ to $\ket{0}$ and at the same time this transformation also maps any passive state that is virtually cooler than $\sigma(\vec{p})$ to $\op{0}$. This follows from Corollary \ref{rem:conv-pure2}.

\subsection{Monotones based on Hoffman majorization}
\label{append:mont}
We complete this Section by constructing some functions of states which are monotone under RPPOs, based on the consequences of Hoffman majorization. 
We have shown that, on the set of states $\mathfrak{D}$, the Hoffman majorization provides necessary and sufficient condition for state transformations under RPPOs. This can be exploited to construct a family of monotones that can only decrease under RPPOs. Let us recall a result on Hoffman majorization, which is stated as following theorem.

\begin{theorem}\cite{Marshall2011}
\label{theroem:mono}
Let $\msc{A}$ be the set of all $d\times d$ real symmetric matrices such that for all $A=(a_{ij})\in \msc{A}$, $i=0,\cdots, d-1$  and $k=0,\cdots, d-2$, $\sum_{j=1}^{i} (a_{k,j}-a_{k+1,j}) \geq 0$. Then, $x A x^T\leq y A y^T$ for all $A\in \msc{A}$ if and only if $x\prec_h y$.
\end{theorem}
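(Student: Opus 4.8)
The plan is to bypass Hoffman matrices entirely and reduce the claim to the partial-sum form of Hoffman majorization, then settle the two implications by summation by parts. Throughout I take $x,y\in\mathcal{S}(d)$, which is the natural setting since $\prec_h$ was defined only there; by Theorem~\ref{th:hoff2} (together with the equivalence of $\prec_h$ with ordinary majorization on $\mathcal{S}(d)$) the relation $x\prec_h y$ then holds precisely when $\sum_{i=0}^{k}x_i\le\sum_{i=0}^{k}y_i$ for all $0\le k\le d-2$, the case $k=d-1$ being automatic since $x$ and $y$ are probability vectors.

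For the ``only if'' direction I argue by contraposition, exhibiting a single explicit family of test matrices that already detects every violation of the partial-sum conditions. If $x\not\prec_h y$, pick $k_0\in\{0,\dots,d-2\}$ with $\sum_{i=0}^{k_0}x_i>\sum_{i=0}^{k_0}y_i\ge 0$, let $w=(\underbrace{1,\dots,1}_{k_0+1},0,\dots,0)^T$, and set $A=ww^T$. This $A$ is real, symmetric, and lies in $\msc{A}$: its $k$-th row equals $w^T$ for $k\le k_0$ and is zero otherwise, so each difference of consecutive rows is either identically zero or equal to $w^T$, whose initial column sums are all nonnegative. Since $xAx^T=(\sum_{i=0}^{k_0}x_i)^2>(\sum_{i=0}^{k_0}y_i)^2=yAy^T$, the hypothesis that $xAx^T\le yAy^T$ for all $A\in\msc{A}$ fails, and hence $x\prec_h y$.

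For the ``if'' direction, assume $x\prec_h y$ and put $S_k:=\sum_{i=0}^{k}(y_i-x_i)$, so $S_k\ge 0$ for $0\le k\le d-2$ and $S_{-1}=S_{d-1}=0$. Fix $A=(a_{ij})\in\msc{A}$, write $\vec u:=\vec x+\vec y$ and $z_i:=\sum_{j=0}^{d-1}a_{ij}u_j$. Using the symmetry of $A$ and then summation by parts,
\begin{align*}
yAy^T-xAx^T=(\vec y-\vec x)A(\vec x+\vec y)^T=\sum_{i=0}^{d-1}(y_i-x_i)z_i=\sum_{k=0}^{d-2}S_k\,(z_k-z_{k+1}).
\end{align*}
Since $x,y\in\mathcal{S}(d)$, the vector $\vec u$ is nonnegative and non-increasing; writing $B_i:=\sum_{j=0}^{i}(a_{k,j}-a_{k+1,j})$, which is $\ge 0$ by the definition of $\msc{A}$, a second summation by parts gives
\begin{align*}
z_k-z_{k+1}=\sum_{j=0}^{d-1}(a_{k,j}-a_{k+1,j})u_j=B_{d-1}u_{d-1}+\sum_{i=0}^{d-2}B_i(u_i-u_{i+1})\ \ge\ 0,
\end{align*}
using $u_{d-1}\ge 0$ and $u_i\ge u_{i+1}$. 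Combining the two displays with $S_k\ge 0$ yields $yAy^T\ge xAx^T$.

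The only two steps needing genuine care are (i) checking that the rank-one matrices $ww^T$ really do obey the partial-row-sum condition defining $\msc{A}$ under the statement's indexing convention, and (ii) arranging the nested summations by parts in the ``if'' part so that all the available nonnegativity facts are used at once: $S_k\ge0$ from Hoffman majorization, $B_i\ge0$ from membership in $\msc{A}$, and $u_i\ge u_{i+1}$, $u_{d-1}\ge0$ from $x,y$ being non-increasing probability vectors. The last of these is where the hypothesis $x,y\in\mathcal{S}(d)$ is essential — it is the monotonicity of $\vec x+\vec y$ that makes every quadratic form $v\mapsto vAv^T$ with $A\in\msc{A}$ monotone under $\prec_h$ — and I expect this to be the conceptual crux rather than a real obstacle; no step involves a hard estimate.
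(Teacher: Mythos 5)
Your proof is correct, and since the paper only cites this result from \cite{Marshall2011} without reproducing a proof, your argument is in fact the only self-contained derivation on the table. Both directions check out: the rank-one test matrices $ww^T$ with $w=(1,\dots,1,0,\dots,0)^T$ do lie in $\msc{A}$ and detect any violated partial-sum inequality, and the double summation by parts in the ``if'' direction correctly channels the three nonnegativity inputs ($S_k\ge 0$ from $x\prec_h y$, $B_i\ge 0$ from $A\in\msc{A}$, and monotonicity plus nonnegativity of $\vec x+\vec y$ from $x,y\in\mathcal{S}(d)$) into $yAy^T-xAx^T\ge 0$. Two remarks. First, you silently correct what is evidently a typo in the statement: the condition should read $\sum_{j=0}^{i}(a_{k,j}-a_{k+1,j})\ge 0$ (as the paper itself writes when it applies the theorem), not $\sum_{j=1}^{i}$; your proof uses the $j=0$ version, which is the right one — with the literal $\sum_{j=1}^{i}$ the ``if'' direction would fail. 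Second, your restriction to $x,y\in\mathcal{S}(d)$ is not merely convenient but necessary, exactly as you say: the theorem is false for general vectors, and the implicit domain is forced by the fact that $\prec_h$ is only defined on non-increasing vectors (equality of total sums, needed for $S_{d-1}=0$, is part of that definition). This is essentially the classical Marshall--Olkin argument; there is no gap.
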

Using above theorem, we have following proposition.
\begin{proposition}
For a qudit system with Hamiltonian $\hat{H}=\sum_{i=0}^{d-1} E_i \op{i}$ with $E_0\leq\cdots\leq E_{d-1}$, the following functions of the state $\ket{\psi}=\sum_{i=0}^{d-1} \sqrt{p_i}e^{i\theta_i} \ket{i}\in\mathfrak{D}$  are monotones under RPPOs.

\noindent
(1) function $\mathcal{A}_\alpha = E_{0}^{-\alpha}-\sum_{i=0}^{d-1}p_i^2 E_i^{-\alpha}$ for $\alpha\in(0,\infty)$;

\noindent 
(2) function $\mathcal{B}_\alpha =e^{-\alpha E_0}- \sum_{i=0}^{d-1}p_i^2 e^{-\alpha E_i}$ for $\alpha\in(0,\infty)$.

\noindent
These functions can therefore be labeled as resource quantifiers, while resource is being activity or nonpassivity.
\end{proposition}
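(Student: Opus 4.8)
The plan is to reduce the statement to a purely classical inequality about Hoffman majorization of the underlying probability vectors and then to invoke Theorem~\ref{theroem:mono} with a cleverly chosen symmetric matrix. By Theorem~\ref{th:qud-iff}, a (strictly incoherent) RPPO with respect to $\vec{p}$ and $\vec{q}$ transforms $\ket{\psi}=\sum_i\sqrt{p_i}e^{i\theta_i}\ket{i}\in\mathfrak{D}$ into $\ket{\phi}=\sum_i\sqrt{q_i}e^{i\nu_i}\ket{i}\in\mathfrak{D}$ if and only if $\vec{p}\prec_h\vec{q}$. Hence, to prove that $\mathcal{A}_\alpha$ and $\mathcal{B}_\alpha$ are monotones, it suffices to show that $\vec{p}\prec_h\vec{q}$ implies $\mathcal{A}_\alpha(\vec{p})\geq\mathcal{A}_\alpha(\vec{q})$ and $\mathcal{B}_\alpha(\vec{p})\geq\mathcal{B}_\alpha(\vec{q})$; non-increase under RPPOs then follows at once since any RPPO mapping one state of $\mathfrak{D}$ to another forces the Hoffman-majorization relation between the corresponding vectors.

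Next I would write the relevant difference as a quadratic form. For $\mathcal{A}_\alpha$, set $A=\mathrm{diag}(E_0^{-\alpha},\dots,E_{d-1}^{-\alpha})$ (here one assumes, as is customary in this setting, that $E_i>0$ for all $i$, or equivalently shifts the Hamiltonian; the statement for $\mathcal{B}_\alpha$ needs no such assumption). Then $\vec{x}A\vec{x}^{\,T}=\sum_i x_i^2 E_i^{-\alpha}$, so $\mathcal{A}_\alpha(\vec{p})-\mathcal{A}_\alpha(\vec{q})=\vec{q}A\vec{q}^{\,T}-\vec{p}A\vec{p}^{\,T}$, and by Theorem~\ref{theroem:mono} the right-hand side is non-negative whenever $A\in\msc{A}$ and $\vec{p}\prec_h\vec{q}$. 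The analogous choice for $\mathcal{B}_\alpha$ is $A=\mathrm{diag}(e^{-\alpha E_0},\dots,e^{-\alpha E_{d-1}})$.

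The one point that requires checking, and the only mild obstacle, is the membership $A\in\msc{A}$. Since $A$ is diagonal, $a_{k,j}=E_k^{-\alpha}\delta_{k,j}$, and the partial sums in the definition of $\msc{A}$ collapse: for each $k\le d-2$, the quantity $\sum_{j\le i}(a_{k,j}-a_{k+1,j})$ equals $0$ when $i<k$, equals $E_k^{-\alpha}$ when $i=k$, and equals $E_k^{-\alpha}-E_{k+1}^{-\alpha}$ when $i\ge k+1$. All three are non-negative; the only nontrivial case uses that $t\mapsto t^{-\alpha}$ is decreasing for $\alpha>0$ together with $E_k\le E_{k+1}$. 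Hence $A\in\msc{A}$ and the $\mathcal{A}_\alpha$ part follows; the $\mathcal{B}_\alpha$ part is identical, using instead that $t\mapsto e^{-\alpha t}$ is decreasing for $\alpha>0$, so that the diagonal entries $e^{-\alpha E_i}$ are again non-increasing in $i$.

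Finally, to justify labelling these functions as resource quantifiers I would note that they vanish on the unique free state in $\mathfrak{D}$, namely the ground state $\ket{0}$ (vector $(1,0,\dots,0)$), and are non-negative on all of $\mathfrak{D}$: since $\vec{p}\prec_h(1,0,\dots,0)$ for every $\vec{p}\in\mathcal{S}(d)$, the monotonicity just established gives $\mathcal{A}_\alpha(\vec{p})\geq\mathcal{A}_\alpha(1,0,\dots,0)=E_0^{-\alpha}-E_0^{-\alpha}=0$, and likewise $\mathcal{B}_\alpha(\vec{p})\geq0$. The conceptual content of the whole argument is thus that the diagonal weights $E_i^{-\alpha}$, respectively $e^{-\alpha E_i}$, inherit monotonicity in $i$ from the ordering $E_0\le\cdots\le E_{d-1}$, which is exactly what places the associated matrix in the cone $\msc{A}$ governing Hoffman majorization.
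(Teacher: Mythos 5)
Your proposal is correct and follows essentially the same route as the paper: reduce to $\vec{p}\prec_h\vec{q}$ via Theorem \ref{th:qud-iff}, then apply Theorem \ref{theroem:mono} with the diagonal matrices $\hat{H}^{-\alpha}$ and $e^{-\alpha\hat{H}}$. You in fact go a bit further than the paper by explicitly checking the membership $A\in\msc{A}$ (which the paper asserts without verification), by flagging the implicit assumption $E_i>0$ for $\mathcal{A}_\alpha$, and by confirming non-negativity and vanishing on the ground state to justify the resource-quantifier label.
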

\begin{proof}
Let $\ket{\psi}=\sum_{i=0}^{d-1}\sqrt{p_i}e^{i\theta_i}\ket{i}$ and $\ket{\phi}=\sum_{i=0}^{d-1}\sqrt{q_i}e^{i\gamma_i}\ket{i}$. We know from Theorem \ref{th:qud-iff} that if $\Phi_p(\op{\psi})=\op{\phi}$, then $\vec{p}\prec_h \vec{q}$. Now, from Theorem \ref{theroem:mono}, we know that for all $A=(a_{ij})\in \msc{A}$, $i=0,\cdots, d-1$  and $k=0,\cdots, d-2$ such that $ \sum_{j=0}^{i} (a_{k,j}-a_{k+1,j}) \geq 0$, $\vec{\psi} A \vec{\psi}^T\leq \vec{\phi} A \vec{\phi}^T$ for all $A\in \msc{A}$. For the particular choices $A =\hat{H}^{-\alpha}$ and $A=e^{-\alpha \hat{H}}$, we have $\sum_{i=0}^{d-1} p_i^2 E_i^{-\alpha} \leq \sum_{i=0}^{d-1} q_i^2 E_i^{-\alpha}$ and $\sum_{i=0}^{d-1} p_i^2 e^{-\alpha E_i} \leq \sum_{i=0}^{d-1} q_i^2 e^{-\alpha E_i}$. Thus
\begin{align*}
\mathcal{A}_\alpha(\Phi_p(\op{\psi}))&=\mathcal{A}_\alpha(\op{\phi})\\
&=E_0^{-\alpha}-\sum_{i=0}^{d-1} q_i^2 E_i^{-\alpha}\\
&\leq E_0^{-\alpha} - \sum_{i=0}^{d-1} p_i^2 E_i^{-\alpha}\\
&=\mathcal{A}_\alpha(\op{\psi}).
\end{align*}
Similarly, $\mathcal{B}_\alpha(\Phi_p(\op{\psi}))\leq \mathcal{B}_\alpha(\op{\psi})$. This, shows that the functions $\mathcal{A}_\alpha$ and $\mathcal{B}_\alpha $ are monotones under RPPOs on the set $\mathfrak{D}$.
\end{proof}

\section{Conclusion and discussion}
\label{sec:conc}
In this work, we have introduced a partial order on the set of passive states which generalizes the natural ordering of thermal states in terms of temperature. This provides us with a way to differentiate between passive states based on their usefulness in a thermodynamical context and paves the way to the notion of ``virtually cooler" passive states, making a clear connection with the notion of ``virtual temperature'' \cite{Brunner2012}. The order relation of ``being virtually cooler than'' as we define it is closely related to the likelihood ratio order appearing in statistics \cite{Shaked2007} and can be interpreted as a \textit{relative} passivity condition: a state that is passive \textit{relative to} a given passive state is also ``virtually cooler than'' this other state. We show that this partial order is stronger than majorization, which is only a preorder, in the sense that the partial order relation of ``being virtually cooler than'' implies a majorization relation. In a thermodynamical context, we show that, if used as a refrigerator, a virtually cooler state (with respect to another state) can cool down an external qubit system to a further extent as compared with the cooling effected by this other state.

This leads us to analyze the class of quantum channels that preserve this partial order relation of ``being virtually cooler than''. More specifically, given two fixed passive states $\rho$ and $\sigma$, we ask when does a quantum channel map a virtually cooler state than $\rho$ to a virtually cooler cooler state than $\sigma$? We call such a channel a relative passivity-preserving channel or operation (RPPO) with respect to the two fixed passive states $\rho$ and $\sigma$. We show that RPPOs are necessarily incoherent operations, which cannot create quantum coherence in the energy eigenbasis. We also compare RPPOs with the set of passivity-preserving channels or operations (PPOs), which are those quantum channels that output a passive state if the input is a passive state. We show that PPOs and RPPOs are in general two inequivalent notions. However, in case we define RPPOs by fixing both $\rho$ and $\sigma$ to be the maximally mixed state, then the set of RPPOs is equivalent to the set of PPOs (this is because passive states can be defined as the states that are virtually cooler than the maximally mixed state).

We then turn to the question of what quantum states can be converted into what other quantum states by a strictly incoherent  RPPO (i.e., having the extra property that all its Kraus operators are incoherent). We show that for a special class $\mathfrak{D}$ of active pure states, the interconversion ability of a strictly incoherent RPPO is equivalent to a particular partial order relation which we call Hoffman majorization. In particular, a pure state $\ket{\psi}=\sum_i\sqrt{p_i}e^{i\theta_i}\ket{i}\in \mathfrak{D}$  can be transformed to another pure state $\ket{\phi}=\sum_i\sqrt{q_i}e^{i\gamma_i}\ket{i}\in \mathfrak{D}$ if and only if $\vec{p}\prec_h\vec{q}$, where both $\vec{p}$ and $\vec{q}$ are passive vectors and $\prec_h$ stands for Hoffman majorization. Here, the two passive states with respect to which RPPOs are defined are $\sigma(\vec{p})=\sum_i p_i \op{i}$ and $\sigma(\vec{q})=\sum_i q_i \op{i}$. The proof of this result is constructive and rather tedious. So, in order to make it more instructive, we further elaborate in Appendix \ref{append:exp} on the explicit construction of a strictly incoherent RPPO implementing the desired state transformation for a general qutrit case ($d$=3). 

Just like the notion of PPOs comes with a resource theoretical interpretation of the states that are {\it not} passive, it is natural from the definition of RPPOs to view  the property of {\it not} being virtually cooler than a given passive state as a distinct resource. Accordingly, we introduce two families of resource monotones based on Hoffman majorization, which are nonincreasing under RPPOs. It would be quite relevant to better understand the nature of this particular resource, that is, ``not being virtually cooler than'', in a thermodynamical scenario. Further, it is worth noting here that in the absence of a heat bath, Hoffman majorization is a natural alternative for thermo-majorization \cite{Michal2013, Brandao2013, Brandao2015b}, which is meaningful only in the presence of a heat bath.

Interestingly, things become easier for qubit state transformations ($d$=2) and we show that Hoffman majorization becomes a necessary and sufficient condition for state transformations under PPOs (it is not needed  to consider RPPOs any more).  We then characterize the general passivity-preserving operations, which are also incoherent operations, and provide explicit forms of the Kraus-operators that comprise passivity-preserving operations in the qubit case. However, the characterization of passivity-preserving operations in arbitrary dimension is left open for future research. It would be very interesting to see whether Hoffman majorization, which preserves the nonincreasing nature of vectors, plays some particular role in the interconversion of pure $d$-dimensional states under passivity-preserving operations as it does for qubits.

As a limiting case of passivity-preserving operations, we also introduce the class of operations that always map any state (passive or active) to a passive state, and denote such operations as activity-breaking operations (ABOs). Interestingly, we show that these operations admit a clean characterization in terms of measure-and-prepare channels.

Finally, let us stress that we expect the notion of virtually cooler passive states introduced here to play an important role in thermodynamical contexts, as illustrated in Appendix~\ref{append:thermo} with the simple example of cooling an external qubit system using energy-preserving swap operations and a passive state for the refrigerator. It would actually be of great value to uncover the full implications of the relation of ``being virtually cooler than'' in more general thermodynamical contexts. In particular, considering the task of work extraction from a single quantum system under some quantum channel, it will be very interesting to see what are the consequences of using a virtually cooler passive state as the machine state. Importantly, to answer this question, one will need a good definition of extractable work under quantum channels (this is briefly discussed in Appendix~\ref{append:extractablework}). Unlike the case where unitaries are used for work extraction, this cannot simply be equal to the energy change of the system. The reason behind this is the fact that quantum channels introduce noise and this noise should be carefully separated from the energy change in order to determine the useful work. This is very intriguing as there is no notion of temperature or thermal bath in our scenario with passive states, so the usual separation, which is obtained by subtracting the entropy (times the temperature) from the internal energy, is simply not viable. In other words, this calls for a new notion of free energy which would generalize the usual free energy $F=E-TS$ in situations where no thermal bath at a given temperature is considered. A tempting possibility would be to use the notion of virtual temperatures instead, but we leave this question open in the present work.

\medskip

\begin{acknowledgments}
US and NJC acknowledge support from the F.~R.~S.-FNRS Foundation under Project No.~T.$0224.18$.
\begin{wrapfigure}{l}{0.15\textwidth}
    \includegraphics[width=0.167\textwidth]{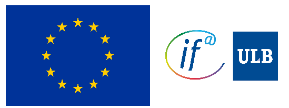}
\end{wrapfigure}
SD acknowledges Individual Fellowships at Universit\'{e} libre de Bruxelles; this project receives funding from the European Union's Horizon 2020 research and innovation programme under the Marie Sk\l odowska-Curie grant agreement No.~801505.
\end{acknowledgments}

\appendix

\section{Hoffman majorization}
\label{append:Hoff-Maj}
In this Appendix, we first review the proofs of Theorems~\ref{th:hoff1} and \ref{th:hoff2} at the core of the Hoffman majorization. The proofs in Sections \ref{append:A1} and \ref{append:A2} are inspired from Ref. \cite{Hoffman1969}. Then, we propose in Section \ref{append:sym-maj} an alternative condition for expressing Hoffman majorization 
that is based on what we call passive $t$-transforms. We show that Hoffman majorization can also be realized with asymmetric doubly-stochastic matrices (products of passive $t$-transforms) and explore in particular the cases of two and three dimensions. This analysis is also of independent mathematical interest.

\subsection{Proof of Theorem \ref{th:hoff1}}
\label{append:A1}
From the structure of the set $\mc{M}^{\mc{P}(d)}$ it is clear that any matrix $M^{\tau}\in \mc{M}^{\mc{P}(d)}$ satisfies the conditions $(a)$ to $(d)$ listed in Sec. \ref{sec:prelims-A}, which are necessary to be a Hoffman matrix. This follows by noting that for each partition $\tau=(\tau_1,\tau_2,\cdots \tau_k)$, $M^{\tau}$ is a $d\times d$ matrix, written as $M^{\tau}=\oplus_{t=1}^{k}M_{\tau_t}$, where $M_{\tau_t}=\frac{\mathfrak{I}_{|\tau_t |}}{|\tau_t |}$, $\mathfrak{I}_{|\tau_t |}$ is a $|\tau_t |\times |\tau_t |$ matrix of all ones and $|\tau_t |$ is the cardinality of part $\tau_t$. Now let $R\in\mc{R}(d)$ be a given Hoffman matrix. Our aim is to show that $R$ can be written as convex mixture of matrices $M^{\tau}\in \mc{M}^{\mc{P}(d)}$.

\bigskip
\noindent
{\bf Step 1 (Relationships between matrix elements of  $\mathbf{R}$)}. We will first establish some relations between matrix elements of $R$ and in particular, $R_{i,j}\geq 0$ for all $i,j$.
\begin{enumerate}
\item $R_{i,j}\geq R_{i,j+1}$ for $i\leq j$. We prove this by induction. This relation is trivially true for $i=0$ following from condition $(d)$. Let us assume that the relation is true for $i=k$, i.e., $R_{k,j}\geq R_{k,j+1}$ for $k+1\leq j$. Then from condition $(d)$ we have $R_{k+1,j}+R_{k,j+1}\geq R_{k,j}+ R_{k+1,j+1}$. Adding the two conditions, we get $R_{k+1,j}\geq R_{k+1,j+1}$. Thus the relation  is true for $i=k+1$. This completes the induction.
\item Similarly, $R_{i,j}\geq R_{i-1,j}$ for $1\leq i\leq j$. The proof is similar to the above relation.
\item Using the condition $R_{i,j}=R_{j,i}$, we have (i) $R_{i,j}\geq R_{i+1,j}$ for $i\geq j$, and (ii) $R_{i,j}\geq R_{i,j-1}$ for $1\leq j\leq i$.
\item Combining all the above three points and condition $(a)$ that $R_{0,d-1}\geq 0$, we conclude that $R_{i,j}\geq 0$ for all $i,j$.
\end{enumerate}

\bigskip
\noindent
{\bf Step 2 (Finding a partition based on step 1)}. We find a relevant partition recursively. Let us consider $i_1=\max j$ such that $R_{0,j}>0$. If $i_1< d-1$, then there exists at least one $j>i_1$ such that $R_{i_1+1,j}>R_{i_1,j}$. If this is not the case then from point 3 above we have $R_{i_1+1,j}\leq R_{i_1,j}$ for all $j$. Then from the condition that $1=\sum_{j}R_{i_1+1,j}\leq \sum_{j} R_{i_1,j}=1$ for all $j$, we have $R_{i_1+1,j}= R_{i_1,j}$. In particular, $R_{0,i_1+1}=R_{i_1+1,0}=R_{i_1,0}=R_{0,i_1}$. This contradicts the definition of $i_1$. If $i_1<d-1$, then let $i_2=\max j$ such that $R_{i_1+1,j}>R_{i_1,j}$. Now if $i_2<d-1$, then there exists at least one $j>i_2$ such that $R_{i_2+1,j}>R_{i_2,j}$. The proof of this fact is similar as the case for $i_1$. Carrying out this procedure results in an increasing sequence $0\leq i_1<\cdots i_{k}<d-1$. Now consider a partition $\tau=\left(\tau_1,\cdots, \tau_k\right)$, where
\begin{align*}
\tau_1&=(1,\cdots,i_1);\\
\tau_2&=(i_1+1,\cdots,i_2);\\
&\vdots\\
\tau_k&=(i_k,\cdots,d-1).
\end{align*}

\bigskip
\noindent
{\bf Step 3 (If $\mathbf{R}$ satisfies the conditions in its definition as equality then so does $\mathbf{M^{\tau}}$):--}If $R_{0,d-1}=0$ then $M^{\tau}_{0,d-1}=0$ as well as there is no sub-partition containing $0$ and $d-1$. Now for $i\leq j$ let
\begin{align*}
M^{\tau}_{i,j}+M^{\tau}_{i-1,j+1}> M^{\tau}_{i-1,j} + M^{\tau}_{i,j+1}.
\end{align*}
Then there exists an index $m<k$ and a partition $\tau_{m+1}=(i_m+1,\cdots i_{m+1})$ such that $i=i_m+1$ and $j=i_{m+1}$ and $i,j\in\tau_{m+1}$. This means that $R_{i,j}>R_{i-1,j}$ and $R_{i,j+1}=R_{i-1,j+1}$, which implies that $R_{i,j}+R_{i-1,j+1}> R_{i-1,j} + R_{i,j+1}$. By negation of this result, it is proved that if $R_{i,j}+R_{i-1,j+1}= R_{i-1,j} + R_{i,j+1}$ then $M^{\tau}_{i,j}+M^{\tau}_{i-1,j+1}= M^{\tau}_{i-1,j} + M^{\tau}_{i,j+1}$.

\bigskip
\noindent
{\bf Step 4 (Constructing $\mathbf{R}$ from $\mathbf{M^{\tau}}$):--}If $M^{\tau}$ constructed above is such that $R=M^{\tau}$ then we are done. Otherwise from the construction of $M^{\tau}$ above, there exists a small constant $\alpha>0$ such that $S=(R-\alpha M^{\tau})/(1-\alpha)\in \mc{R}(d)$. Let $\alpha_1=\max \alpha$ such that $S\in\mc{R}(d)$. Let $R_0=R$ and $M^{\tau^1}=M^{\tau}$, and $R_1=S$ then
\begin{align*}
R_0=(1-\alpha_1)R_1 +\alpha_1 M^{\tau^1}.
\end{align*}
Note that $R_1$ satisfies all the equalities in conditions $(a)$ to $(d)$ that $R_0$ satisfies. Moreover, there is at least one inequality that $R_0$ satisfies as strict inequality but $R_1$ satisfies as an equality. Now decomposing $R_1$ in similar way as $R_0=R$, we obtain a sequence of matrices $R_m$ and $M^{\tau^m}$, and constants $0<\alpha_1,\cdots,\alpha_m<1$ such that
\begin{align*}
&R_0=(1-\alpha_1)R_1 +\alpha_1 M^{\tau^1};\\
&R_1=(1-\alpha_2)R_2 +\alpha_2 M^{\tau^2};\\
&~~~\vdots\\
&R_{m-1}=(1-\alpha_m)R_m +\alpha_m M^{\tau^m}.
\end{align*}
Here, the number of inequalities in $R_m$ satisfied as equalities strictly increase with $m$. But since there are only finite such inequalities there will exist an $m$ such that $R_m\in \mc{M}^{\mc{P}(d)}$. Thus $R=R_0$ is the convex hull of $\mc{M}^{\mc{P}(d)}$ \cite{Hoffman1969}.

\subsection{Proof of Theorem \ref{th:hoff2}}
\label{append:A2}
Let $x,y\in\mathcal{S}(d)$ be two passive vectors. The `if' part of theorem is trivial, i.e., if $x=Ry$ and $R\in\mc{R}(d)$ then $x\prec_h y$ is trivially true. We will prove the `only if' part by induction. Again, the proof presented here relies on Ref. \cite{Hoffman1969}. For $d=1$, we have $x\prec_h y$ implies $x=y$ and $R$ is just $1$.

\medskip
\noindent
{\bf Case 1:--}Let $\sum_{i=0}^{k-1}x_i=\sum_{i=0}^{k-1}y_i$ for $k<d$. Also, let $x'\in\mc{S}(k)$ and $x''\in\mc{S}(d-k)$ be such that they coincide with first $k$ and last $d-k$ terms of $x$, respectively. Define $y'$ and $y''$ similarly. By construction $x'\prec_h y'$ and $x''\prec_h y''$, then by inductive assumption there exist Hoffman matrices $R'\in \mc{R}(k)$ and $R''\in \mc{R}(d-k)$ such that $x'=R'y$ and $x''=R''y''$. Then, $x=Ry$, where $R=R'\oplus R''$. It is easy to see that $R\in\mc{R}(d)$.

\medskip
\noindent
{\bf Case 2:--}Let $\sum_{i=0}^{k-1}x_i<\sum_{i=0}^{k-1}y_i$ for all $k<d$. Since $x\in\mathcal{S}(d)$, we have $\sum_{i=0}^{d-1}x_i\leq d x_1$ or $x_1\geq 1/d \sum_{i=0}^{d-1}x_i$. Now define $z=R_0y$, where $R_0=\mathfrak{I}_{d}/d$. Therefore we have 
\begin{align*}
x_1\geq \frac{1}{d} \sum_{i=0}^{d-1}x_i =\frac{1}{d} \sum_{i=0}^{d-1}y_i=z_0=\cdots=z_{d-1}.
\end{align*}
Now there exists a vector $w = \alpha z+(1-\alpha)y$, where $0 < \alpha \leq 1$ such that $\sum_{i=0}^{k-1}x_i\leq\sum_{i=0}^{k-1}w_i$ for $0\leq k\leq d-1$ and equality for $k=d$. Moreover, there is at least one $k\neq d$ such that $\sum_{i=0}^{k-1}x_i=\sum_{i=0}^{k-1}w_i$, then from case 1, there exists $R\in \mc{R}(d)$ such that $x=R w$, i.e.
\begin{align*}
x&=R(\alpha R_0y+(1-\alpha)y)\\
&=\left(\alpha R_0+(1-\alpha)R\right)y.
\end{align*}
Clearly $\alpha R_0+(1-\alpha)R\in\mc{R}(d)$, which completes the proof the theorem.

\subsection{Hoffman majorization and passive $t$-transforms}
\label{append:sym-maj}
It is known that the usual majorization relation $x\prec y$ between two $d$-dimensional probability vectors $x$ and $y$ is equivalent to the existence of a sequence $\mc{T}=(T_1,\cdots,T_l)$ of so-called $t$-transforms such that $x=\mc{T}y:=T_l\circ\cdots\circ T_1 y$ \cite{Marshall2011}. A $t$-transform $T$ is defined as a doubly stochastic matrix that acts nontrivially only on two components of the probability vector and the nontrivial $2\times 2$ block can be written as
\begin{align*}
\begin{pmatrix}
t & \bar{t}\\
\bar{t}& t
\end{pmatrix}
\end{align*}
with $0\leq t=1- \bar{t}\leq 1$.
Interestingly, the existence of a sequence of $t$-transforms is not specific to usual majorization, and it can be shown to exist for other types of majorization, such as $p$-majorization (see e.g. Chapter 14 of Ref. \cite{Marshall2011} for definition). Further, the existence of a sequence of $t$-transforms sometimes simplifies the mathematical analysis \cite{Zylka1985} and it is uniquely suited to certain physical processes, see e.g. Ref. \cite{Thon2004}. Therefore it is natural to ask whether there also exists a characterization of Hoffman majorization in terms of a sequence of some special $t$-transforms. Theorem~\ref{th:exist-t-for-hm} answers this question in the affirmative, building on the notion of \textit{passive t-transforms}. 
 We define a $t$-transform $T$ as passive if, for any $2\leq k \leq d$, it can be written as
\begin{align}
\label{eq:passive-t-trans}
T=\mathbb{I}_{k-2}\oplus \begin{pmatrix}t & \bar{t}\\ \bar{t}& t\end{pmatrix} \oplus \mathbb{I}_{d-k},
\end{align}
where $1/2\leq t=1-\bar{t}\leq 1$. Here and below we denote the $k\times k$ identity matrix by $\mathbb{I}_k$.  We stress that the two components of the probability vector that are acted upon in Eq. \eqref{eq:passive-t-trans} must be consecutive components, unlike for usual $t$-transforms.  We call $T$ a  passive $t$-transform because it maps the set of passive vectors into itself.
We do not use passive $t$-transforms in the context of the current paper, but we believe that this question is of independent interest for quantum thermodynamics dealing with passive states (note that passive $t$-transforms are also studied in economics under the name of `altruistic transfers', see e.g. Ref. \cite{Marshall2011}). 

\begin{theorem}
\label{th:exist-t-for-hm}
Let $x,y\in\mc{S}(d)$ be two passive vectors. The Hoffman majorization relation $x\prec_h y$ holds if and only if there exists a finite sequence $\mc{T}$ of passive $t$-transforms such that $x=\mc{T}y$.
\end{theorem}

\begin{proof}
We note first that for $x,y\in\mc{S}(d)$, the existence of a finite sequence $\mc{T}=(T_1,\cdots,T_l)$  of passive $t$-transforms such that $x=\mc{T}y$ is equivalent to the existence of a finite sequence $\mc{V}=\left(s^{(1)},\cdots,s^{(l)}\right)$ of vectors in $\mc{S}(d)$ such that $s^{(i)}$ and $s^{(i+1)}$ differ in two components only, as in Eq. \eqref{eq:passive-t-trans}, and $x=s^{(0)}\prec_h s^{(1)} \prec_h\cdots\prec_h s^{(l)}\prec_h s^{(l+1)}=y$ for some integer $l$. Then, the `if' part of the theorem is trivial, i.e, if there exists such a finite sequence $\mc{V}$, then $x\prec_h y$.

For the `only if' part we will prove the theorem by induction in the similar way as we proved Theorem \ref{th:hoff2}. For $d=2$, the theorem is obvious, i.e., for $x,y\in\mc{S}(2)$, then $x\prec_h y$ implies $x=Ry$, where $R$ is a $2\times 2$ Hoffman matrix which is also a $2\times 2$ passive $t$-transform
\begin{align*}
T= \begin{pmatrix}
t & \bar{t}\\
\bar{t}& t
\end{pmatrix}
\end{align*}
with $t\geq 1/2$ and $\bar{t}=1-t$. Thus the theorem is satisfied. Now two cases arise.

\medskip
\noindent
{\bf Case 1:--}Let $x\prec_h y$ and $\sum_{i=0}^{k-1}x_i=\sum_{i=0}^{k-1}y_i$ for $k<d$. Also, let $x'\in\mc{S}(k)$ and $x''\in\mc{S}(d-k)$ be such that they coincide with first $k$ and last $d-k$ terms of $x$, respectively. Define $y'$ and $y''$ similarly. By construction $x'\prec_h y'$ and $x''\prec_h y''$, then by inductive assumption there exist two sequences $\mc{T}_1=\{u^{(1)},\cdots,u^{(l)}\}$ and $\mc{T}_2=\{v^{(1)},\cdots,v^{(m)}\}$ of vectors in $\mc{S}(k)$ and $\mc{S}(d-k)$, respectively, such that $x'=u^{(0)}\prec_h u^{(1)} \prec_h\cdots\prec_h u^{(l)}\prec_h u^{(l+1)}=y'$ and $x''=v^{(0)}\prec_h v^{(1)} \prec_h\cdots\prec_h v^{(m)}\prec_h s^{(m+1)}=y''$. Then these two sequences can be composed using 
$x=(\mc{T}_1 y',\mc{T}_2y'')=( \mathbb{I}_k \oplus \mc{T}_2 )(\mc{T}_1\oplus\mathbb{I}_{d-k})(y',y'')$ and the theorem is satisfied for dimension $d$.

\medskip
\noindent
{\bf Case 2:--}Let $\sum_{i=0}^{k-1}x_i<\sum_{i=0}^{k-1}y_i$ for all $k<d$. Since, $x_0<y_0$, let us define $w=aTy+\bar{a}y$, where $T=\begin{pmatrix}t & \bar{t}\\ \bar{t} & t\end{pmatrix}\oplus \mathbb{I}_{d-2}$, $1/2\leq t=1-\bar{t}\leq 1$, and $0<a=1-\bar{a}\leq 1$, and let us choose $a$ such that $x_0=w_0$  (see e.g. Example \ref{ex:equal}). Now, we can apply case 1 to conclude that $x$ and $w$ satisfying $x\prec_h w$ satisfy the theorem. Note that $aT+\bar{a}\mathbb{I}$ itself a passive $t$-transform. Thus the theorem is also true for $w$ and $y$ satisfying $w\prec_h y$. Combining $x\prec_h w$ and $w\prec_h y$, we conclude that $x\prec_h y$ implies existence of a finite sequence $\mc{V}=\left(s^{(1)},\cdots,s^{(l)}\right)$ of vectors in $\mc{S}(d)$ such that $s^{(i)}$ and $s^{(i+1)}$ differ in two components only and $x=s^{(0)}\prec_h s^{(1)} \prec_h\cdots\prec_h s^{(l)}\prec_h s^{(l+1)}=y$. This concludes the proof of the theorem.
\end{proof}

\begin{example}[Construction of passive $t$-transforms]
\label{ex:equal}
Let us consider the two passive vectors $x=(0.6, 0.2, 0.1, 0.1)^T$
and $y=(0.65, 0.25, 0.05, 0.05)^T$ such that $\sum_{i=0}^{k-1} x_i < \sum_{i=0}^{k-1} y_i$, where $1\leq k < 4$, hence $x\prec_h y$. 
In the following, we show that we can always find a sequence $\mc{T}$ of passive $t$-transforms such that $x=\mc{T}y$. Let $S_0=\frac{1}{2}\begin{pmatrix}1 & 1\\ 1 &  1\end{pmatrix}$, $S=S_0\oplus \mathbb{I}_2$, $0< a \leq 1$, and define
\begin{align*}
w &:=a \, S y+(1-a)\, y=\begin{pmatrix}
-0.2 \, a + 0.65\\
0.2 \, a + 0.25\\
0.05\\
0.05
\end{pmatrix}.
\end{align*}
We can choose $a=1/4$ such that $x_0=w_0$ and in particular we have $w = (0.6, 0.3, 0.05, 0.05)^T$. 
Now, let $S'=1\oplus S_0\oplus 1$, $0<b\leq 1$, and define
\begin{align*}
w' &:=b \, S' w+(1-b)\, w=\begin{pmatrix}
0.6\\
-0.125 \, b + 0.3\\
0.125 \, b +0.05\\
0.05
\end{pmatrix}.
\end{align*}
Now we choose $b=0.8$ such that $w'_1=x_1$ and in particular, we have $w'=(0.6, 0.2, 0.15, 0.05 )^T$. Further, let $S''=\mathbb{I}_2\oplus S_0$, $0<c\leq 1$, and define 
\begin{align*}
w'' &:=c \, S'' w'+(1-c)\, w'=\begin{pmatrix}
0.6\\
0.2\\
-0.05 \, c +0.15\\
0.05 \, c +0.05
\end{pmatrix}.
\end{align*}
Now, we choose $c=1$ such that $w''_2=x_2$ and in particular, we have $w''=(0.6, 0.2, 0.1, 0.1 )^T=x$. Thus $x$ can be obtained from $y$ by applying the sequence
\begin{align*}
\mc{T}= S''(0.8 S'+0.2\mathbb{I}_4)(0.25 S + 0.75 \mathbb{I}_4).
\end{align*}
of passive $t$-transforms.
Note that the above decomposition is not unique as we could, for example, have constructed another sequence $\mc{T}$ by first applying $S_0=1\oplus \begin{pmatrix}0.5 & 0.5\\ 0.5 & 0.5\end{pmatrix}\oplus 1$ to $y$ and then following the above procedure starting from the highest weight in $y$.
\end{example}

Interestingly, we see that a sequence of passive $t$-transforms between vectors $x,y\in \mc{S}(d)$ satisfying $x\prec_h y$ results in a doubly-stochastic matrix that is most often asymmetric, despite the fact that Hoffman matrices are symmetric. This motivates us to define yet another equivalent condition for Hoffman majorization relying on what we call asymmetric Hoffman matrices, as exposed in the following Proposition.
\begin{proposition}
\label{prop:asym-hoff}
For two passive vectors $x,y\in\mc{S}(d)$, the condition $x\prec_h y$ holds if and only if there exists a doubly-stochastic matrix $D_{ah}$ such that $x=D_{ah}\, y$ and $D_{ah}\xi \in \mathcal{S}(d)$ for all $\xi\in \mathcal{S}(d)$. We call $D_{ah}$ an asymmetric Hoffman matrix (as it differs from the Hoffman matrix $R$) and define it as a $d\times d$ doubly stochastic matrix that satisfies a list of conditions as follows:
\begin{align*}
D_{ah}&=\begin{pmatrix}
a_{00} & \cdots & a_{0(d-1)}\\
\vdots& \ddots& \vdots\\
a_{(d-1)0} & \cdots & a_{(d-1)(d-1)}
\end{pmatrix},
\end{align*}
where $\sum_{i=0}^{d-1}a_{ij}= 1$,  $\forall j\in\{0,\cdots, d-1\}$, and $\sum_{j=0}^{d-1}a_{ij}=1$, $\forall i\in\{0,\cdots, d-1\}$, is an asymmetric Hoffman matrix provided it satisfies 
\begin{align*}
&\sum_{i=0}^j a_{0i} \geq \sum_{i=0}^j a_{1i}\geq \cdots \geq \sum_{i=0}^j a_{(d-1)i},~\forall~j \in \{ 0,\cdots, d-1\},
\end{align*}
where equality holds for $j=d-1$.
\end{proposition} 

\begin{proof}
If there exists $D_{ah}$ such that $x=D_{ah}\, y$, then $x\prec_h y$ is obvious since $D_{ah}$ is doubly-stochastic. The fact that $D_{ah}\, y$ is a passive vector when $x$ is a passive vector can be easily checked by considering the action of $D_{ah}$ on an extremal passive vector $\ket{e_j}$. We have
\begin{align*}
D_{ah}\ket{e_j}&=\frac{1}{j+1}\sum_{k,l=0}^{d-1} a_{kl}\ket{k}\bra{l}\sum_{i=0}^{j}\ket{i}\\
&=\frac{1}{j+1}\sum_{k,l=0}^{d-1} \sum_{i=0}^{j} a_{kl}\ket{k}\delta_{li}\\
&=\frac{1}{j+1}\sum_{k=0}^{d-1} \sum_{i=0}^{j} a_{ki}\ket{k}.
\end{align*}
For the output vector to be passive for all $j$, we need that $(D_{ah}\ket{e_j})_k\geq (D_{ah}\ket{e_j})_l$ for all $k\leq l$ which are the conditions in Proposition \ref{prop:asym-hoff}. This completes the proof of the ``if" part of the proposition.

Now, in order to prove the ``only if" part of the proposition, we assume that $x\prec_h y$. Then, from Theorem \ref{th:exist-t-for-hm}, we know there is a sequence $\mc{T}=(T_1,\cdots,T_l)$ of passive $t$-transforms such that $x=\mc{T}y$, and we must check that that it yields an asymmetric Hoffman matrix $D_{ah}$, satisfying the conditions in Proposition \ref{prop:asym-hoff}. Without loss of generality assume that the doubly stochastic matrix corresponding to $T_l$ is $D_l=\mathbb{I}_{d-2}\oplus \begin{pmatrix}t & \bar{t}\\ \bar{t} & t\end{pmatrix}$ with $1/2\leq t=1-\bar{t}\leq 1$ (see Example \ref{ex:equal}).  Now, let us assume inductively that the doubly stochastic matrix $D^{(l-1)}$ corresponding to $\mc{T}'=T_{l-1}\circ\cdots\circ T_1$ satisfies the conditions in Proposition \ref{prop:asym-hoff}. Note that $D'=D_l D^{(l-1)}$ is such that only last two rows of $D'$ are different from $D^{(l-1)}$. In fact, $D'_{d-2,i} = t D^{(l-1)}_{d-2,i}+\bar{t}D^{(l-1)}_{d-1,i}$ and $D'_{d-1,i} = \bar{t} D^{(l-1)}_{d-2,i}+ t D^{(l-1)}_{d-1,i}$ for all $i=0,\cdots, d-1$. It is easy to see that $D'$ also satisfies the conditions in Proposition \ref{prop:asym-hoff} and hence it is an asymmetric Hoffman matrix. Thus $x\prec_h y$ implies the existence of an asymmetric Hoffman matrix $D'$ such that $x=D'y$. 
\end{proof}

Let us illustrate Proposition \ref{prop:asym-hoff} in some simple cases. 
In the two-dimensional case, it is trivial to see that every passive $t-$transform satisfies condition $t \ge \bar{t}$ since 
with $t\geq 1/2$ and $\bar{t}=1-t$.
In the three-dimensional case, the composition of passive $t$-transforms on $\mc{S}(3)$ is analyzed in the following proposition.
\begin{proposition}
\label{prop:exist-t}
For three dimensional case, the ordered product of two $t$-transforms is passive, i.e., it maps $\mc{S}(d)$ into $\mc{S}(d)$, if and only if both the $t$-transforms are passive.
\end{proposition}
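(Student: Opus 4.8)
The plan is to handle the two implications separately. For the \emph{if} direction, I would simply invoke composability: by definition a passive $t$-transform maps the set $\mathcal{S}(3)$ of $3$-dimensional non-increasing probability vectors into itself, so a composition of two such maps does too and is therefore passive. This is nothing but the $d=3$ instance of the composability property already recorded for passivity-preserving operations.

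For the \emph{only if} direction I would argue by contraposition, using the convenient criterion that a $3\times 3$ doubly-stochastic matrix $D$ is passive precisely when its first column is non-increasing (reading down the rows) and its last column is non-decreasing. This is the $j\in\{1,2\}$ part of the row-partial-sum condition of the preceding proposition on asymmetric Hoffman matrices; equivalently, $D$ preserves $\mathcal{S}(3)$ iff it sends the two non-trivial extreme points $(1,0,0)^T$ and $\tfrac12(1,1,0)^T$ to non-increasing vectors, the point $\tfrac13(1,1,1)^T$ being fixed by every doubly-stochastic matrix. Now recall that a $t$-transform is $tI+(1-t)P$ with $t\ge 1/2$ and $P$ a transposition, and that among these the only ones that fail to be passive are the $C(c):=cI+(1-c)P_{02}$ with $c<1$, where $P_{02}$ swaps the two extreme energy levels $0$ and $2$: indeed $C(c)$ has first column $(c,0,1-c)^T$, which is not non-increasing, whereas a $t$-transform on an adjacent pair $(0,1)$ or $(1,2)$ has first column $(a,1-a,0)^T$ or $(1,0,0)^T$ and last column $(0,0,1)^T$ or $(0,1-b,b)^T$, so it is passive.

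Hence, if one of the two factors of the ordered product is not passive, it must be such a $C(c)$, and --- this is where the word \emph{ordered} is used --- in the ordered product the transposition of the extreme levels is the one applied last, so the product has the form $C(c)\,X$ with $X$ a passive $t$-transform (the remaining possibility, that both factors swap levels $0$ and $2$, reduces at once to the two-dimensional statement on $\mathrm{span}\{\ket0,\ket2\}$). To conclude I would split on the pair on which $X$ acts: if $X$ acts on $(0,1)$, or is the identity, its last column is $(0,0,1)^T$, so the last column of $C(c)X$ is $C(c)(0,0,1)^T=(1-c,0,c)^T$, which is not non-decreasing since $1-c>0$; if $X$ acts on $(1,2)$, its first column is $(1,0,0)^T$, so the first column of $C(c)X$ is $C(c)(1,0,0)^T=(c,0,1-c)^T$, which is not non-increasing. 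In either case $C(c)X$ violates the criterion above, so it is not passive, which establishes the contrapositive.

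The step I expect to be the main obstacle is pinning down the precise reading of ``ordered product'' and justifying that it forces the transposition of the extreme levels to act last. This is genuinely essential: an ``averaging'' $t$-transform (parameter $1/2$) applied \emph{after} a non-passive $C(c)$ can restore non-increasingness and thereby conceal the non-passivity, so the analogous statement fails for the reversed order. Once that convention is fixed, everything else is a short inspection of the first and last columns of products of sparse $3\times3$ matrices.
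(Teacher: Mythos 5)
Your proposal proves a different statement from the one the paper intends, and the difference is exactly where the content of the proposition lies. In the paper, the ``ordered $t$-transforms'' are the two adjacent-level transforms $T_1=t\,\mathbb{I}+\bar t\,P_{01}$ (mixing levels $0,1$) and $T_2=s\,\mathbb{I}+\bar s\,P_{12}$ (mixing levels $1,2$), with parameters anywhere in $[0,1]$; by the Remark just before the proposition, a $t$-transform is called \emph{passive} when its parameter is at least $1/2$. The claim is that each of the two compositions $T_1T_2$ and $T_2T_1$ maps $\mathcal{S}(3)$ into itself if and only if $t\geq 1/2$ and $s\geq 1/2$, and the paper argues by writing out both products and testing them on the three extremal non-increasing vectors $\vec{e}_k$. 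You instead build $t\geq 1/2$ into the definition of a $t$-transform, so that the only non-passive factor you ever confront is the extreme-level mixture $C(c)=c\,\mathbb{I}+(1-c)P_{02}$, and you then conjecture that ``ordered'' forces that factor to act last. Under the paper's reading your classification of the non-passive $t$-transforms is false: an adjacent-pair transform with parameter below $1/2$ (e.g.\ $T_1$ with $t<1/2$) is non-passive, and your contrapositive never treats it, although that is precisely the case the necessity claim is about. Also, the paper covers both orders of composition, whereas your version, as you yourself note, fails when the offending factor acts first, so the ``transposition acts last'' convention cannot be what is meant; the $(0,2)$ swap simply never appears in the paper's setup.

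On the positive side, your machinery is sound and is essentially the paper's tool: linearity plus convexity reduces everything to the extreme points, and your criterion (first column non-increasing, last column non-decreasing) is a correct repackaging, since for doubly-stochastic $D$ the image of $\vec{e}_2$ is determined by the last column. But if you apply that criterion to the paper's actual objects you find that $T_1T_2$ preserves $\mathcal{S}(3)$ iff $t\geq 1/2$ and $s(1+t)\geq t$, and the second condition does not force $s\geq 1/2$ unless $t=1$ (take $t=1/2$, $s=2/5$); symmetrically, $T_2T_1$ with $t=2/5$, $s=1/2$ is passive although $T_1$ is not. So the necessity direction is delicate even under the paper's own reading (the paper's Case 1 only verifies sufficiency of $s\geq\bar s$, and in Case 2 the step ``$t-s\bar t\geq 0$ implies $t\geq 1/2$'' does not follow). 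The constructive response would have been to carry out the extreme-point computation on $T_1T_2$ and $T_2T_1$ as the paper does and flag this issue, rather than to reinterpret what a $t$-transform and the ordering mean.
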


\begin{proof}
The two possible choices of ordered $t$-transforms are
\begin{align*}
T_1=\begin{pmatrix}
t & \bar{t} & 0\\
\bar{t} & t & 0\\
0 & 0 &1
\end{pmatrix};~ T_2=\begin{pmatrix}
1 & 0 & 0\\
0 & s & \bar{s}\\
0 & \bar{s} &s
\end{pmatrix},
\end{align*}
where $\bar{t}=1-t$ and $\bar{s}=1-s$. The possible composition of ordered $t-$transforms are $T_{12}=T_1T_2$ and $S_{21}=T_2T_1 $.

\bigskip
\noindent
{\it Case 1:} We have
\begin{align*}
T_{12}&=\begin{pmatrix}
t & s\bar{t} & \bar{s}\bar{t}\\
\bar{t} & st & \bar{s}t\\
0 & \bar{s} &s
\end{pmatrix}.
\end{align*}
Let us find the action of $T_{12}$ on extremal passive vectors.
\begin{align*}
T_{12}\ket{e_0}=\begin{pmatrix}
t\\
\bar{t}\\
0
\end{pmatrix}; T_{12}\ket{e_1}=\frac{1}{2}\begin{pmatrix}
t+s\bar{t}\\
\bar{t} + st\\
\bar{s}
\end{pmatrix}; T_{12}\ket{e_2}=\ket{e_2}.
\end{align*}
For $T_{12}\ket{e_0}$ to be passive, we need to have $t\geq \bar{t}$.
For $T_{12}\ket{e_1}$ to be passive, we need to have $(t-\bar{t})\bar{s}\geq 0$ and $s-\bar{s}t \geq 0$. The first inequality is always satisfied as $t\geq \bar{t}$ and the second inequality is always satisfied when $s\geq \bar{s}$.

\bigskip
\noindent
{\it Case 2:} We have
\begin{align*}
S_{21}=\begin{pmatrix}
t & \bar{t} & 0\\
s\bar{t} & st & \bar{s}\\
\bar{s}\bar{t} & \bar{s}t &s
\end{pmatrix}.
\end{align*}
Let us find the action of $S_{21}$ on extremal passive vectors.
\begin{align*}
S_{21}\ket{e_0}=\begin{pmatrix}
t\\
s\bar{t}\\
\bar{s}\bar{t}
\end{pmatrix};S_{21}\ket{e_1}=\frac{1}{2}\begin{pmatrix}
1\\
s\\
\bar{s}
\end{pmatrix};S_{21}\ket{e_2}=\ket{e_2}.
\end{align*}
For $S_{21}\ket{e_0}$ to be passive, we need to have $t-s\bar{t}\geq 0$ and $(s-\bar{s})\bar{t}\geq 0$, which implies $t\geq 1/2$ and $s\geq 1/2$.
For $S_{21}\ket{e_1}$ to be passive, we need to have $(1-s)\geq 0$ and $s-\bar{s} \geq 0$. The first inequality is always satisfied  and the second inequality is always satisfied when $s\geq \bar{s}$. This concludes the proof of the proposition.
\end{proof}

\begin{remark}
As noted in Example \ref{ex:equal}, the construction of a sequence of passive $t$-transforms from the highest to the lowest weight is not unique. This nonuniqueness is exemplified in Proposition \ref{prop:exist-t} since it may be possible to decompose a given asymmetric Hoffman matrix as $T_{12}$ or $S_{21}$. To be complete, let us note that it may also be the case that a given asymmetric Hoffman matrix does not admit a decomposition in terms of passive $t$-transforms. This is reminiscent of the fact that not every doubly stochastic matrix can be written as a product of $t$-transforms \cite{Marshall2011}.
\end{remark}

\section{PPO for qubits}
\label{append:qubits}
Since we have noted in Sec. \ref{sect:passive-states-PPO} that, for qubits, passivity-preserving operations are incoherent and moreover strictly incoherent, we can explicitly characterize the form of qubit passivity-preserving operations. 

\begin{remark}
When the input and output systems $S$ and $S'$ are two-dimensional (qubit) systems, all passivity-preserving operations  $\mc{N}_{S\to S'}$ can be expressed in terms of five (incoherent) Kraus operators with certain constraints.
\end{remark}

It is shown in Ref.~\cite {Streltsov2017} that any strictly incoherent operation admits a Kraus decomposition with at most $5$ incoherent Kraus operators. A canonical choice for such $5$ incoherent Kraus operators is given by $\{K_i\}_{i=1}^{5}$, where
\begin{align}
&K_1=\begin{pmatrix}a_1 & b_1 \\ 0 & 0\end{pmatrix},~
K_2=\begin{pmatrix}0 & 0 \\ a_2 & b_2\end{pmatrix},~
K_3=\begin{pmatrix}a_3 & 0 \\ 0 & b_3\end{pmatrix},\nonumber\\
&K_4=\begin{pmatrix}0 & b_4\\ a_4 & 0\end{pmatrix},~ K_5=\begin{pmatrix}a_5 & 0 \\ 0 & 0
\end{pmatrix}.
\end{align}
Here $a_i$ can be chosen as real and $b_i\in \mathbb{C}$. Further, $\sum_{i=1}^5 a_i^2=1=\sum_{i=1}^4|b_i|^2$ and $a_1b_1+a_2b_2=0$. It is to be noted that the above matrices are represented in the basis fixed to define incoherent operations. Here, let us fix  this reference basis to be the energy eigenbasis. For the incoherent Kraus operators to represent a passive operation, we must impose further restrictions and demand that $\Phi(\sigma_0^p)$ and $\Phi\left(\sigma_1^p\right)$ are both passive states, where $\sigma_0^p$ and $\sigma_1^p$ are extremal qubit passive states. These conditions are given by
\begin{align*}
a_1^2 + a_3^2+a_5^2  &\geq a_2^2+a_4^2;\\
a_1^2 + a_3^2+a_5^2 + |b_1|^2 + |b_4|^2 &\geq a_2^2+a_4^2 + |b_2|^2+|b_3|^2.
\end{align*}
The set of $5$ incoherent Kraus operators together with above constraints completely characterize qubit passivity-preserving operations.

Now, we may also exhibit another special feature of qubit passivity-preserving operations in terms of Stinespring dilation.

\begin{remark}
When $S$ is a qubit system, if a channel $\mc{N}$ is generated via 
\begin{align}
\label{eq:qub:pass}
\mc{N}(\rho_S)=\Tr_{E}\{\mathcal{U}_{SE}(\rho_S\otimes\sigma_{E})\},
\end{align}
where $\mathcal{U}_{SE}$ is an arbitrary energy-preserving unitary operation and $\sigma_{E}$ is a passive state of the environment, then $\mc{N}$ is a passivity-preserving channel. 
\end{remark}
\begin{proof}
The Hamiltonian of an arbitrary qubit system $S$ can be taken without loss of generality to be $\hat{H}_S=E\ket{1}\bra{1}$. An arbitrary energy preserving unitary $U_{SE}$ on the joint system $SE$ can be defined as follows
\begin{subequations}
\begin{align*}
&U_{SE}\ket{00} = \ket{00};\\
&U_{SE}\ket{01} = \alpha\ket{01}+\beta\ket{10};\\
&U_{SE}\ket{10} = -\beta^*\ket{01}+\alpha^*\ket{10};\\
&U_{SE}\ket{11} = \ket{11},
\end{align*}
\end{subequations}
where $|\alpha|^2 +|\beta|^2 =1$. Let $\sigma_E=q\ket{0}\!\bra{0}+\bar{q}\ket{1}\!\bra{1}$ be a passive state of the environment and $0\leq \bar{q}=1-q\leq q\leq 1$. Now, we have 
\begin{align*}
\mc{N}(\ket{0}\bra{0})&=\Tr_{E}\left\{U_{SE}(\ket{0}\!\bra{0} \otimes\sigma_{E})U_{SE}^\dagger\right\}\\
&=\Tr_{E}\left\{q\ket{00}\!\bra{00} \right. \\
&~~~~ \left. + \bar{q}(\alpha\ket{01}+\beta\ket{10})(\alpha^*\bra{01}+\beta^*\bra{10}) \right\}\\
&=\left( q + \bar{q} |\alpha|^2 \right)\ket{0}\!\bra{0}+ \bar{q} |\beta|^2\ket{1}\!\bra{1}.
\end{align*}
Now, using $\bar{q}\leq q$, we have $q + \bar{q} |\alpha|^2 -\bar{q} |\beta|^2 \geq q |\alpha|^2 + \bar{q} |\alpha|^2 \geq 0$. Thus $\mc{N}(\ket{0}\bra{0})$ is a passive state. Similarly
\begin{align*}
&\mc{N}\left(\frac{1}{2}\left(\ket{0}\!\bra{0} + \ket{1}\!\bra{1}\right)\right)\\
&=\frac{1}{2}\Tr_{E}\left\{q\ket{00}\!\bra{00} + \bar{q}\ket{11}\!\bra{11} \right. \\
&~~~~ \left. + \bar{q}(\alpha\ket{01}+\beta\ket{10})(\alpha^*\bra{01}+\beta^*\bra{10}) \right.\\
&~~~~ \left. + q(-\beta^*\ket{01}+\alpha^*\ket{10})(-\beta\bra{01}+\alpha\bra{10}) \right\}\\
&=\frac{1}{2}\left( q + \bar{q} |\alpha|^2 + q |\beta|^2 \right)\ket{0}\!\bra{0}\\
&~~~~ +\frac{1}{2} \left(\bar{q}+ \bar{q} |\beta|^2 + q |\alpha|^2 \right)\ket{1}\!\bra{1}.
\end{align*}
Again the above state is  a passive state because
\begin{align*}
&\left( q + \bar{q} |\alpha|^2 + q |\beta|^2 \right)- \left(\bar{q}+ \bar{q} |\beta|^2 + q |\alpha|^2 \right)\\
&=(q-\bar{q}) (1- |\alpha|^2 + |\beta|^2 )\\
&=2(q-\bar{q}) |\beta|^2 \geq 0.
\end{align*}
Thus $\mc{N}$ is a passivity-preserving channel.
\end{proof}

\medskip
Note, however, that for more than two-dimensional systems, such a channel based on a energy-preserving unitary operation and passive environment ceases to be necessarily passivity-preserving. In the following we construct such a channel for a qutrit system. Let us consider the system  Hamiltonian to be $\hat{H}_S=\ket{1}\bra{1}+2\op{2}{2}$. Also, consider an energy preserving unitary $U_{SE}$ such that $U_{SE}\ket{ij}=\ket{ij}$ for all $i,j=0,1,2$ except $(i=0, j=2)$ and $(i=2, j=0)$ for which $U_{SE}\ket{ij}=\ket{ji}$. Then, for any passive state $\sigma_E=\sum_{i=0}^2q_i\op{i}$ ($q_0\geq q_1\geq q_2$ and $\sum_{i=0}^2q_i=1$) of environment, we have
\begin{align*}
\mc{N}\left(\ket{0}\!\bra{0} +\ket{1}\!\bra{1}\right)&=\Tr_{E}\left\{U_{SE}(\ket{0}\!\bra{0} \otimes\sigma_{E})U_{SE}^\dagger\right\}\\
&=\Tr_{E}\left\{\sum_{i=0}^1q_i\ket{0i}\!\bra{0i} + q_2\ket{20}\!\bra{20} \right\}\\
&=(q_0+q_1)\ket{0}\!\bra{0} + q_2 \op{2}{2}.
\end{align*}
The above is clearly not a passive state and hence $\mathcal{N}$ is not a passivity-preserving channel.

Note that some particular classes of passivity-preserving channels have been introduced in Refs. \cite{Jabbour2016, Jabbour2019} in the context of continuous variable quantum systems, which has the form as Eq. \eqref{eq:qub:pass}, where $\mathcal{U}_{SE\to SE}$ could be either energy preserving or energy difference preserving Gaussian unitary operation.

\section{RPPO for qutrits}
\label{append:exp}
Here we show an example of qutrit case to illustrate the construction of a desired RPPO for the pure state transformations on $\mathfrak{D}$.
\begin{example}
A qutrit state $\ket{\psi} \in \mathfrak{D}$ can be transformed to another state $\ket{\phi}\in \mathfrak{D}$ under RPPOs if $\vec{p}\prec_h \vec{q}$, where $\ket{\psi}=\sum_{i=0}^{2}\sqrt{p_i}e^{i\theta_i}\ket{i}$, $\ket{\phi}=\sum_{i=0}^{2}\sqrt{q_i}e^{i\gamma_i}\ket{i}$, $\vec{p}=\{p_0,p_1,p_2\}$, and $\vec{q}=\{q_0,q_1,q_2\}$.
\end{example}
\begin{proof}
Again, as argued in the main text, the phase factors in the states $\ket{\psi}$ and $\ket{\phi}$ can be ignored. Since $\vec{p}\prec_h \vec{q}$, from Theorems \ref{th:hoff1} and \ref{th:hoff2}, we have
$\vec{p}=\sum_{\tau\in\mathcal{P}(3)}\alpha_\tau M^{\tau}\vec{q} = \sum_{i=1}^4\vec{r_i}$, where $\vec{r_i}=\alpha_{\tau^{(i)}} M^{\tau^{(i)}}\vec{q}$. We have provided $M^{\tau}$ explicitly in preliminary section (Sec. \ref{sec:prelims}). We have
\begin{align*}
\vec{r}_1=\alpha_{\tau^{(1)}}\begin{pmatrix}
q_0\\
q_1\\
q_2
\end{pmatrix}.
\end{align*}
Now, we define following Kraus operator
\begin{align*}
&K_0=\sqrt{\alpha_{\tau^{(1)}}}\begin{pmatrix}
\sqrt{\frac{q_0}{p_0}} & 0 &0\\
0 & \sqrt{\frac{q_1}{p_1}} & 0\\
0 & 0 & \sqrt{\frac{q_2}{p_2}}
\end{pmatrix}
\end{align*}
such that $K_0(\op{\psi}) K_0^\dagger= \alpha_{\tau^{(1)}} \op{\phi}$.
Further,
\begin{align*}
K_0^\dagger K_0&=  \sum_{i=0}^2\frac{(\vec{r_1})_i}{p_i}\op{i}.
\end{align*}
We also have
\begin{align*}
&K_0 \op{i} K_0^\dagger= \alpha_{\tau^{(1)}}\frac{q_i}{p_i} \op{i}.
\end{align*}
Let us consider another vector $\vec{r_2}$ such that
\begin{align*}
\vec{r}_2= \alpha_{\tau^{(2)}} \begin{pmatrix}
(q_0+q_1)/2\\
(q_0+q_1)/2\\
q_2
\end{pmatrix}.
\end{align*}
Now, let us introduce following Kraus operators
\begin{align*}
&K_1= \sqrt{\alpha_{\tau^{(2)}} }\begin{pmatrix}
\sqrt{\frac{q_0}{2p_0}} & 0 &0\\
0 & \sqrt{\frac{q_1}{2p_1}} & 0\\
0 & 0 & \sqrt{\frac{q_2}{2p_2}}
\end{pmatrix};\\
&K_2=\sqrt{\alpha_{\tau^{(2)}} }\begin{pmatrix}
0 & \sqrt{\frac{q_0}{2p_1}} &0\\
\sqrt{\frac{q_1}{2p_0}} & 0 & 0\\
0 & 0 & \sqrt{\frac{q_2}{2p_2}}
\end{pmatrix}.
\end{align*}
Then, we have 
\begin{align*}
&\sum_{i=1}^2 K_i\op{\psi} K_i^\dagger= \alpha_{\tau^{(2)}} \op{\phi}.
\end{align*}
Further,
\begin{align*}
\sum_{i=1}^2 K_i^\dagger K_i= \sum_{i=0}^2 \frac{(\vec{r_2})_i}{p_i} \op{i}.
\end{align*}
We also have
\begin{align*}
&\sum_{i=1}^2 K_i \op{0} K_i^\dagger =\frac{\alpha_{\tau^{(2)}}}{2 p_0} \sum_{i=0}^1 q_{i} \op{i};\\
&\sum_{i=1}^2 K_i \op{1} K_i^\dagger=\frac{\alpha_{\tau^{(2)}}}{2 p_1} \sum_{i=0}^1 q_{i} \op{i}; \\
&\sum_{i=1}^2 K_i \op{2} K_i^\dagger = \frac{\alpha_{\tau^{(2)}}}{ p_2} q_2 \op{2}.
\end{align*}
Let us consider another vector $\vec{r}_3$ such that
\begin{align*}
\vec{r}_3=\alpha_{\tau^{(3)}}\begin{pmatrix}
q_0\\
(q_1+q_2)/2\\
(q_1+q_2)/2
\end{pmatrix}.
\end{align*}
Let us define following Kraus operators
\begin{align*}
&K_3= \sqrt{\alpha_{\tau^{(3)}} }\begin{pmatrix}
\sqrt{\frac{q_0}{2p_0}} & 0 &0\\
0 & \sqrt{\frac{q_1}{2p_1}} & 0\\
0 & 0 & \sqrt{\frac{q_2}{2p_2}}
\end{pmatrix};\\
&K_4=\sqrt{\alpha_{\tau^{(3)}} }\begin{pmatrix}
\sqrt{\frac{q_0}{2p_0}}  & 0 &0\\
0 & 0 & \sqrt{\frac{q_1}{2p_2}}\\
0 & \sqrt{\frac{q_2}{2p_1}} & 0
\end{pmatrix}
\end{align*}
such that we have
\begin{align*}
&\sum_{i=3}^4 K_i\op{\psi} K_i^\dagger= \alpha_{\tau^{(3)}} \op{\phi}.
\end{align*}
Further,
\begin{align*}
&\sum_{i=3}^4 K_i^\dagger K_i
= \sum_{i=0}^2 \frac{(\vec{r_3})_i}{p_i} \op{i}.
\end{align*}
We also have
\begin{align*}
&\sum_{i=3}^4 K_i \op{0} K_i^\dagger= \frac{\alpha_{\tau^{(3)}}}{ p_0} q_0 \op{0};\\
&\sum_{i=3}^4 K_i \op{1} K_i^\dagger =\frac{\alpha_{\tau^{(3)}}}{2 p_1} \sum_{i=1}^2 q_i\op{i};\\
&\sum_{i=3}^4 K_i \op{2} K_i^\dagger =\frac{\alpha_{\tau^{(3)}}}{2 p_2}\sum_{i=1}^2 q_i\op{i}.
\end{align*}
Let us consider another vector $\vec{r}_4$ such that
\begin{align*}
\vec{r}_4=\frac{\alpha_{\tau^{(4)}}}{3}\begin{pmatrix}
1\\
1\\
1
\end{pmatrix}.
\end{align*}
Let us define a map $\Phi_4$ with Kraus operators
\begin{align*}
&K_5= \sqrt{\alpha_{\tau^{(4)}} }\begin{pmatrix}
\sqrt{\frac{q_0}{3p_0}} & 0 &0\\
0 & \sqrt{\frac{q_1}{3p_1}} & 0\\
0 & 0 & \sqrt{\frac{q_2}{3p_2}}
\end{pmatrix};\\
&K_6=\sqrt{\alpha_{\tau^{(4)}} }\begin{pmatrix}
0 & \sqrt{\frac{q_0}{3p_1}}  &0\\
0 & 0 & \sqrt{\frac{q_1}{3p_2}}\\
\sqrt{\frac{q_2}{3p_0}}  & 0 & 0
\end{pmatrix};\\
&K_7=\sqrt{\alpha_{\tau^{(4)}} }\begin{pmatrix}
 0 & 0 &\sqrt{\frac{q_0}{3p_2}}\\
\sqrt{\frac{q_1}{3p_0}} & 0 & 0\\
0 & \sqrt{\frac{q_2}{3p_1}} & 0
\end{pmatrix}.
\end{align*}

\begin{align*}
&K_5=\sqrt{\alpha_{\tau^{(4)}} } \sum_{i=0}^2 \sqrt{\frac{q_{\pi_5(i)}}{3p_i}}\ket{{\pi_5(i)}}\!\bra{i};\\
&K_6=\sqrt{\alpha_{\tau^{(4)}} } \sum_{i=0}^2 \sqrt{\frac{q_{\pi_6(i)}}{3p_i}}\ket{{\pi_6(i)}}\!\bra{i};\\
&K_7=\sqrt{\alpha_{\tau^{(4)}} } \sum_{i=0}^2 \sqrt{\frac{q_{\pi_7(i)}}{3p_i}}\ket{{\pi_7(i)}}\!\bra{i},
\end{align*}
where $\pi_5(0)=0$, $\pi_5(1)=1$, $\pi_5(2)=2$; $\pi_6(0)=2$, $\pi_6(1)=0$, $\pi_6(2)=1$; $\pi_7(0)=1$, $\pi_7(1)=2$, $\pi_7(2)=0$. We have,
\begin{align*}
&\sum_{i=5}^7 K_i \op{\psi}K_i^\dagger = \alpha_{\tau^{(4)}} \op{\phi}.
\end{align*}
Further,
\begin{align*}
\sum_{i=5}^7 K_i^\dagger K_i&= \sum_{i=0}^2 \frac{(\vec{r_4})_i}{p_i} \op{i}.
\end{align*}
We also have
\begin{align*}
&\sum_{i=5}^7 K_i \op{0} K_i^\dagger= \frac{\alpha_{\tau^{(4)}}}{3 p_0}  \sum_{i=0}^2  q_i \op{i};\\
&\sum_{i=5}^7 K_i \op{1} K_i^\dagger= \frac{\alpha_{\tau^{(4)}}}{3 p_1}  \sum_{i=0}^2  q_i \op{i}; \\
&\sum_{i=5}^7 K_i \op{2} K_i^\dagger= \frac{\alpha_{\tau^{(4)}}}{3 p_2}  \sum_{i=0}^2  q_i \op{i}.
 \end{align*}
Thus we see that $\sum_{i=0}^7 K_i \op{\psi}K_i^\dagger = \op{\phi}$. Moreover,
\begin{align*}
\sum_{i=0}^7 K_i^\dagger K_i&= \begin{pmatrix}
\frac{\left(\sum_{i=1}^4\vec{r_i}\right)_0}{p_0} & 0 &0\\
0 & \frac{\left(\sum_{i=1}^4\vec{r_i}\right)_1}{p_1}  & 0\\
0 & 0 &\frac{\left(\sum_{i=1}^4\vec{r_i}\right)_2}{p_2}
\end{pmatrix}\\
&=\mathbb{I}_3.
\end{align*}
Furthermore,
\begin{align*}
\Phi_p(\op{i}) = \sum_{a=0}^7 K_a \op{i} K_a^\dagger&=\frac{1}{p_i} \sum_{j=0}^{2} q_j  R_{i,j} \op{j}.
\end{align*}
Since the operation $\Phi_p$ has a similar structure as the one appearing in the Theorem \ref{th:qud-iff}, we conclude that $\Phi_p$ is a RPPO. This finishes the proof of the example.
\end{proof}

\section{Thermodynamical interpretation of the relative passivity relation}
\label{append:thermo}
In this Appendix, we provide a thermodynamical interpretation of the relative passivity relation (i.e., the fact that a passive state is virtually cooler than another passive state) in the context of a quantum refrigerator. Consider a passive state $\rho(\vec{r})$. With above definitions, we can consider a virtual qubit corresponding to levels $(0, d-1)$ and the corresponding virtual temperature $\beta_v$, defined via $r_0/r_{d-1}=e^{\beta_v E_v}$ with $E_v=\Delta E_{d-1,0}$, is given by
\begin{align}
\beta_v=\frac{1}{E_v} \sum_{i=0}^{d-2}\beta_{i,i+1}\Delta E_{i+1,i}.
\end{align}
The virtual qubit can be characterized by two parameters, namely, the normalization $P_v:=r_0+r_{d-1}$ and the bias $B_v:=(r_0-r_{d-1})/P_v=\tanh(\beta_vE_v/2 )$. Thus $B_v\rightarrow 1$ implies $\beta_v\rightarrow \infty$ or vanishing of the virtual temperature. Now consider an external qubit, with a Hamiltonian $H_{\ex}=f_0\op{u_0}+f_1\op{u_1}$, in a state $\rho_{\ex}=u_0\op{u_0}+u_1\op{u_1}$, where $(u_0, u_1)\geq 0$, $u_0+u_1=1$, and the energy gap is equal to $0\leq f_1-f_0=E_v$. The bias $B_{\ex}$ of the external qubit is given by $u_0-u_1$ and normalization is one. If we apply an energy conserving swap on the external qubit and the virtual qubit, then the final bias of the  external qubit becomes $B_\mathrm{fin}=P_v B_v + (1-P_v)B_{\ex}$ (see Ref. \cite{Silva2016} for an easy proof). The desired final bias is $B_\mathrm{fin}\rightarrow 1$ in case we want to cool the external system. 

A similar analysis can be done starting from another passive state $\sigma(\vec{p})$ and in this case, the final bias of the external qubit becomes $\bar{B}_\mathrm{fin}=\bar{P}_v \bar{B}_v + (1-\bar{P}_v)B_{\ex}$, where $\bar{P}_v=p_0+p_{d-1}$ and $\bar{B}_v=(p_0-p_{d-1})/\bar{P}_v$ are normalization and bias of the new virtual qubit, respectively. In the following we will prove that if $\rho(\vec{r})$ is virtually cooler than $\sigma(\vec{p})$, then $\beta_v \geq \beta'_v$. It also implies that $\rho(\vec{r})$ can cool the external qubit further than $\sigma(\vec{p})$.

\begin{proposition}
Consider a protocol where a virtual qubit of either $\rho(\vec{r})$ or $\sigma(\vec{p})$ is used to cool the external qubit. If $\rho(\vec{r})\succ_{vc}\sigma(\vec{p})$, then 
 $B_\mathrm{fin}\geq \bar{B}_\mathrm{fin}$, that is, we attain a larger bias with the virtually cooler state $\rho(\vec{r})$.  Denoting as $F$ and $\bar{F}$ the energy of the final state of the external qubit when using state $\rho(\vec{r})$ and $\sigma(\vec{p})$, respectively, then $F\leq \bar{F}$. 
\end{proposition}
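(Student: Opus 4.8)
The plan is to reduce the comparison of final biases to two elementary inequalities between the extreme populations of $\rho(\vec r)$ and $\sigma(\vec p)$, both of which are handed to us by the virtually cooler relation. First I would dispatch the temperature statement: since $\rho(\vec r)\succ_{vc}\sigma(\vec p)$ gives $\beta_{i,i+1}\geq\beta'_{i,i+1}$ for every $i$, summing these against the positive weights $\Delta E_{i+1,i}$ in the displayed formula for $\beta_v$ yields at once $\beta_v\geq\beta'_v$ (equivalently, one may simply invoke $\beta_{0,d-1}\geq\beta'_{0,d-1}$ from Eq.~\eqref{eq:virt-cool}), so the new virtual qubit is indeed colder.

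For the bias, the key move is to avoid comparing $B_v$ with $\bar B_v$ separately — the normalizations $P_v$ and $\bar P_v$ differ, so $\beta_v\geq\beta'_v$ does not by itself control $B_\mathrm{fin}$ — and instead to use the identities $P_vB_v=r_0-r_{d-1}$ and $1-P_v=1-r_0-r_{d-1}$. These turn the stated update rule into
\begin{align*}
B_\mathrm{fin}&=(r_0-r_{d-1})+(1-r_0-r_{d-1})\,B_{\ex},\\
\bar B_\mathrm{fin}&=(p_0-p_{d-1})+(1-p_0-p_{d-1})\,B_{\ex}.
\end{align*}
Subtracting and regrouping, I would write
\begin{align*}
B_\mathrm{fin}-\bar B_\mathrm{fin}=(r_0-p_0)(1-B_{\ex})+(p_{d-1}-r_{d-1})(1+B_{\ex}).
\end{align*}
Now $\rho(\vec r)\succ_{vc}\sigma(\vec p)$ implies $\vec r\succ_h\vec p$ by Proposition~\ref{prop:vc-implies-Hoffman}; the $k=0$ partial-sum inequality is exactly $r_0\geq p_0$, while combining the $k=d-2$ inequality with the normalization $\sum_i r_i=\sum_i p_i=1$ gives $r_{d-1}\leq p_{d-1}$. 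Since $B_{\ex}=u_0-u_1\in[-1,1]$, both prefactors $1-B_{\ex}$ and $1+B_{\ex}$ are non-negative, hence each term above is non-negative and $B_\mathrm{fin}\geq\bar B_\mathrm{fin}$.

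Finally, for the energy comparison I would express the energy of the external qubit after the energy-conserving swap directly through its final bias,
\begin{align*}
F=\frac{f_0+f_1}{2}-\frac{E_v}{2}\,B_\mathrm{fin},
\end{align*}
and likewise $\bar F=\tfrac{f_0+f_1}{2}-\tfrac{E_v}{2}\,\bar B_\mathrm{fin}$; since $E_v=f_1-f_0\geq 0$, the inequality $B_\mathrm{fin}\geq\bar B_\mathrm{fin}$ is precisely $F\leq\bar F$. The proof is short; the only point that needs a little care — and the pitfall I flagged above — is the temptation to compare $B_v$ with $\bar B_v$ level by level, which fails because of the mismatched normalizations. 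Working with the unnormalized combinations $r_0-r_{d-1}$ and $1-r_0-r_{d-1}$ circumvents this and isolates exactly the two sign conditions, $r_0\geq p_0$ and $r_{d-1}\leq p_{d-1}$, that Hoffman majorization supplies.
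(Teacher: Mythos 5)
Your proof is correct and follows essentially the same route as the paper: the identity $B_\mathrm{fin}-\bar B_\mathrm{fin}=(r_0-p_0)(1-B_{\ex})+(p_{d-1}-r_{d-1})(1+B_{\ex})$ is exactly the paper's $2(r_0-p_0)u_1+2(p_{d-1}-r_{d-1})u_0$ since $1\pm B_{\ex}=2u_{0/1}$, and both arguments then invoke $r_0\geq p_0$, $r_{d-1}\leq p_{d-1}$ from Hoffman majorization and conclude $F\leq\bar F$ via the same sign of $f_0-f_1$. Your explicit caution about not comparing $B_v$ with $\bar B_v$ directly is a sensible remark but does not change the substance.
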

\begin{proof}
To show the first part, consider
\begin{align*}
B_\mathrm{fin} - \bar{B}_\mathrm{fin}&=(r_0-r_{d-1} - p_0 +p_{d-1})\\
 &~~~~+ (-r_0 - r_{d-1} + p_0 + p_{d-1}) B_{\ex}\\
 &= 2(r_0-p_0)u_1 + 2(p_{d-1}-r_{d-1})u_0\\
 &\geq 0,
\end{align*}
where in the last line we have used $(u_0,u_1)\geq 0$ and the fact that $\vec{r}\succ_h \vec{p}$, which implies $r_0\geq p_0$ and $r_{d-1}\leq p_{d-1}$. This completes the first part of the proposition.

For the second part, notice that the final state of the external qubit in the two cases when the virtual qubit belongs to state $\rho(\vec{r})$ and $\sigma(\vec{p})$ is given by, say $\tau$ and $\bar{\tau}$, respectively, where
\begin{align*}
&\tau=\frac{1+B_\mathrm{fin}}{2}\op{u_0}+ \frac{1-B_\mathrm{fin}}{2}\op{u_1}\\
&\bar{\tau}=\frac{1+\bar{B}_\mathrm{fin}}{2}\op{u_0}+ \frac{1-\bar{B}_\mathrm{fin}}{2}\op{u_1}
\end{align*}
so that
\begin{align*}
F - \bar{F}&=\frac{(B_\mathrm{fin} - \bar{B}_\mathrm{fin})}{2}(f_0 -f_1)\leq 0,
\end{align*}
where we used the first part of the proposition and the condition that $f_0-f_1=-E_v\leq 0$. This completes the proof of the proposition.
\end{proof}

\section{Extractable work under a quantum channel}
\label{append:extractablework}

In quantum processes where no entropy change of states is involved, the extractable work is defined as the decrease in the energy of the system. However, if we consider maximal work extraction from an arbitrary state using RPPOs, we need to define the notion of extractable work for quantum channels. Traditionally, to describe thermodynamic work one needs some notion of temperature. For example, in the resource theory of thermodynamics \cite{Michal2013, Brandao2013, Brandao2015b}, one considers an external bath at some fixed temperature and then let the system interact with it via energy conserving unitaries. In such a situation the maximal extractable work can be expressed in terms of the difference between min-free energies of the system and the bath. The min-free energy $F_{\min}$ of a system in state $\rho$ and in the presence of bath at inverse temperature $\beta$ is given by
\begin{align}
F_{\min}(\rho):=F_{\beta} + D_{\min}(\rho||\rho_{\beta}),
\end{align}
where $\rho_{\beta}$ is the thermal state of the system at inverse temperature $\beta$, $F_\beta=E(\rho_{\beta})-\beta^{-1}S(\rho_{\beta})$ is usual free energy of the thermal state $\rho_{\beta}$, and $D_{\min}$ is the min-relative entropy. The min-relative entropy is defined as
\begin{align}
D_{\min}(\rho||\sigma):=-\ln \mathrm{Tr}\left[ \Pi_\rho \sigma\right],
\end{align}
where  $\Pi_\rho$ is the projector on the support of $\rho$.

Now, we know that for processes that do not allow for entropy change, the extractable work can be defined even in the absence of the notion of temperature or heat bath. Therefore it seems plausible to look for a notion of extractable work for the CPTP maps that do allow for entropy change, even in the absence of a heat bath. One possible way is to consider the following state transformation. Let $\rho_S$ be the state of the system of interest and $\sigma_A(\vec{p})$ be a fixed passive state of some ancilla system. Let us consider yet another system $S'$, namely a work storage device in a state $\ket{e_0}_{S'}$, where $\ket{e_0}$ is the ground state of the Hamiltonian of the work storage device with ground state energy being $e_0$. Now, consider all the transformations of the form
\begin{align}
&\Lambda(\rho_S\otimes \op{e_0}_{S'})\nonumber\\
&:=\mathrm{Tr}_A\left[U_{SS'A}\left(\rho_S\otimes \op{e_0}_{S'}\otimes \sigma_A(\vec{p})\right)U_{SS'A}^\dagger\right],
\end{align}
where $U_{SS'A}$ is an energy preserving unitary on the total system $SS'A$. Then, the extractable work from $\rho_S$ under the process $\Lambda$ can be defined as the change in energy $e_k-e_0$ in the transformation $\rho_S\otimes \op{e_0}_{S'} \xrightarrow{\Lambda}{} \eta_S\otimes \op{e_k}_{S'}$. Then, the maximization should be performed over all energy preserving unitaries $U_{SS'A}$ and final state $\eta_S$ of the system. Although this procedure is very general, it defines extractable work for CPTP maps that admit a dilation over energy preserving unitaries. The analysis of such a notion of extractable work is beyond the scope of the current article. Nevertheless, it raises interesting questions, e.g., can it be shown that all RPPOs admit such a dilation? As a starting point, we provide here a qubit example of a RPPO that admits such a dilation and discuss work extraction under this operation.

\begin{example}
Following the notations in the main text, let us fix two qubit passive states $\sigma(\vec{p})$ and $\sigma(\vec{q})$ such that $p_0=aq_0+\bar{a}q_1$, where $1/2\geq \bar{a}=1-a$. Let us consider a quantum channel $\Phi$ with Kraus operators
\begin{align*}
&L_1=\sqrt{a\frac{q_0}{p_0}}\op{0}{0}+\sqrt{a\frac{q_1}{p_1}}\op{1}{1};\\
&L_2=\sqrt{\bar{a}\frac{q_0}{p_1}}\op{0}{1};~L_3=\sqrt{\bar{a}\frac{q_1}{p_0}}\op{1}{0}.
\end{align*}
It is easy to check that $\sum_{i=1}^3L_i^\dagger L_i=\mathbb{I}$. Moreover, for a passive state $\sigma(\vec{r})$ that is virtually cooler than $\sigma(\vec{p})$, i.e., $\sigma(\vec{r})\succ_{vc}\sigma(\vec{p})$, we have
\begin{align*}
\sigma(\vec{s})&:=\Phi(\sigma(\vec{r}))\\
&= q_0\left(a \frac{r_0}{p_0} + \bar{a} \frac{r_1}{p_1}\right)\op{0}{0}+q_1\left(\bar{a} \frac{r_0}{p_0} + a \frac{r_1}{p_1}\right)\op{1}{1} ,
\end{align*}
or equivalently
\begin{align*}
\begin{pmatrix}
s_0/q_0\\
s_1/q_1
\end{pmatrix}=\begin{pmatrix}
a & \bar{a}\\
\bar{a} & a
\end{pmatrix}\begin{pmatrix}
r_0/p_0\\
r_1/p_1
\end{pmatrix}.
\end{align*}
Thus, $\Phi\in\mathfrak{L}_{p,q}$. The channel $\Phi$ can be realized using energy-preserving unitary on the system and a qutrit ancilla as follows
\begin{align*}
\Phi(\rho_S)=\mathrm{Tr}_{A}\left[U_{SA}\left(\rho_S\otimes \op{1}{1}_A\right) U^{\dagger}_{SA}\right],
\end{align*}
where $U_{SA}$ is an energy-preserving unitary, $H_S=\op{1}{1}$ and $H_A=\op{1}{1}+2\op{2}{2}$ are the Hamiltonians of the systems $S$ and ancilla $A$, respectively. In the following, we will omit the system and ancilla labels for brevity. The unitary $U_{SA}$ is defined as
\begin{align*}
&U_{SA}\ket{00}=\ket{00};~ U_{SA}\ket{12}=\ket{12};\\
&U_{SA}\ket{01}=\sqrt{a\frac{q_0}{p_0}}\ket{01}+\sqrt{\bar{a}\frac{q_1}{p_0}}\ket{10};\\
&U_{SA}\ket{10}=-\sqrt{\bar{a}\frac{q_1}{p_0}}\ket{01}+\sqrt{a\frac{q_0}{p_0}}\ket{10};\\
&U_{SA}\ket{02}=\sqrt{a\frac{q_1}{p_1}}\ket{02}-\sqrt{\bar{a}\frac{q_0}{p_1}}\ket{11};\\
&U_{SA}\ket{11}=\sqrt{\bar{a}\frac{q_0}{p_1}}\ket{02}+\sqrt{a\frac{q_1}{p_1}}\ket{11}.
\end{align*}
Let $\ket{\psi}_S=\alpha\ket{0} + \beta\ket{1}$, where $|\alpha|^2+|\beta|^2=1$. Then,
\begin{align*}
U_{SA}\ket{\psi,1}_{SA}=&\alpha\left(\sqrt{a\frac{q_0}{p_0}}\ket{01}+\sqrt{\bar{a}\frac{q_1}{p_0}}\ket{10}\right)\\
&+\beta\left(\sqrt{\bar{a}\frac{q_0}{p_1}}\ket{02}+\sqrt{a\frac{q_1}{p_1}}\ket{11}\right).
\end{align*}
Now we perform energy measurement on the ancilla, and the probabilities that the ancilla is in state $\ket{0}$, $\ket{1}$, and $\ket{2}$ are given by $|\alpha|^2 \bar{a}q_1/p_0$, $|\alpha|^2aq_0/p_0+|\beta|^2aq_1/p_1$, and $|\beta|^2\bar{a}q_0/p_1$, respectively. Then on an average energy increase of ancilla is given by
\begin{align*}
\Delta E_A&=\frac{|\alpha|^2aq_0}{p_0}+\frac{|\beta|^2aq_1}{p_1}+2\frac{|\beta|^2\bar{a}q_0}{p_1} -1\\
&=\frac{|\alpha|^2aq_0}{p_0}+\frac{|\beta|^2\bar{a}q_0}{p_1} -|\alpha|^2\\
&=-\frac{|\alpha|^2\bar{a}q_1}{p_0} +\frac{|\beta|^2\bar{a}q_0}{p_1}\\
&=\frac{\bar{a}q_0}{p_1}-|\alpha|^2\bar{a}\left(\frac{q_1}{p_0} +\frac{q_0}{p_1}\right).
\end{align*}
$\Delta E_A$ can be understood as the average work extracted by the RPPO $\Phi\in\mathfrak{L}_{p,q}$. In particular, if $\alpha=0$, then the average extracted work from $\op{1}{1}$ by $\Phi$ is given by $\frac{\bar{a}q_0}{p_1}\geq 0$.
\end{example}

\bibliography{passivity}

\begin{thebibliography}{61}%
\makeatletter
\providecommand \@ifxundefined [1]{%
 \@ifx{#1\undefined}
}%
\providecommand \@ifnum [1]{%
 \ifnum #1\expandafter \@firstoftwo
 \else \expandafter \@secondoftwo
 \fi
}%
\providecommand \@ifx [1]{%
 \ifx #1\expandafter \@firstoftwo
 \else \expandafter \@secondoftwo
 \fi
}%
\providecommand \natexlab [1]{#1}%
\providecommand \enquote  [1]{``#1''}%
\providecommand \bibnamefont  [1]{#1}%
\providecommand \bibfnamefont [1]{#1}%
\providecommand \citenamefont [1]{#1}%
\providecommand \href@noop [0]{\@secondoftwo}%
\providecommand \href [0]{\begingroup \@sanitize@url \@href}%
\providecommand \@href[1]{\@@startlink{#1}\@@href}%
\providecommand \@@href[1]{\endgroup#1\@@endlink}%
\providecommand \@sanitize@url [0]{\catcode `\\12\catcode `\$12\catcode
  `\&12\catcode `\#12\catcode `\^12\catcode `\_12\catcode `\%12\relax}%
\providecommand \@@startlink[1]{}%
\providecommand \@@endlink[0]{}%
\providecommand \url  [0]{\begingroup\@sanitize@url \@url }%
\providecommand \@url [1]{\endgroup\@href {#1}{\urlprefix }}%
\providecommand \urlprefix  [0]{URL }%
\providecommand \Eprint [0]{\href }%
\providecommand \doibase [0]{http://dx.doi.org/}%
\providecommand \selectlanguage [0]{\@gobble}%
\providecommand \bibinfo  [0]{\@secondoftwo}%
\providecommand \bibfield  [0]{\@secondoftwo}%
\providecommand \translation [1]{[#1]}%
\providecommand \BibitemOpen [0]{}%
\providecommand \bibitemStop [0]{}%
\providecommand \bibitemNoStop [0]{.\EOS\space}%
\providecommand \EOS [0]{\spacefactor3000\relax}%
\providecommand \BibitemShut  [1]{\csname bibitem#1\endcsname}%
\let\auto@bib@innerbib\@empty
\bibitem [{\citenamefont {Callen}(1985)}]{Callen1985}%
  \BibitemOpen
  \bibfield  {author} {\bibinfo {author} {\bibfnamefont {H.}~\bibnamefont
  {Callen}},\ }\href@noop {} {\emph {\bibinfo {title} {Thermodynamics and an
  Introduction to Thermostatistics}}}\ (\bibinfo  {publisher} {Wiley},\
  \bibinfo {address} {New York},\ \bibinfo {year} {1985})\BibitemShut {NoStop}%
\bibitem [{\citenamefont {Scovil}\ and\ \citenamefont
  {Schulz-DuBois}(1959)}]{Scovil1959}%
  \BibitemOpen
  \bibfield  {author} {\bibinfo {author} {\bibfnamefont {H.~E.~D.}\
  \bibnamefont {Scovil}}\ and\ \bibinfo {author} {\bibfnamefont {E.~O.}\
  \bibnamefont {Schulz-DuBois}},\ }\bibfield  {title} {\enquote {\bibinfo
  {title} {{Three-Level} {Masers} as {Heat} {Engines}},}\ }\href {\doibase
  10.1103/PhysRevLett.2.262} {\bibfield  {journal} {\bibinfo  {journal} {Phys.
  Rev. Lett.}\ }\textbf {\bibinfo {volume} {2}},\ \bibinfo {pages} {262--263}
  (\bibinfo {year} {1959})}\BibitemShut {NoStop}%
\bibitem [{\citenamefont {Geusic}\ \emph {et~al.}(1967)\citenamefont {Geusic},
  \citenamefont {Schulz-DuBios},\ and\ \citenamefont {Scovil}}]{Geusic1967}%
  \BibitemOpen
  \bibfield  {author} {\bibinfo {author} {\bibfnamefont {J.~E.}\ \bibnamefont
  {Geusic}}, \bibinfo {author} {\bibfnamefont {E.~O.}\ \bibnamefont
  {Schulz-DuBios}}, \ and\ \bibinfo {author} {\bibfnamefont {H.~E.~D.}\
  \bibnamefont {Scovil}},\ }\bibfield  {title} {\enquote {\bibinfo {title}
  {{Quantum} {Equivalent} of the {Carnot} {Cycle}},}\ }\href {\doibase
  10.1103/PhysRev.156.343} {\bibfield  {journal} {\bibinfo  {journal} {Phys.
  Rev.}\ }\textbf {\bibinfo {volume} {156}},\ \bibinfo {pages} {343--351}
  (\bibinfo {year} {1967})}\BibitemShut {NoStop}%
\bibitem [{\citenamefont {Scully}(2002)}]{Scully2002}%
  \BibitemOpen
  \bibfield  {author} {\bibinfo {author} {\bibfnamefont {M.~O.}\ \bibnamefont
  {Scully}},\ }\bibfield  {title} {\enquote {\bibinfo {title} {Quantum
  {Afterburner}: {Improving} the {Efficiency} of an {Ideal} {Heat} {Engine}},}\
  }\href {\doibase 10.1103/PhysRevLett.88.050602} {\bibfield  {journal}
  {\bibinfo  {journal} {Phys. Rev. Lett.}\ }\textbf {\bibinfo {volume} {88}},\
  \bibinfo {pages} {050602} (\bibinfo {year} {2002})}\BibitemShut {NoStop}%
\bibitem [{\citenamefont {Howard}(1997)}]{Howard1997}%
  \BibitemOpen
  \bibfield  {author} {\bibinfo {author} {\bibfnamefont {J.}~\bibnamefont
  {Howard}},\ }\bibfield  {title} {\enquote {\bibinfo {title} {Molecular
  motors: structural adaptations to cellular functions},}\ }\href
  {https://doi.org/10.1038/39247} {\bibfield  {journal} {\bibinfo  {journal}
  {Nature}\ }\textbf {\bibinfo {volume} {389}},\ \bibinfo {pages} {561--567}
  (\bibinfo {year} {1997})}\BibitemShut {NoStop}%
\bibitem [{\citenamefont {H\"anggi}\ and\ \citenamefont
  {Marchesoni}(2009)}]{Hanggi2009}%
  \BibitemOpen
  \bibfield  {author} {\bibinfo {author} {\bibfnamefont {P.}~\bibnamefont
  {H\"anggi}}\ and\ \bibinfo {author} {\bibfnamefont {F.}~\bibnamefont
  {Marchesoni}},\ }\bibfield  {title} {\enquote {\bibinfo {title} {Artificial
  {Brownian} motors: {Controlling} transport on the nanoscale},}\ }\href
  {\doibase 10.1103/RevModPhys.81.387} {\bibfield  {journal} {\bibinfo
  {journal} {Rev. Mod. Phys.}\ }\textbf {\bibinfo {volume} {81}},\ \bibinfo
  {pages} {387--442} (\bibinfo {year} {2009})}\BibitemShut {NoStop}%
\bibitem [{\citenamefont {Faucheux}\ \emph {et~al.}(1995)\citenamefont
  {Faucheux}, \citenamefont {Bourdieu}, \citenamefont {Kaplan},\ and\
  \citenamefont {Libchaber}}]{Faucheux1995}%
  \BibitemOpen
  \bibfield  {author} {\bibinfo {author} {\bibfnamefont {L.~P.}\ \bibnamefont
  {Faucheux}}, \bibinfo {author} {\bibfnamefont {L.~S.}\ \bibnamefont
  {Bourdieu}}, \bibinfo {author} {\bibfnamefont {P.~D.}\ \bibnamefont
  {Kaplan}}, \ and\ \bibinfo {author} {\bibfnamefont {A.~J.}\ \bibnamefont
  {Libchaber}},\ }\bibfield  {title} {\enquote {\bibinfo {title} {Optical
  {Thermal} {Ratchet}},}\ }\href {\doibase 10.1103/PhysRevLett.74.1504}
  {\bibfield  {journal} {\bibinfo  {journal} {Phys. Rev. Lett.}\ }\textbf
  {\bibinfo {volume} {74}},\ \bibinfo {pages} {1504--1507} (\bibinfo {year}
  {1995})}\BibitemShut {NoStop}%
\bibitem [{\citenamefont {Pusz}\ and\ \citenamefont
  {Woronowicz}(1978)}]{Pusz1978}%
  \BibitemOpen
  \bibfield  {author} {\bibinfo {author} {\bibfnamefont {W.}~\bibnamefont
  {Pusz}}\ and\ \bibinfo {author} {\bibfnamefont {S.~L.}\ \bibnamefont
  {Woronowicz}},\ }\bibfield  {title} {\enquote {\bibinfo {title} {Passive
  states and {KMS} states for general quantum systems},}\ }\href {\doibase
  10.1007/BF01614224} {\bibfield  {journal} {\bibinfo  {journal} {Commun. Math.
  Phys.}\ }\textbf {\bibinfo {volume} {58}},\ \bibinfo {pages} {273--290}
  (\bibinfo {year} {1978})}\BibitemShut {NoStop}%
\bibitem [{\citenamefont {Lenard}(1978)}]{Lenard78}%
  \BibitemOpen
  \bibfield  {author} {\bibinfo {author} {\bibfnamefont {A.}~\bibnamefont
  {Lenard}},\ }\bibfield  {title} {\enquote {\bibinfo {title} {Thermodynamical
  proof of the {Gibbs} formula for elementary quantum systems},}\ }\href
  {\doibase 10.1007/BF01011769} {\bibfield  {journal} {\bibinfo  {journal} {J.
  Stat. Phys.}\ }\textbf {\bibinfo {volume} {19}},\ \bibinfo {pages} {575--586}
  (\bibinfo {year} {1978})}\BibitemShut {NoStop}%
\bibitem [{\citenamefont {Alicki}(1979)}]{Alicki1979}%
  \BibitemOpen
  \bibfield  {author} {\bibinfo {author} {\bibfnamefont {R.}~\bibnamefont
  {Alicki}},\ }\bibfield  {title} {\enquote {\bibinfo {title} {The quantum open
  system as a model of the heat engine},}\ }\href
  {http://stacks.iop.org/0305-4470/12/i=5/a=007} {\bibfield  {journal}
  {\bibinfo  {journal} {J. Phys. A: Math. Gen.}\ }\textbf {\bibinfo {volume}
  {12}},\ \bibinfo {pages} {L103} (\bibinfo {year} {1979})}\BibitemShut
  {NoStop}%
\bibitem [{\citenamefont {Rousselet}\ \emph {et~al.}(1994)\citenamefont
  {Rousselet}, \citenamefont {Salome}, \citenamefont {Ajdari},\ and\
  \citenamefont {Prostt}}]{Rousselet1994}%
  \BibitemOpen
  \bibfield  {author} {\bibinfo {author} {\bibfnamefont {J.}~\bibnamefont
  {Rousselet}}, \bibinfo {author} {\bibfnamefont {L.}~\bibnamefont {Salome}},
  \bibinfo {author} {\bibfnamefont {A.}~\bibnamefont {Ajdari}}, \ and\ \bibinfo
  {author} {\bibfnamefont {J.}~\bibnamefont {Prostt}},\ }\bibfield  {title}
  {\enquote {\bibinfo {title} {Directional motion of brownian particles induced
  by a periodic asymmetric potential},}\ }\href
  {https://doi.org/10.1038/370446a0} {\bibfield  {journal} {\bibinfo  {journal}
  {Nature}\ }\textbf {\bibinfo {volume} {370}},\ \bibinfo {pages} {446--447}
  (\bibinfo {year} {1994})}\BibitemShut {NoStop}%
\bibitem [{\citenamefont {Horodecki}\ and\ \citenamefont
  {Oppenheim}(2013)}]{Michal2013}%
  \BibitemOpen
  \bibfield  {author} {\bibinfo {author} {\bibfnamefont {M.}~\bibnamefont
  {Horodecki}}\ and\ \bibinfo {author} {\bibfnamefont {J.}~\bibnamefont
  {Oppenheim}},\ }\bibfield  {title} {\enquote {\bibinfo {title} {Fundamental
  limitations for quantum and nanoscale thermodynamics},}\ }\href
  {https://doi.org/10.1038/ncomms3059} {\bibfield  {journal} {\bibinfo
  {journal} {Nat. Commun.}\ }\textbf {\bibinfo {volume} {4}},\ \bibinfo {pages}
  {2059} (\bibinfo {year} {2013})}\BibitemShut {NoStop}%
\bibitem [{\citenamefont {Brand\~ao}\ \emph {et~al.}(2013)\citenamefont
  {Brand\~ao}, \citenamefont {Horodecki}, \citenamefont {Oppenheim},
  \citenamefont {Renes},\ and\ \citenamefont {Spekkens}}]{Brandao2013}%
  \BibitemOpen
  \bibfield  {author} {\bibinfo {author} {\bibfnamefont {F.~G. S.~L.}\
  \bibnamefont {Brand\~ao}}, \bibinfo {author} {\bibfnamefont {M.}~\bibnamefont
  {Horodecki}}, \bibinfo {author} {\bibfnamefont {J.}~\bibnamefont
  {Oppenheim}}, \bibinfo {author} {\bibfnamefont {J.~M.}\ \bibnamefont
  {Renes}}, \ and\ \bibinfo {author} {\bibfnamefont {R.~W.}\ \bibnamefont
  {Spekkens}},\ }\bibfield  {title} {\enquote {\bibinfo {title} {Resource
  {Theory} of {Quantum} {States} {Out} of {Thermal} {Equilibrium}},}\ }\href
  {\doibase 10.1103/PhysRevLett.111.250404} {\bibfield  {journal} {\bibinfo
  {journal} {Phys. Rev. Lett.}\ }\textbf {\bibinfo {volume} {111}},\ \bibinfo
  {pages} {250404} (\bibinfo {year} {2013})}\BibitemShut {NoStop}%
\bibitem [{\citenamefont {Brand\~ao}\ \emph {et~al.}(2015)\citenamefont
  {Brand\~ao}, \citenamefont {Horodecki}, \citenamefont {Ng}, \citenamefont
  {Oppenheim},\ and\ \citenamefont {Wehner}}]{Brandao2015b}%
  \BibitemOpen
  \bibfield  {author} {\bibinfo {author} {\bibfnamefont {F.}~\bibnamefont
  {Brand\~ao}}, \bibinfo {author} {\bibfnamefont {M.}~\bibnamefont
  {Horodecki}}, \bibinfo {author} {\bibfnamefont {N.}~\bibnamefont {Ng}},
  \bibinfo {author} {\bibfnamefont {J.}~\bibnamefont {Oppenheim}}, \ and\
  \bibinfo {author} {\bibfnamefont {S.}~\bibnamefont {Wehner}},\ }\bibfield
  {title} {\enquote {\bibinfo {title} {The second laws of quantum
  thermodynamics},}\ }\href {\doibase 10.1073/pnas.1411728112} {\bibfield
  {journal} {\bibinfo  {journal} {Proc. Natl. Acad. Sci. U.S.A.}\ }\textbf
  {\bibinfo {volume} {112}},\ \bibinfo {pages} {3275--3279} (\bibinfo {year}
  {2015})}\BibitemShut {NoStop}%
\bibitem [{\citenamefont {\ifmmode \acute{C}\else
  \'{C}\fi{}wikli\ifmmode~\acute{n}\else \'{n}\fi{}ski}\ \emph
  {et~al.}(2015)\citenamefont {\ifmmode \acute{C}\else
  \'{C}\fi{}wikli\ifmmode~\acute{n}\else \'{n}\fi{}ski}, \citenamefont
  {Studzi\ifmmode~\acute{n}\else \'{n}\fi{}ski}, \citenamefont {Horodecki},\
  and\ \citenamefont {Oppenheim}}]{Cwiklinnski2015}%
  \BibitemOpen
  \bibfield  {author} {\bibinfo {author} {\bibfnamefont {P.}~\bibnamefont
  {\ifmmode \acute{C}\else \'{C}\fi{}wikli\ifmmode~\acute{n}\else
  \'{n}\fi{}ski}}, \bibinfo {author} {\bibfnamefont {M.}~\bibnamefont
  {Studzi\ifmmode~\acute{n}\else \'{n}\fi{}ski}}, \bibinfo {author}
  {\bibfnamefont {M.}~\bibnamefont {Horodecki}}, \ and\ \bibinfo {author}
  {\bibfnamefont {J.}~\bibnamefont {Oppenheim}},\ }\bibfield  {title} {\enquote
  {\bibinfo {title} {Limitations on the {Evolution} of {Quantum} {Coherences}:
  {Towards} {Fully} {Quantum} {Second} {Laws} of {Thermodynamics}},}\ }\href
  {\doibase 10.1103/PhysRevLett.115.210403} {\bibfield  {journal} {\bibinfo
  {journal} {Phys. Rev. Lett.}\ }\textbf {\bibinfo {volume} {115}},\ \bibinfo
  {pages} {210403} (\bibinfo {year} {2015})}\BibitemShut {NoStop}%
\bibitem [{\citenamefont {Goold}\ \emph {et~al.}(2016)\citenamefont {Goold},
  \citenamefont {Huber}, \citenamefont {Riera}, \citenamefont {del Rio},\ and\
  \citenamefont {Skrzypczyk}}]{Goold2016}%
  \BibitemOpen
  \bibfield  {author} {\bibinfo {author} {\bibfnamefont {J.}~\bibnamefont
  {Goold}}, \bibinfo {author} {\bibfnamefont {M.}~\bibnamefont {Huber}},
  \bibinfo {author} {\bibfnamefont {A.}~\bibnamefont {Riera}}, \bibinfo
  {author} {\bibfnamefont {L.}~\bibnamefont {del Rio}}, \ and\ \bibinfo
  {author} {\bibfnamefont {P.}~\bibnamefont {Skrzypczyk}},\ }\bibfield  {title}
  {\enquote {\bibinfo {title} {The role of quantum information in
  thermodynamics{\textemdash}a topical review},}\ }\href {\doibase
  10.1088/1751-8113/49/14/143001} {\bibfield  {journal} {\bibinfo  {journal}
  {J. Phys. A: Math. Theor.}\ }\textbf {\bibinfo {volume} {49}},\ \bibinfo
  {pages} {143001} (\bibinfo {year} {2016})}\BibitemShut {NoStop}%
\bibitem [{\citenamefont {Vinjanampathy}\ and\ \citenamefont
  {Anders}(2016)}]{Vinjanampathy2016}%
  \BibitemOpen
  \bibfield  {author} {\bibinfo {author} {\bibfnamefont {Sai}\ \bibnamefont
  {Vinjanampathy}}\ and\ \bibinfo {author} {\bibfnamefont {Janet}\ \bibnamefont
  {Anders}},\ }\bibfield  {title} {\enquote {\bibinfo {title} {Quantum
  thermodynamics},}\ }\href {\doibase 10.1080/00107514.2016.1201896} {\bibfield
   {journal} {\bibinfo  {journal} {Contemporary Physics}\ }\textbf {\bibinfo
  {volume} {57}},\ \bibinfo {pages} {545--579} (\bibinfo {year}
  {2016})}\BibitemShut {NoStop}%
\bibitem [{\citenamefont {Millen}\ and\ \citenamefont
  {Xuereb}(2016)}]{Millen2016}%
  \BibitemOpen
  \bibfield  {author} {\bibinfo {author} {\bibfnamefont {J.}~\bibnamefont
  {Millen}}\ and\ \bibinfo {author} {\bibfnamefont {A.}~\bibnamefont
  {Xuereb}},\ }\bibfield  {title} {\enquote {\bibinfo {title} {Perspective on
  quantum thermodynamics},}\ }\href {\doibase 10.1088/1367-2630/18/1/011002}
  {\bibfield  {journal} {\bibinfo  {journal} {New J. Phys.}\ }\textbf {\bibinfo
  {volume} {18}},\ \bibinfo {pages} {011002} (\bibinfo {year}
  {2016})}\BibitemShut {NoStop}%
\bibitem [{\citenamefont {Faist}\ \emph
  {et~al.}(2015{\natexlab{a}})\citenamefont {Faist}, \citenamefont {Dupuis},
  \citenamefont {Oppenheim},\ and\ \citenamefont {Renner}}]{Faist2015}%
  \BibitemOpen
  \bibfield  {author} {\bibinfo {author} {\bibfnamefont {P.}~\bibnamefont
  {Faist}}, \bibinfo {author} {\bibfnamefont {F.}~\bibnamefont {Dupuis}},
  \bibinfo {author} {\bibfnamefont {J.}~\bibnamefont {Oppenheim}}, \ and\
  \bibinfo {author} {\bibfnamefont {R.}~\bibnamefont {Renner}},\ }\bibfield
  {title} {\enquote {\bibinfo {title} {The minimal work cost of information
  processing},}\ }\href {\doibase 10.1038/ncomms8669} {\bibfield  {journal}
  {\bibinfo  {journal} {Nat. Commun.}\ }\textbf {\bibinfo {volume} {6}},\
  \bibinfo {pages} {7669} (\bibinfo {year} {2015}{\natexlab{a}})}\BibitemShut
  {NoStop}%
\bibitem [{\citenamefont {Faist}\ \emph
  {et~al.}(2015{\natexlab{b}})\citenamefont {Faist}, \citenamefont
  {Oppenheim},\ and\ \citenamefont {Renner}}]{Faist2015A}%
  \BibitemOpen
  \bibfield  {author} {\bibinfo {author} {\bibfnamefont {P.}~\bibnamefont
  {Faist}}, \bibinfo {author} {\bibfnamefont {J.}~\bibnamefont {Oppenheim}}, \
  and\ \bibinfo {author} {\bibfnamefont {R.}~\bibnamefont {Renner}},\
  }\bibfield  {title} {\enquote {\bibinfo {title} {Gibbs-preserving maps
  outperform thermal operations in the quantum regime},}\ }\href {\doibase
  10.1088/1367-2630/17/4/043003} {\bibfield  {journal} {\bibinfo  {journal}
  {New J. Phys.}\ }\textbf {\bibinfo {volume} {17}},\ \bibinfo {pages} {043003}
  (\bibinfo {year} {2015}{\natexlab{b}})}\BibitemShut {NoStop}%
\bibitem [{\citenamefont {Yunger~Halpern}\ \emph {et~al.}(2016)\citenamefont
  {Yunger~Halpern}, \citenamefont {Faist}, \citenamefont {Oppenheim},\ and\
  \citenamefont {Winter}}]{Nicole2016}%
  \BibitemOpen
  \bibfield  {author} {\bibinfo {author} {\bibfnamefont {N.}~\bibnamefont
  {Yunger~Halpern}}, \bibinfo {author} {\bibfnamefont {P.}~\bibnamefont
  {Faist}}, \bibinfo {author} {\bibfnamefont {J.}~\bibnamefont {Oppenheim}}, \
  and\ \bibinfo {author} {\bibfnamefont {A.}~\bibnamefont {Winter}},\
  }\bibfield  {title} {\enquote {\bibinfo {title} {Microcanonical and
  resource-theoretic derivations of the thermal state of a quantum system with
  noncommuting charges},}\ }\href {\doibase 10.1038/ncomms12051} {\bibfield
  {journal} {\bibinfo  {journal} {Nat. Commun.}\ }\textbf {\bibinfo {volume}
  {7}},\ \bibinfo {pages} {12051} (\bibinfo {year} {2016})}\BibitemShut
  {NoStop}%
\bibitem [{\citenamefont {Uzdin}\ and\ \citenamefont
  {Rahav}(2018)}]{Uzdin2018}%
  \BibitemOpen
  \bibfield  {author} {\bibinfo {author} {\bibfnamefont {Raam}\ \bibnamefont
  {Uzdin}}\ and\ \bibinfo {author} {\bibfnamefont {Saar}\ \bibnamefont
  {Rahav}},\ }\bibfield  {title} {\enquote {\bibinfo {title} {Global
  {Passivity} in {Microscopic} {Thermodynamics}},}\ }\href {\doibase
  10.1103/PhysRevX.8.021064} {\bibfield  {journal} {\bibinfo  {journal} {Phys.
  Rev. X}\ }\textbf {\bibinfo {volume} {8}},\ \bibinfo {pages} {021064}
  (\bibinfo {year} {2018})}\BibitemShut {NoStop}%
\bibitem [{\citenamefont {Koukoulekidis}\ \emph {et~al.}(2021)\citenamefont
  {Koukoulekidis}, \citenamefont {Alexander}, \citenamefont {Hebdige},\ and\
  \citenamefont {Jennings}}]{Koukoulekidis2019}%
  \BibitemOpen
  \bibfield  {author} {\bibinfo {author} {\bibfnamefont {N.}~\bibnamefont
  {Koukoulekidis}}, \bibinfo {author} {\bibfnamefont {R.}~\bibnamefont
  {Alexander}}, \bibinfo {author} {\bibfnamefont {T.}~\bibnamefont {Hebdige}},
  \ and\ \bibinfo {author} {\bibfnamefont {D.}~\bibnamefont {Jennings}},\
  }\bibfield  {title} {\enquote {\bibinfo {title} {The geometry of passivity
  for quantum systems and a novel elementary derivation of the {G}ibbs
  state},}\ }\href {\doibase 10.22331/q-2021-03-15-411} {\bibfield  {journal}
  {\bibinfo  {journal} {{Quantum}}\ }\textbf {\bibinfo {volume} {5}},\ \bibinfo
  {pages} {411} (\bibinfo {year} {2021})}\BibitemShut {NoStop}%
\bibitem [{\citenamefont {Faist}\ \emph {et~al.}(2019)\citenamefont {Faist},
  \citenamefont {Berta},\ and\ \citenamefont {Brand\~ao}}]{Faist2019}%
  \BibitemOpen
  \bibfield  {author} {\bibinfo {author} {\bibfnamefont {P.}~\bibnamefont
  {Faist}}, \bibinfo {author} {\bibfnamefont {M.}~\bibnamefont {Berta}}, \ and\
  \bibinfo {author} {\bibfnamefont {F.}~\bibnamefont {Brand\~ao}},\ }\bibfield
  {title} {\enquote {\bibinfo {title} {Thermodynamic {Capacity} of {Quantum}
  {Processes}},}\ }\href {\doibase 10.1103/PhysRevLett.122.200601} {\bibfield
  {journal} {\bibinfo  {journal} {Phys. Rev. Lett.}\ }\textbf {\bibinfo
  {volume} {122}},\ \bibinfo {pages} {200601} (\bibinfo {year}
  {2019})}\BibitemShut {NoStop}%
\bibitem [{\citenamefont {Uzdin}\ and\ \citenamefont
  {Rahav}(2021)}]{Uzdin2019}%
  \BibitemOpen
  \bibfield  {author} {\bibinfo {author} {\bibfnamefont {Raam}\ \bibnamefont
  {Uzdin}}\ and\ \bibinfo {author} {\bibfnamefont {Saar}\ \bibnamefont
  {Rahav}},\ }\bibfield  {title} {\enquote {\bibinfo {title} {Passivity
  {Deformation} {Approach} for the {Thermodynamics} of {Isolated} {Quantum}
  {Setups}},}\ }\href {\doibase 10.1103/PRXQuantum.2.010336} {\bibfield
  {journal} {\bibinfo  {journal} {PRX Quantum}\ }\textbf {\bibinfo {volume}
  {2}},\ \bibinfo {pages} {010336} (\bibinfo {year} {2021})}\BibitemShut
  {NoStop}%
\bibitem [{\citenamefont {Marvian}(2020)}]{Marvian2020}%
  \BibitemOpen
  \bibfield  {author} {\bibinfo {author} {\bibfnamefont {I.}~\bibnamefont
  {Marvian}},\ }\bibfield  {title} {\enquote {\bibinfo {title} {Coherence
  distillation machines are impossible in quantum thermodynamics},}\ }\href
  {\doibase 10.1038/s41467-019-13846-3} {\bibfield  {journal} {\bibinfo
  {journal} {Nat. Commun.}\ }\textbf {\bibinfo {volume} {11}},\ \bibinfo
  {pages} {25} (\bibinfo {year} {2020})}\BibitemShut {NoStop}%
\bibitem [{\citenamefont {Francica}\ \emph {et~al.}(2020)\citenamefont
  {Francica}, \citenamefont {Binder}, \citenamefont {Guarnieri}, \citenamefont
  {Mitchison}, \citenamefont {Goold},\ and\ \citenamefont
  {Plastina}}]{Francica2020}%
  \BibitemOpen
  \bibfield  {author} {\bibinfo {author} {\bibfnamefont {G.}~\bibnamefont
  {Francica}}, \bibinfo {author} {\bibfnamefont {F.~C.}\ \bibnamefont
  {Binder}}, \bibinfo {author} {\bibfnamefont {G.}~\bibnamefont {Guarnieri}},
  \bibinfo {author} {\bibfnamefont {M.~T.}\ \bibnamefont {Mitchison}}, \bibinfo
  {author} {\bibfnamefont {J.}~\bibnamefont {Goold}}, \ and\ \bibinfo {author}
  {\bibfnamefont {F.}~\bibnamefont {Plastina}},\ }\bibfield  {title} {\enquote
  {\bibinfo {title} {Quantum {Coherence} and {Ergotropy}},}\ }\href {\doibase
  10.1103/PhysRevLett.125.180603} {\bibfield  {journal} {\bibinfo  {journal}
  {Phys. Rev. Lett.}\ }\textbf {\bibinfo {volume} {125}},\ \bibinfo {pages}
  {180603} (\bibinfo {year} {2020})}\BibitemShut {NoStop}%
\bibitem [{Note1()}]{Note1}%
  \BibitemOpen
  \bibinfo {note} {By cyclic unitary operations, we mean unitary operations
  that are such that the Hamiltonian of the system returns to its initial
  expression at the end of a cycle.}\BibitemShut {Stop}%
\bibitem [{\citenamefont {Brunner}\ \emph {et~al.}(2012)\citenamefont
  {Brunner}, \citenamefont {Linden}, \citenamefont {Popescu},\ and\
  \citenamefont {Skrzypczyk}}]{Brunner2012}%
  \BibitemOpen
  \bibfield  {author} {\bibinfo {author} {\bibfnamefont {N.}~\bibnamefont
  {Brunner}}, \bibinfo {author} {\bibfnamefont {N.}~\bibnamefont {Linden}},
  \bibinfo {author} {\bibfnamefont {S.}~\bibnamefont {Popescu}}, \ and\
  \bibinfo {author} {\bibfnamefont {P.}~\bibnamefont {Skrzypczyk}},\ }\bibfield
   {title} {\enquote {\bibinfo {title} {Virtual qubits, virtual temperatures,
  and the foundations of thermodynamics},}\ }\href {\doibase
  10.1103/PhysRevE.85.051117} {\bibfield  {journal} {\bibinfo  {journal} {Phys.
  Rev. E}\ }\textbf {\bibinfo {volume} {85}},\ \bibinfo {pages} {051117}
  (\bibinfo {year} {2012})}\BibitemShut {NoStop}%
\bibitem [{\citenamefont {Hoffman}(1969)}]{Hoffman1969}%
  \BibitemOpen
  \bibfield  {author} {\bibinfo {author} {\bibfnamefont {A.~J.}\ \bibnamefont
  {Hoffman}},\ }\bibfield  {title} {\enquote {\bibinfo {title} {A special class
  of doubly stochastic matrices},}\ }\href {\doibase 10.1007/BF01817716}
  {\bibfield  {journal} {\bibinfo  {journal} {aequationes mathematicae}\
  }\textbf {\bibinfo {volume} {2}},\ \bibinfo {pages} {319--326} (\bibinfo
  {year} {1969})}\BibitemShut {NoStop}%
\bibitem [{\citenamefont {K\"astner}\ \emph {et~al.}(1998)\citenamefont
  {K\"astner}, \citenamefont {Borobia}, \citenamefont {Dress},\ and\
  \citenamefont {Zylka}}]{Kastner1998}%
  \BibitemOpen
  \bibfield  {author} {\bibinfo {author} {\bibfnamefont {J.}~\bibnamefont
  {K\"astner}}, \bibinfo {author} {\bibfnamefont {A.}~\bibnamefont {Borobia}},
  \bibinfo {author} {\bibfnamefont {A.}~\bibnamefont {Dress}}, \ and\ \bibinfo
  {author} {\bibfnamefont {Ch.}\ \bibnamefont {Zylka}},\ }\bibfield  {title}
  {\enquote {\bibinfo {title} {Extremal majorizing and anti-majorizing
  matrices},}\ }\href {\doibase https://doi.org/10.1016/S0024-3795(97)10027-1}
  {\bibfield  {journal} {\bibinfo  {journal} {Linear Algebra Appl.}\ }\textbf
  {\bibinfo {volume} {278}},\ \bibinfo {pages} {133--145} (\bibinfo {year}
  {1998})}\BibitemShut {NoStop}%
\bibitem [{\citenamefont {Marshall}\ \emph {et~al.}(2011)\citenamefont
  {Marshall}, \citenamefont {Olkin},\ and\ \citenamefont
  {Arnold}}]{Marshall2011}%
  \BibitemOpen
  \bibfield  {author} {\bibinfo {author} {\bibfnamefont {A.~W.}\ \bibnamefont
  {Marshall}}, \bibinfo {author} {\bibfnamefont {I.}~\bibnamefont {Olkin}}, \
  and\ \bibinfo {author} {\bibfnamefont {B.~C.}\ \bibnamefont {Arnold}},\
  }\href {\doibase 10.1007/978-0-387-68276-1} {\emph {\bibinfo {title}
  {Inequalities: {Theory} of {Majorization} and {Its} {Applications}}}}\
  (\bibinfo  {publisher} {Springer-Verlag},\ \bibinfo {address} {New York},\
  \bibinfo {year} {2011})\BibitemShut {NoStop}%
\bibitem [{Note2()}]{Note2}%
  \BibitemOpen
  \bibinfo {note} {Let $x,y\in \protect \mathcal {S}(d)$. Then $x\prec _h y$
  and $y\prec _h x$ imply that $\DOTSB \sum@ \slimits@ _{i=0}^k x_i = \DOTSB
  \sum@ \slimits@ _{i=0}^k y_i$ for all $i=0,\protect \cdots ,d-1$, which in
  turn implies that $x=y$.}\BibitemShut {Stop}%
\bibitem [{\citenamefont {Allahverdyan}\ \emph {et~al.}(2004)\citenamefont
  {Allahverdyan}, \citenamefont {Balian},\ and\ \citenamefont
  {Nieuwenhuizen}}]{Allahverdyan2004}%
  \BibitemOpen
  \bibfield  {author} {\bibinfo {author} {\bibfnamefont {A.~E}\ \bibnamefont
  {Allahverdyan}}, \bibinfo {author} {\bibfnamefont {R}~\bibnamefont {Balian}},
  \ and\ \bibinfo {author} {\bibfnamefont {Th.~M}\ \bibnamefont
  {Nieuwenhuizen}},\ }\bibfield  {title} {\enquote {\bibinfo {title} {Maximal
  work extraction from finite quantum systems},}\ }\href {\doibase
  10.1209/epl/i2004-10101-2} {\bibfield  {journal} {\bibinfo  {journal}
  {Europhys. Lett.}\ }\textbf {\bibinfo {volume} {67}},\ \bibinfo {pages}
  {565--571} (\bibinfo {year} {2004})}\BibitemShut {NoStop}%
\bibitem [{\citenamefont {Skrzypczyk}\ \emph {et~al.}(2015)\citenamefont
  {Skrzypczyk}, \citenamefont {Silva},\ and\ \citenamefont
  {Brunner}}]{Skrzypczyk2015}%
  \BibitemOpen
  \bibfield  {author} {\bibinfo {author} {\bibfnamefont {P.}~\bibnamefont
  {Skrzypczyk}}, \bibinfo {author} {\bibfnamefont {R.}~\bibnamefont {Silva}}, \
  and\ \bibinfo {author} {\bibfnamefont {N.}~\bibnamefont {Brunner}},\
  }\bibfield  {title} {\enquote {\bibinfo {title} {Passivity, complete
  passivity, and virtual temperatures},}\ }\href {\doibase
  10.1103/PhysRevE.91.052133} {\bibfield  {journal} {\bibinfo  {journal} {Phys.
  Rev. E}\ }\textbf {\bibinfo {volume} {91}},\ \bibinfo {pages} {052133}
  (\bibinfo {year} {2015})}\BibitemShut {NoStop}%
\bibitem [{\citenamefont {Gemmer}\ and\ \citenamefont
  {Anders}(2015)}]{Gemmer2015}%
  \BibitemOpen
  \bibfield  {author} {\bibinfo {author} {\bibfnamefont {J.}~\bibnamefont
  {Gemmer}}\ and\ \bibinfo {author} {\bibfnamefont {J.}~\bibnamefont
  {Anders}},\ }\bibfield  {title} {\enquote {\bibinfo {title} {From single-shot
  towards general work extraction in a quantum thermodynamic framework},}\
  }\href {\doibase 10.1088/1367-2630/17/8/085006} {\bibfield  {journal}
  {\bibinfo  {journal} {New J. Phys.}\ }\textbf {\bibinfo {volume} {17}},\
  \bibinfo {pages} {085006} (\bibinfo {year} {2015})}\BibitemShut {NoStop}%
\bibitem [{\citenamefont {Korzekwa}\ \emph {et~al.}(2016)\citenamefont
  {Korzekwa}, \citenamefont {Lostaglio}, \citenamefont {Oppenheim},\ and\
  \citenamefont {Jennings}}]{Korzekwa2016}%
  \BibitemOpen
  \bibfield  {author} {\bibinfo {author} {\bibfnamefont {K.}~\bibnamefont
  {Korzekwa}}, \bibinfo {author} {\bibfnamefont {M.}~\bibnamefont {Lostaglio}},
  \bibinfo {author} {\bibfnamefont {J.}~\bibnamefont {Oppenheim}}, \ and\
  \bibinfo {author} {\bibfnamefont {D.}~\bibnamefont {Jennings}},\ }\bibfield
  {title} {\enquote {\bibinfo {title} {The extraction of work from quantum
  coherence},}\ }\href {\doibase 10.1088/1367-2630/18/2/023045} {\bibfield
  {journal} {\bibinfo  {journal} {New J. Phys.}\ }\textbf {\bibinfo {volume}
  {18}},\ \bibinfo {pages} {023045} (\bibinfo {year} {2016})}\BibitemShut
  {NoStop}%
\bibitem [{\citenamefont {Brown}\ \emph {et~al.}(2016)\citenamefont {Brown},
  \citenamefont {Friis},\ and\ \citenamefont {Huber}}]{Brown2016}%
  \BibitemOpen
  \bibfield  {author} {\bibinfo {author} {\bibfnamefont {E.~G.}\ \bibnamefont
  {Brown}}, \bibinfo {author} {\bibfnamefont {N.}~\bibnamefont {Friis}}, \ and\
  \bibinfo {author} {\bibfnamefont {M.}~\bibnamefont {Huber}},\ }\bibfield
  {title} {\enquote {\bibinfo {title} {Passivity and practical work extraction
  using {Gaussian} operations},}\ }\href
  {http://stacks.iop.org/1367-2630/18/i=11/a=113028} {\bibfield  {journal}
  {\bibinfo  {journal} {New J. Phys.}\ }\textbf {\bibinfo {volume} {18}},\
  \bibinfo {pages} {113028} (\bibinfo {year} {2016})}\BibitemShut {NoStop}%
\bibitem [{\citenamefont {Perarnau-Llobet}\ \emph
  {et~al.}(2015{\natexlab{a}})\citenamefont {Perarnau-Llobet}, \citenamefont
  {Hovhannisyan}, \citenamefont {Huber}, \citenamefont {Skrzypczyk},
  \citenamefont {Tura},\ and\ \citenamefont {Ac\'{\i}n}}]{Perarnau2015}%
  \BibitemOpen
  \bibfield  {author} {\bibinfo {author} {\bibfnamefont {Mart\'{\i}}\
  \bibnamefont {Perarnau-Llobet}}, \bibinfo {author} {\bibfnamefont {Karen~V.}\
  \bibnamefont {Hovhannisyan}}, \bibinfo {author} {\bibfnamefont {Marcus}\
  \bibnamefont {Huber}}, \bibinfo {author} {\bibfnamefont {Paul}\ \bibnamefont
  {Skrzypczyk}}, \bibinfo {author} {\bibfnamefont {Jordi}\ \bibnamefont
  {Tura}}, \ and\ \bibinfo {author} {\bibfnamefont {Antonio}\ \bibnamefont
  {Ac\'{\i}n}},\ }\bibfield  {title} {\enquote {\bibinfo {title} {Most
  energetic passive states},}\ }\href {\doibase 10.1103/PhysRevE.92.042147}
  {\bibfield  {journal} {\bibinfo  {journal} {Phys. Rev. E}\ }\textbf {\bibinfo
  {volume} {92}},\ \bibinfo {pages} {042147} (\bibinfo {year}
  {2015}{\natexlab{a}})}\BibitemShut {NoStop}%
\bibitem [{\citenamefont {Perarnau-Llobet}\ \emph
  {et~al.}(2015{\natexlab{b}})\citenamefont {Perarnau-Llobet}, \citenamefont
  {Hovhannisyan}, \citenamefont {Huber}, \citenamefont {Skrzypczyk},
  \citenamefont {Brunner},\ and\ \citenamefont
  {Ac\'{\i}n}}]{Perarnau-Llobet2015}%
  \BibitemOpen
  \bibfield  {author} {\bibinfo {author} {\bibfnamefont {M.}~\bibnamefont
  {Perarnau-Llobet}}, \bibinfo {author} {\bibfnamefont {K.~V.}\ \bibnamefont
  {Hovhannisyan}}, \bibinfo {author} {\bibfnamefont {M.}~\bibnamefont {Huber}},
  \bibinfo {author} {\bibfnamefont {P.}~\bibnamefont {Skrzypczyk}}, \bibinfo
  {author} {\bibfnamefont {N.}~\bibnamefont {Brunner}}, \ and\ \bibinfo
  {author} {\bibfnamefont {A.}~\bibnamefont {Ac\'{\i}n}},\ }\bibfield  {title}
  {\enquote {\bibinfo {title} {Extractable {Work} from {Correlations}},}\
  }\href {\doibase 10.1103/PhysRevX.5.041011} {\bibfield  {journal} {\bibinfo
  {journal} {Phys. Rev. X}\ }\textbf {\bibinfo {volume} {5}},\ \bibinfo {pages}
  {041011} (\bibinfo {year} {2015}{\natexlab{b}})}\BibitemShut {NoStop}%
\bibitem [{Note3()}]{Note3}%
  \BibitemOpen
  \bibinfo {note} {These operations have sometimes been referred to as
  maximally incoherent operations, see, for example, Refs. \cite {Aberg2006,
  Chitambar2016}, but we find it more appropriate to call them
  incoherent-preserving operations or simply incoherent
  operations.}\BibitemShut {Stop}%
\bibitem [{Note4()}]{Note4}%
  \BibitemOpen
  \bibinfo {note} {These strictly incoherent operations have sometimes been
  called incoherent operations, see for example Refs. \cite
  {Baumgratz2014}.}\BibitemShut {Stop}%
\bibitem [{\citenamefont {Baumgratz}\ \emph {et~al.}(2014)\citenamefont
  {Baumgratz}, \citenamefont {Cramer},\ and\ \citenamefont
  {Plenio}}]{Baumgratz2014}%
  \BibitemOpen
  \bibfield  {author} {\bibinfo {author} {\bibfnamefont {T.}~\bibnamefont
  {Baumgratz}}, \bibinfo {author} {\bibfnamefont {M.}~\bibnamefont {Cramer}}, \
  and\ \bibinfo {author} {\bibfnamefont {M.~B.}\ \bibnamefont {Plenio}},\
  }\bibfield  {title} {\enquote {\bibinfo {title} {Quantifying {Coherence}},}\
  }\href {\doibase 10.1103/PhysRevLett.113.140401} {\bibfield  {journal}
  {\bibinfo  {journal} {Phys. Rev. Lett.}\ }\textbf {\bibinfo {volume} {113}},\
  \bibinfo {pages} {140401} (\bibinfo {year} {2014})}\BibitemShut {NoStop}%
\bibitem [{\citenamefont {Winter}\ and\ \citenamefont
  {Yang}(2016)}]{Winter2016}%
  \BibitemOpen
  \bibfield  {author} {\bibinfo {author} {\bibfnamefont {A.}~\bibnamefont
  {Winter}}\ and\ \bibinfo {author} {\bibfnamefont {D.}~\bibnamefont {Yang}},\
  }\bibfield  {title} {\enquote {\bibinfo {title} {Operational {Resource}
  {Theory} of {Coherence}},}\ }\href {\doibase 10.1103/PhysRevLett.116.120404}
  {\bibfield  {journal} {\bibinfo  {journal} {Phys. Rev. Lett.}\ }\textbf
  {\bibinfo {volume} {116}},\ \bibinfo {pages} {120404} (\bibinfo {year}
  {2016})}\BibitemShut {NoStop}%
\bibitem [{\citenamefont {Marvian}\ and\ \citenamefont
  {Spekkens}(2013)}]{Marvian2013}%
  \BibitemOpen
  \bibfield  {author} {\bibinfo {author} {\bibfnamefont {I.}~\bibnamefont
  {Marvian}}\ and\ \bibinfo {author} {\bibfnamefont {R.~W.}\ \bibnamefont
  {Spekkens}},\ }\bibfield  {title} {\enquote {\bibinfo {title} {The theory of
  manipulations of pure state asymmetry: I. {Basic} tools, equivalence classes
  and single copy transformations},}\ }\href
  {http://stacks.iop.org/1367-2630/15/i=3/a=033001} {\bibfield  {journal}
  {\bibinfo  {journal} {New J. Phys.}\ }\textbf {\bibinfo {volume} {15}},\
  \bibinfo {pages} {033001} (\bibinfo {year} {2013})}\BibitemShut {NoStop}%
\bibitem [{\citenamefont {Streltsov}\ \emph {et~al.}(2015)\citenamefont
  {Streltsov}, \citenamefont {Singh}, \citenamefont {Dhar}, \citenamefont
  {Bera},\ and\ \citenamefont {Adesso}}]{Streltsov2015}%
  \BibitemOpen
  \bibfield  {author} {\bibinfo {author} {\bibfnamefont {A.}~\bibnamefont
  {Streltsov}}, \bibinfo {author} {\bibfnamefont {U.}~\bibnamefont {Singh}},
  \bibinfo {author} {\bibfnamefont {H.~S.}\ \bibnamefont {Dhar}}, \bibinfo
  {author} {\bibfnamefont {M.~N.}\ \bibnamefont {Bera}}, \ and\ \bibinfo
  {author} {\bibfnamefont {G.}~\bibnamefont {Adesso}},\ }\bibfield  {title}
  {\enquote {\bibinfo {title} {Measuring {Quantum} {Coherence} with
  {Entanglement}},}\ }\href {\doibase 10.1103/PhysRevLett.115.020403}
  {\bibfield  {journal} {\bibinfo  {journal} {Phys. Rev. Lett.}\ }\textbf
  {\bibinfo {volume} {115}},\ \bibinfo {pages} {020403} (\bibinfo {year}
  {2015})}\BibitemShut {NoStop}%
\bibitem [{\citenamefont {Streltsov}\ \emph
  {et~al.}(2017{\natexlab{a}})\citenamefont {Streltsov}, \citenamefont
  {Adesso},\ and\ \citenamefont {Plenio}}]{Streltsov2017A}%
  \BibitemOpen
  \bibfield  {author} {\bibinfo {author} {\bibfnamefont {A.}~\bibnamefont
  {Streltsov}}, \bibinfo {author} {\bibfnamefont {G.}~\bibnamefont {Adesso}}, \
  and\ \bibinfo {author} {\bibfnamefont {M.~B.}\ \bibnamefont {Plenio}},\
  }\bibfield  {title} {\enquote {\bibinfo {title} {Colloquium: {Quantum}
  coherence as a resource},}\ }\href {\doibase 10.1103/RevModPhys.89.041003}
  {\bibfield  {journal} {\bibinfo  {journal} {Rev. Mod. Phys.}\ }\textbf
  {\bibinfo {volume} {89}},\ \bibinfo {pages} {041003} (\bibinfo {year}
  {2017}{\natexlab{a}})}\BibitemShut {NoStop}%
\bibitem [{\citenamefont {{Lami}}(2020)}]{Lami2020}%
  \BibitemOpen
  \bibfield  {author} {\bibinfo {author} {\bibfnamefont {L.}~\bibnamefont
  {{Lami}}},\ }\bibfield  {title} {\enquote {\bibinfo {title} {Completing the
  {Grand} {Tour} of {Asymptotic} {Quantum} {Coherence} {Manipulation}},}\
  }\href {\doibase 10.1109/TIT.2019.2945798} {\bibfield  {journal} {\bibinfo
  {journal} {IEEE Trans. Inf. Theory}\ }\textbf {\bibinfo {volume} {66}},\
  \bibinfo {pages} {2165--2183} (\bibinfo {year} {2020})}\BibitemShut {NoStop}%
\bibitem [{\citenamefont {Chitambar}\ and\ \citenamefont
  {Gour}(2016)}]{Chitambar2016}%
  \BibitemOpen
  \bibfield  {author} {\bibinfo {author} {\bibfnamefont {Eric}\ \bibnamefont
  {Chitambar}}\ and\ \bibinfo {author} {\bibfnamefont {Gilad}\ \bibnamefont
  {Gour}},\ }\bibfield  {title} {\enquote {\bibinfo {title} {Comparison of
  incoherent operations and measures of coherence},}\ }\href {\doibase
  10.1103/PhysRevA.94.052336} {\bibfield  {journal} {\bibinfo  {journal} {Phys.
  Rev. A}\ }\textbf {\bibinfo {volume} {94}},\ \bibinfo {pages} {052336}
  (\bibinfo {year} {2016})}\BibitemShut {NoStop}%
\bibitem [{\citenamefont {Bu}\ \emph {et~al.}(2016)\citenamefont {Bu},
  \citenamefont {Swati}, \citenamefont {Singh},\ and\ \citenamefont
  {Wu}}]{Bu2016}%
  \BibitemOpen
  \bibfield  {author} {\bibinfo {author} {\bibfnamefont {K.}~\bibnamefont
  {Bu}}, \bibinfo {author} {\bibnamefont {Swati}}, \bibinfo {author}
  {\bibfnamefont {U.}~\bibnamefont {Singh}}, \ and\ \bibinfo {author}
  {\bibfnamefont {J.}~\bibnamefont {Wu}},\ }\bibfield  {title} {\enquote
  {\bibinfo {title} {Coherence-breaking channels and coherence sudden death},}\
  }\href {\doibase 10.1103/PhysRevA.94.052335} {\bibfield  {journal} {\bibinfo
  {journal} {Phys. Rev. A}\ }\textbf {\bibinfo {volume} {94}},\ \bibinfo
  {pages} {052335} (\bibinfo {year} {2016})}\BibitemShut {NoStop}%
\bibitem [{\citenamefont {Sparaciari}\ \emph {et~al.}(2017)\citenamefont
  {Sparaciari}, \citenamefont {Jennings},\ and\ \citenamefont
  {Oppenheim}}]{Sparaciari2017}%
  \BibitemOpen
  \bibfield  {author} {\bibinfo {author} {\bibfnamefont {C.}~\bibnamefont
  {Sparaciari}}, \bibinfo {author} {\bibfnamefont {D.}~\bibnamefont
  {Jennings}}, \ and\ \bibinfo {author} {\bibfnamefont {J.}~\bibnamefont
  {Oppenheim}},\ }\bibfield  {title} {\enquote {\bibinfo {title} {Energetic
  instability of passive states in thermodynamics},}\ }\href {\doibase
  10.1038/s41467-017-01505-4} {\bibfield  {journal} {\bibinfo  {journal} {Nat.
  Commun.}\ }\textbf {\bibinfo {volume} {8}},\ \bibinfo {pages} {1895}
  (\bibinfo {year} {2017})}\BibitemShut {NoStop}%
\bibitem [{\citenamefont {Shaked}\ and\ \citenamefont
  {Shanthikumar}(2007)}]{Shaked2007}%
  \BibitemOpen
  \bibfield  {author} {\bibinfo {author} {\bibfnamefont {M.}~\bibnamefont
  {Shaked}}\ and\ \bibinfo {author} {\bibfnamefont {G.}~\bibnamefont
  {Shanthikumar}},\ }\href {\doibase 10.1007/978-0-387-34675-5} {\emph
  {\bibinfo {title} {Stochastic Orders}}}\ (\bibinfo  {publisher}
  {Springer-Verlag},\ \bibinfo {address} {New York},\ \bibinfo {year}
  {2007})\BibitemShut {NoStop}%
\bibitem [{\citenamefont {Silva}\ \emph {et~al.}(2016)\citenamefont {Silva},
  \citenamefont {Manzano}, \citenamefont {Skrzypczyk},\ and\ \citenamefont
  {Brunner}}]{Silva2016}%
  \BibitemOpen
  \bibfield  {author} {\bibinfo {author} {\bibfnamefont {R.}~\bibnamefont
  {Silva}}, \bibinfo {author} {\bibfnamefont {G.}~\bibnamefont {Manzano}},
  \bibinfo {author} {\bibfnamefont {P.}~\bibnamefont {Skrzypczyk}}, \ and\
  \bibinfo {author} {\bibfnamefont {N.}~\bibnamefont {Brunner}},\ }\bibfield
  {title} {\enquote {\bibinfo {title} {Performance of autonomous quantum
  thermal machines: Hilbert space dimension as a thermodynamical resource},}\
  }\href {\doibase 10.1103/PhysRevE.94.032120} {\bibfield  {journal} {\bibinfo
  {journal} {Phys. Rev. E}\ }\textbf {\bibinfo {volume} {94}},\ \bibinfo
  {pages} {032120} (\bibinfo {year} {2016})}\BibitemShut {NoStop}%
\bibitem [{\citenamefont {Du}\ \emph {et~al.}(2015)\citenamefont {Du},
  \citenamefont {Bai},\ and\ \citenamefont {Guo}}]{Du2015}%
  \BibitemOpen
  \bibfield  {author} {\bibinfo {author} {\bibfnamefont {S.}~\bibnamefont
  {Du}}, \bibinfo {author} {\bibfnamefont {Z.}~\bibnamefont {Bai}}, \ and\
  \bibinfo {author} {\bibfnamefont {Y.}~\bibnamefont {Guo}},\ }\bibfield
  {title} {\enquote {\bibinfo {title} {Conditions for coherence transformations
  under incoherent operations},}\ }\href {\doibase 10.1103/PhysRevA.91.052120}
  {\bibfield  {journal} {\bibinfo  {journal} {Phys. Rev. A}\ }\textbf {\bibinfo
  {volume} {91}},\ \bibinfo {pages} {052120} (\bibinfo {year}
  {2015})}\BibitemShut {NoStop}%
\bibitem [{\citenamefont {Du}\ \emph {et~al.}(2017)\citenamefont {Du},
  \citenamefont {Bai},\ and\ \citenamefont {Guo}}]{DuE2017}%
  \BibitemOpen
  \bibfield  {author} {\bibinfo {author} {\bibfnamefont {S.}~\bibnamefont
  {Du}}, \bibinfo {author} {\bibfnamefont {Z.}~\bibnamefont {Bai}}, \ and\
  \bibinfo {author} {\bibfnamefont {Y.}~\bibnamefont {Guo}},\ }\bibfield
  {title} {\enquote {\bibinfo {title} {Erratum: Conditions for coherence
  transformations under incoherent operations [phys. rev. a 91, 052120
  (2015)]},}\ }\href {\doibase 10.1103/PhysRevA.95.029901} {\bibfield
  {journal} {\bibinfo  {journal} {Phys. Rev. A}\ }\textbf {\bibinfo {volume}
  {95}},\ \bibinfo {pages} {029901} (\bibinfo {year} {2017})}\BibitemShut
  {NoStop}%
\bibitem [{\citenamefont {Zylka}(1985)}]{Zylka1985}%
  \BibitemOpen
  \bibfield  {author} {\bibinfo {author} {\bibfnamefont {C.}~\bibnamefont
  {Zylka}},\ }\bibfield  {title} {\enquote {\bibinfo {title} {A note on the
  attainability of states by equalizing processes},}\ }\href {\doibase
  10.1007/BF00529057} {\bibfield  {journal} {\bibinfo  {journal} {Theoret.
  Chim. Acta}\ }\textbf {\bibinfo {volume} {68}},\ \bibinfo {pages} {363--377}
  (\bibinfo {year} {1985})}\BibitemShut {NoStop}%
\bibitem [{\citenamefont {Thon}\ and\ \citenamefont
  {Wallace}(2004)}]{Thon2004}%
  \BibitemOpen
  \bibfield  {author} {\bibinfo {author} {\bibfnamefont {D.}~\bibnamefont
  {Thon}}\ and\ \bibinfo {author} {\bibfnamefont {S.~W.}\ \bibnamefont
  {Wallace}},\ }\bibfield  {title} {\enquote {\bibinfo {title} {Dalton
  transfers, inequality and altruism},}\ }\href {\doibase
  10.1007/s00355-003-0226-x} {\bibfield  {journal} {\bibinfo  {journal} {Soc.
  Choice Welfare}\ }\textbf {\bibinfo {volume} {22}},\ \bibinfo {pages}
  {447--465} (\bibinfo {year} {2004})}\BibitemShut {NoStop}%
\bibitem [{\citenamefont {Streltsov}\ \emph
  {et~al.}(2017{\natexlab{b}})\citenamefont {Streltsov}, \citenamefont {Rana},\
  and\ \citenamefont {Eisert}}]{Streltsov2017}%
  \BibitemOpen
  \bibfield  {author} {\bibinfo {author} {\bibfnamefont {A.}~\bibnamefont
  {Streltsov}}, \bibinfo {author} {\bibfnamefont {P.}~\bibnamefont {Rana},
  \bibfnamefont {S.and~Boes}}, \ and\ \bibinfo {author} {\bibfnamefont
  {J.}~\bibnamefont {Eisert}},\ }\bibfield  {title} {\enquote {\bibinfo {title}
  {Structure of the {Resource} {Theory} of {Quantum} {Coherence}},}\ }\href
  {\doibase 10.1103/PhysRevLett.119.140402} {\bibfield  {journal} {\bibinfo
  {journal} {Phys. Rev. Lett.}\ }\textbf {\bibinfo {volume} {119}},\ \bibinfo
  {pages} {140402} (\bibinfo {year} {2017}{\natexlab{b}})}\BibitemShut
  {NoStop}%
\bibitem [{\citenamefont {Jabbour}\ \emph {et~al.}(2016)\citenamefont
  {Jabbour}, \citenamefont {Garc{\'{\i}}a-Patr{\'{o}}n},\ and\ \citenamefont
  {Cerf}}]{Jabbour2016}%
  \BibitemOpen
  \bibfield  {author} {\bibinfo {author} {\bibfnamefont {M.~G.}\ \bibnamefont
  {Jabbour}}, \bibinfo {author} {\bibfnamefont {R.}~\bibnamefont
  {Garc{\'{\i}}a-Patr{\'{o}}n}}, \ and\ \bibinfo {author} {\bibfnamefont
  {N.~J.}\ \bibnamefont {Cerf}},\ }\bibfield  {title} {\enquote {\bibinfo
  {title} {Majorization preservation of {Gaussian} bosonic channels},}\ }\href
  {\doibase 10.1088/1367-2630/18/7/073047} {\bibfield  {journal} {\bibinfo
  {journal} {New J. Phys.}\ }\textbf {\bibinfo {volume} {18}},\ \bibinfo
  {pages} {073047} (\bibinfo {year} {2016})}\BibitemShut {NoStop}%
\bibitem [{\citenamefont {Jabbour}\ and\ \citenamefont
  {Cerf}(2019)}]{Jabbour2019}%
  \BibitemOpen
  \bibfield  {author} {\bibinfo {author} {\bibfnamefont {M.~G.}\ \bibnamefont
  {Jabbour}}\ and\ \bibinfo {author} {\bibfnamefont {N.~J.}\ \bibnamefont
  {Cerf}},\ }\bibfield  {title} {\enquote {\bibinfo {title} {Fock majorization
  in bosonic quantum channels with a passive environment},}\ }\href {\doibase
  10.1088/1751-8121/aaf0d2} {\bibfield  {journal} {\bibinfo  {journal} {J.
  Phys. A: Math. Theor.}\ }\textbf {\bibinfo {volume} {52}},\ \bibinfo {pages}
  {105302} (\bibinfo {year} {2019})}\BibitemShut {NoStop}%
\bibitem [{\citenamefont {{Aberg}}()}]{Aberg2006}%
  \BibitemOpen
  \bibfield  {author} {\bibinfo {author} {\bibfnamefont {J.}~\bibnamefont
  {{Aberg}}},\ }\bibfield  {title} {\enquote {\bibinfo {title} {{Quantifying
  Superposition}},}\ }\href@noop {} {\ }\Eprint
  {http://arxiv.org/abs/quant-ph/0612146} {arXiv:quant-ph/0612146} \BibitemShut
  {NoStop}%
\end{thebibliography}%
\end{document}